\numberwithin{equation}{section}
\renewcommand{\div}{\del_{\Omega}}
\renewcommand{\op}{\operatorname}
\begin{document}

\title{A complex geometric perspective on the $a,c$ anomalies}
\author{Brian R. Williams}
\thanks{Boston University, Department of Mathematics and Statistics}
\email{bwill22@bu.edu}

\maketitle

In four-dimensional conformal field theory, the numbers $a$ and $c$ are defined as coefficients of particular terms in the operator product expansion (OPE) of the energy-momentum tensor.
They can be understood as conformal anomalies appearing in the trace of the energy-momentum tensor in a background gravitational field as in
\[
\<T_\mu^\mu\> = -\frac{a}{16 \pi^2} \left(\text{Euler}\right)  + \frac{c}{16 \pi^2} \left(\text{Weyl}\right)^2 
\]
where $(\text{Weyl})^2$ is the square of the Weyl curvature and $(\text{Euler})$ is the Euler density.
Generally, conformal (or Weyl) anomalies exist in any even dimension; a complete classification has appeared in \cite{Duff,Deser:1993yx}.
The quantities $a,c$ are informative invariants of a four-dimensional conformal field theory.
For example, reminiscent of Zamolodchikov's $c$-theorem in two-dimensions \cite{Zamolodchikov} is the famous ``$a$-theorem'' in four-dimensions proposed in \cite{Cardy:1988cwa} and proved in \cite{Komargodski:2011vj}.
This states that the function $a$ is monotonic along (unitary) $RG$ flow implying, in particular, that the IR value of $a$ is strictly less than its UV value.

The numbers $a,c$ are also well-defined for superconformal field theories in four dimensions, but turn out to be much simpler to characterize.
This is because the superconformal symmetry necessarily mixes spacetime symmetry with supersymmetry and hence $a,c$ admit descriptions in terms of supersymmetric invariants of the theory.
Concretely, as we will review below, the $a,c$ anomalies can be understood in terms of anomalies to $R$-symmetry.

The $R$-symmetry is also used to twist a supersymmetric theory \cite{WittenTwist,CostelloHol,ESW}.
Typically, a supersymmetric theory is defined on some geometric class of spin manifolds.
The $R$-symmetry allows one to re-label fields of a theory so as to make sense of the theory in other geometric contexts.
So, for a topological twist, one can use $R$-symmetry to define the theory on an arbitrary smooth manifold.
For a holomorphic twist \cite{CostelloHol}, one can use $R$-symmetry to define the theory on complex manifolds.
$R$-symmetry is not the only thing needed to twist a supersymmetric theory; additionally, one must choose the data of a square-zero supercharge.
Fields and operators in the twisted theory can be described in terms of the cohomology with respect to the chosen supercharge.
The twisting process, hence, often provides a significant simplification of the original supersymmetric model.
The holomorphic twist is especially relevant for this paper since we consider four-dimensional supersymmetric theories which do not, generally, admit topological twists.

In this paper, we give an interpretation of the $a,c$ anomalies in terms of complex geometric invariants of the \textit{holomorphic twist} of a four-dimensional supersymmetric quantum field theory.
The main result is that $a,c$ can be understood, in the twist, in terms of anomalies to holomorphic diffeomorphism symmetry.
Since twisting mixes $R$-symmetry with spacetime symmetry, we find that while this interpretation is not surprising, it does provide a satisfying complex geometric perspective for the coefficients which is also computationally effective via the Riemann--Roch theorem.

As a consequence of the holomorphic formulation of the $a,c$ anomalies one obtains a classification of complex two-dimensional analogs of the Virasoro algebra in conformal field theory.
The higher dimensional affine Kac--Moody Lie algebra was proposed by Faonte, Hennion, and Kapranov in \cite{FHK} and was connected to symmetries of holomorphic quantum field theory in \cite{GWkm}.
The key point is that it is absolutely necessary, in higher dimensions, to work in a derived setting.
Indeed, the $n$-dimensional Kac--Moody algebras are all obtained as central extensions of the dg Lie algebra of derived sections of the structure sheaf on the punctured $n$-disk with values in an ordinary Lie algebra $\lie{g}$.
Similarly, one can contemplate central extensions of the the dg Lie algebra of derived sections of the tangent sheaf of the punctured $n$-disk.
In complex dimension two, $a,c$ label all possible central extensions of $\Gamma(\Hat{D}^2 - 0, \T)$.
For an approach to characterizing central extensions of the derived tangent sheaf which uses factorization homology we refer to \cite{HKgf}.
While we don't pursue any algebraic structures in this paper, we refer to recent work of Bomans and Wu \cite{Bomans:2023mkd} for a characterization of the two-dimensional Virasoro algebra from the point of view of the holomorphic twist of four-dimensional supersymmetery.
We will return to the relationship between the $a,c$ anomalies and the higher dimensional Virasoro algebra from the point of view of factorization algebras in future work.

The key technical aspect of this paper is a formulation of the supersymmetric theory, its twist, and symmetries within the Batalin--Vilkovisky (BV) formalism.
The universal symmetry of a holomorphic field theory is introduced in section \ref{s:vf}.
In section \ref{s:bv}, we provide a background of anomalies in the BV formalism, specifically in the context of symmetries which largely follows Costello and Gwilliam's treatment \cite{CG2}. 
We turn to holomorphic theories in section \ref{s:hol}, and interpret the anomaly to quantizing holomorphic diffeomorphism symmetry as the Chern class of a line bundle over the universal complex manifold; thus affording the Riemann--Roch theorem as an effective method of computation.
In section \ref{s:rsymmetry} we turn to twists of supersymmetric theories and prove the precise relationship between anomalies to holomorphic diffeomorphism with $a,c$.
Finally, in section \ref{s:qcd} we elaborate on an example involving the twist of supersymmetric quantum chromodynamics (QCD).

\subsection*{Acknowledgements}

I thank Kevin Costello and Owen Gwilliam for their development of the quantum Noether theorem within the Batalin--Vilkovisky formalism as presented in their book \cite{CG2} which we rely heavily on throughout this paper.
I also am grateful to Ingmar Saberi for his collaboration on twists of four-dimensional $\cN=1$ superconformal theories which was the catalyst for this paper.
I thank Jingxiang Wu for feedback on a draft of this paper. 
Finally, I thank Boston University and the Center for Advanced Studies (CAS) at the Ludwig Maximilian University (LMU) in Munich, Germany for hospitality during the preparation of this work. 

\tableofcontents

\newpage

\section{A local model for holomorphic vector fields}
\label{s:vf}

Let $X$ be a complex manifold.
The $\dbar$-operator associated to a holomorphic vector bundle $V$ defines its Dolbeault complex $\Omega^{0,\bu}(X , V)$.
In degree $q$, the space $\Omega^{0,q}(X, V)$ denotes the space of smooth sections of the vector bundle $\Wedge^q \br \T_X^* \otimes V$ and the $\dbar$-operator extends to a differential $\dbar \colon \Omega^{0,q}(X,V) \to \Omega^{0,q+1}(X,V)$, $\dbar^2 = 0$.
%For any $U \subset X$, the complex $\Omega^{0,\bu}(U,V)$ is defined; in this way the Dolbeault complex is naturally a sheaf of cochain complexes.
%In fact, $\Omega^{0,\bu}(X,V)$ is an elliptic complex, see appendix \ref{a:local}.
By Dolbeault's theorem, the complex provides a resolution for the sheaf of holomorphic sections of the bundle $V$.

A holomorphic vector field is a holomorphic section of the holomorphic tangent bundle $\T_X$.
The sheaf of holomorphic vector fields has the natural structure of a sheaf of Lie algebras with bracket the Lie bracket of vector fields.
Since this bracket involves only holomorphic differential operators, it extends to a bracket on the Dolbeault complex of $\T_X$.
In local holomorphic coordinates the Lie bracket is
\beqn
[f_{I,i} (z,\zbar) \d \zbar_I \del_{z_i} , g_{J,j} (z,\zbar) \d \zbar_J \del_{z_j}] = f_{I,i} \del_{z_i} g_{J,j} \d \zbar_I \wedge \d \zbar_J \del_{z_j} - g_{J,j} \del_{z_j} f_{I,i} \d \zbar_J \wedge \d \zbar_I \del_{z_i}.
\eeqn
This bracket endows $\Omega^{0,\bu}(X, \T_X)$ with the structure of a dg Lie algebra where the differential is $\dbar$.
This leads to the following definition; for a recollection of terminology on local complexes and local Lie algebras we refer to \cite{CG1}.

\begin{dfn}
Let $\cT_X$ be the local dg Lie algebra whose underlying complex of vector bundles is
\beqn
\Omega^{0,\bu}(X, \T_X) .
\eeqn
The differential is the $\dbar$-operator associated to the holomorphic bundle $\T_X$.
The bracket is the Lie bracket of holomorphic vector fields extended to the Dolbeault complex as described above.

When $X = \C^n$ we will abusively refer to this local dg Lie algebra simply as $\cT$.
\end{dfn}

By Dolbeault's theorem, for $U \subset X$ a Stein open set there is a quasi-isomorphism $\cT_X(U) \simeq \op{Vect}^{hol}(U)$, the Lie algebra of holomorphic vector fields on $U$.

There is also the following relationship between $\cT$ and formal vector fields, which will be used subsequently.
Let \(E \to M\) denote a $\Z$-graded vector bundle on \(M\).
We consider the pro vector bundle of $\infty$-jets which we will denote by $j^\infty E$, see \cite{Anderson} or \cite[\S 5.6]{CostelloBook} for instance.
The sheaf of smooth sections of this pro vector bundle carries the natural structure of a $D_M$-module.
If $\cL = \Gamma(L)$ is a local $L_\infty$ algebra then $j^\infty L$ is a bundle of $L_\infty$ algebras.

\begin{lem}
The map
\beqn
j_0^\infty \colon \cT(\C^n) \to \lie{vect}(n) [[\zbar, \d \zbar]] 
\eeqn
induces a quasi-isomorphism of dg Lie algebras $j_0^\infty \cT (\C^n) \simeq \lie{vect}(n)$.
\end{lem}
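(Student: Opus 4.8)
The plan is to make the target $\lie{vect}(n)[[\zbar,\d\zbar]]$ of $j_0^\infty$ — which is by definition the dg Lie algebra $j_0^\infty\cT(\C^n)$ — completely explicit, reduce the statement to a formal $\dbar$-Poincar\'e lemma, and then identify the resulting cohomology with $\lie{vect}(n)$. Note first that $j_0^\infty$ is \emph{not} itself a quasi-isomorphism: by Dolbeault's theorem $\cT(\C^n)$ has $H^0$ the space of \emph{entire} holomorphic vector fields on $\C^n$, whereas the jet fiber sees only the formal ones, so the content of the lemma is really the computation of the cohomology of the jet fiber together with the Lie structure it carries.

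First I would unwind the definition of the jet fiber. The $\infty$-jet at $0\in\C^n$ of a smooth section of a vector bundle is its Taylor expansion in the real coordinates underlying $z_1,\dots,z_n$, repackaged as a formal power series in $z_1,\dots,z_n,\zbar_1,\dots,\zbar_n$ with values in the fiber at $0$. Applying this in each form degree to $\Omega^{0,q}(\C^n,\T)=\Gamma\big(\C^n,\ \Wedge^q\br\T^*\otimes\T\big)$ identifies $j_0^\infty\cT(\C^n)$, as a graded vector space, with
\[
\C[[z,\zbar]]\otimes\Wedge^\bu\langle\d\zbar_1,\dots,\d\zbar_n\rangle\otimes\langle\del_{z_1},\dots,\del_{z_n}\rangle\;=\;\lie{vect}(n)[[\zbar,\d\zbar]],
\]
the $\dbar$-differential becoming $\sum_i\d\zbar_i\wedge\del_{\zbar_i}$ and the bracket the Dolbeault-extended vector field bracket displayed above. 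The map $j_0^\infty$ is a chain map since $\dbar$ is a constant-coefficient differential operator, a map of Lie algebras since the bracket is assembled only from the holomorphic operators $\del_{z_i}$ acting on coefficient functions and these commute with passing to Taylor series, and surjective by Borel's lemma. So it remains to compute $H^\bu$ of $\lie{vect}(n)[[\zbar,\d\zbar]]$ with this differential.

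Next I would observe that $\dbar$ involves neither the holomorphic variables $z_i$ nor the generators $\del_{z_i}$, so the complex factors as a completed tensor product
\[
j_0^\infty\cT(\C^n)\;\cong\;\big(\C[[z]]\otimes\langle\del_{z_1},\dots,\del_{z_n}\rangle\big)\;\hat\otimes\;\big(\C[[\zbar]]\otimes\Wedge^\bu\langle\d\zbar_1,\dots,\d\zbar_n\rangle,\ \dbar\big),
\]
the first factor sitting in degree $0$ with zero differential. The second factor is the completed Koszul--de Rham complex of the ring $\C[[\zbar_1,\dots,\zbar_n]]$; by the formal Poincar\'e lemma its cohomology is $\C$, concentrated in degree $0$. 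The standard proof applies verbatim: with $E=\sum_i\zbar_i\del_{\zbar_i}$ and $\iota_E$ contraction with $E$, the operator $\dbar\iota_E+\iota_E\dbar$ acts on $\zbar^\alpha\d\zbar_I$ by the scalar $|\alpha|+|I|$, hence is invertible off $\C\cdot 1$, and dividing by that scalar yields a contracting homotopy onto the constants; since it preserves the filtration by total $\zbar$-degree it passes to the completion. Therefore $H^\bu\big(j_0^\infty\cT(\C^n)\big)=\C[[z]]\otimes\langle\del_{z_1},\dots,\del_{z_n}\rangle$, concentrated in degree $0$, which is precisely $\lie{vect}(n)$.

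Finally, to see that the equivalence is one of dg Lie algebras I would exhibit the quasi-isomorphism explicitly: the inclusion $\lie{vect}(n)\hookrightarrow\lie{vect}(n)[[\zbar,\d\zbar]]=j_0^\infty\cT(\C^n)$ of the elements with no $\zbar$- or $\d\zbar$-dependence is a map of dg Lie algebras (the bracket of two such elements is again of this form, and $\dbar$ annihilates them), and the computation just given shows this map is a quasi-isomorphism; that is exactly the assertion of the lemma, and no Lie-algebra quasi-inverse need be produced since a quasi-isomorphism of dg Lie algebras is already an equivalence. The one genuinely non-formal ingredient is the formal $\dbar$-Poincar\'e lemma on the pro-vector space $\C[[\zbar]]\otimes\Wedge^\bu\langle\d\zbar\rangle$, i.e.\ checking that the contracting homotopy makes sense on power series and not merely polynomials; the $\zbar$-degree filtration argument above is precisely what handles this, while the remaining points — that $j_0^\infty$ and the inclusion are maps of dg Lie algebras and that the complex factors as above — are routine manipulations with the displayed bracket formula.
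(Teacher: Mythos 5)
Your proof is correct. The paper states this lemma without proof, so there is nothing to compare against; your reading of the statement is the right one (the content is the computation $H^\bu\big(\lie{vect}(n)[[\zbar,\d\zbar]],\dbar\big)\cong\lie{vect}(n)$ for the jet fiber, not that $j_0^\infty$ itself is a quasi-isomorphism, which it is not), and the factorization into a $\zbar$-de Rham factor contracted by the Euler-vector-field homotopy, together with the observation that the bracket only differentiates in the $z_i$ directions and hence descends to jets, is exactly the standard argument the paper intends.
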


\section{Anomalies and descent}
\label{s:bv}

In this section we rely on the quantum Batalin--Vilkovisky formalism as developed in the series of books \cite{CG1,CG2}.

\subsection{Anomalies as local functionals}
\label{sec:anomalies1}

Let $E$ be a vector bundle on a smooth manifold $M$.
The $C^\infty_M$-module of \defterm{local functionals} on $M$ is
\beqn
\oloc(E) \define {\rm Dens}_X \otimes_{D_M} \prod_{n > 0} {\rm Hom}_{C^\infty_M} \left(\op{jet} (E) , C^\infty_M\right) ,
\eeqn
where $D_M$ is the algebra of differential operators and $\op{jet}(E)$ is the $\infty$-jet bundle of~$E$.
Concretely, if $M$ is equipped with a volume form, a section of $\oloc(E)$ is a linear combination of objects of the form
\[
(\phi_1,\ldots,\phi_n) \mapsto \left[ (D_{1} \phi_1 \cdots  D_{n} \phi_n)  \; \dvol_M\right]
\]
where $\phi_i$ denote sections of $E$, where $D_i$ are differential operators $\sE \to C^\infty(M)$, and where $[-]$ means equivalence classes up to total derivative.

Suppose that $(\cE,\omega, S)$ is a classical BV theory on a smooth manifold $M$.
This means, in particular:
\begin{itemize}
\item $\cE = \Gamma(M,E)$ is sections of a $\Z$-graded vector bundle $E \to M$.
\item $\omega \colon E \otimes E \to \op{Dens}_M [-1]$ is a density valued, non-degenerate, pairing of degree~$-1$.
\item $S \in \oloc(\cE)$ is a local functional of cohomological degree zero.
\end{itemize}

The BV anti-bracket induced from $\omega$ equips the complex $\oloc(\cE)[-1]$ with the structure of a graded Lie algebra.
The \textit{classical master equation} is the condition that $S$ be a Maurer--Cartan element $\{S,S\} = 0$.
Equivalently, we can decompose $\{S,-\} = Q + \{I,-\}$ where $S(\Phi) = \omega(\phi,Q\phi) + I(\phi)$.
Then the classical master equation is $Q I + \frac12 \{I,I\} = 0$.
When this equation holds, the operator $\{S,-\}$ is square-zero and endows
\beqn
\left(\oloc(\cE), \{S,-\}\right) ,
\eeqn
with the structure of a dg Lie algebra. 

Our goal in the remainder of the section is recall the various versions of symmetry that a theory in the BV formalism can exhibit.
To that end, suppose that $\lie{g}$ is an arbitrary dg Lie algebra.
Then we can define its Chevalley--Eilenberg complex $C^\bu(\lie{g})$ which is a commutative dg algebra.
Furthermore, we can tensor with a shift of local functionals
\beqn
C^\bu(\lie{g}) \otimes \oloc(\cE)[-1]
\eeqn
to obtain a dg Lie algebra.
The subcomplex 
\beqn\label{eqn:red}
C^\bu_{red}(\lie{g}) \otimes \oloc(\cE)[-1]
\eeqn
is a sub dg Lie algebra.
To write this in a slightly more symmetric way, we recall that $S$ equips $\cE[-1]$ with the structure of a local $L_\infty$ algebra such that $\{S,-\}$ is precisely the Chevalley--Eilenberg differential.
Thus we can write this sub dg Lie algebra as
\beqn\label{eqn:red}
C^\bu_{red}(\lie{g}) \otimes \cloc^\bu(\cE[-1])[-1] .
\eeqn

Let $\cL$ be a local dg Lie algebra (our main example is the local dg Lie algebra $\cT_X$ from the previous section).
The Chevalley--Eilenberg differential equips the $C^\infty_M$-module of local functionals $\oloc(\cL[1])$ on the shift of $\cL$ with the structure of a cochain complex which we denote $\cloc^\bu(\cL)$
There is a cochain embedding
\beqn
\cloc^\bu(\cL)(M) \subset C_{red}^\bu(\cL(M)) 
\eeqn
from local cochains (by our conventions, local cochains are always reduced) to ordinary Chevalley--Eilenberg cochains of the dg Lie algebra of global sections $\cL(M)$.

Furthermore, the Lie bracket in \eqref{eqn:red} restricts to a Lie bracket on the subcomplex
\beqn
\cloc^\bu(\cL \oplus \cE[-1])[-1] \subset C^\bu_{red}(\lie{g}) \otimes \cloc^\bu(\cE[-1])[-1] ,
\eeqn
thus equipping the left hand side with the structure of a dg Lie algebra.

The embedding of local dg Lie algebras $\cE[-1] \to \cL \oplus \cE[-1]$ induces a map of dg Lie algebras
\beqn
\cloc^\bu(\cL \oplus \cE[-1])[-1] \to \cloc^\bu(\cE[-1]) [-1] = \oloc(\cE)[-1] .
\eeqn
Let $\op{InnerAct}(\cL,\cE)$ denote the kernel of this map.
The map of local dg Lie algebras $\cL \oplus \cE[-1] \to \cL$ induces an inclusion of dg Lie algebras
\beqn
\cloc^\bu(\cL) \to \cloc^\bu(\cL \oplus \cE[-1])[-1] .
\eeqn
This map factors through the dg Lie algebra $\op{InnerAct}(\cL,\cE) \subset \cloc^\bu(\cL \oplus \cE[-1])[-1]$.
We let 
\beqn
\op{Act}(\cL,\cE) \define \op{InnerAct}(\cL,\cE) \slash \cloc^\bu(\cL) 
\eeqn
be the quotient dg Lie algebra.

As cochain complexes there are isomorphisms
\begin{align*}
\op{Act}(\cL,\cE) & \simeq \cloc^\bu(\cL \oplus \cE[-1])[-1] \slash \left(\cloc^\bu(\cL)[-1] \oplus \cloc^\bu(\cE[-1])[-1] \right) \\
\op{InnerAct}(\cL,\cE) & \simeq \cloc^\bu(\cL \oplus \cE[-1])[-1] \slash \cloc^\bu(\cE[-1])[-1] .
\end{align*}
Thus, an element of $\op{Act}(\cL,\cE)$ is a local functional on the complex $\cL[1] \oplus \cE$ which does not depend solely on $\cL[1]$ or $\cE$.
Likewise, an element of $\op{InnerAct}(\cL,\cE)$ is a local functional which does not depend solely on $\cE$.

\begin{dfn}
Let $(\cE,\omega, S)$ be a classical BV theory and $\cL$ a local Lie algebra.
A \defterm{classical $\cL$-background} is a Maurer--Cartan element
\beqn
S^{\cL} \in \text{Act}(\cL,\cE)^1 .
\eeqn
Unwrapping the definitions above, this means that $\cS^{\cL}$ is a local functional on $\cL[1] \oplus \cE$ of cohomological degree zero which satisfies the equation
\beqn\label{eqn:equivariantmaster}
\d_{\cL} S^{\cL} + \{S, S^{\cL}\} + \frac12 \{S^{\cL},S^{\cL}\} = 0 ,
\eeqn
where $\d_{\cL}$ is the Chevalley--Eilenberg differential for $\cL$.
\end{dfn}

The equation \eqref{eqn:equivariantmaster} is referred to in \cite{CG2} as the $\cL$-\textit{equivariant classical master equation.}

In \cite{CostelloBook} the problem of perturbative quantization of classical theories within the BV formalism is put within an obstruction-deformation theoretic framework.
There are two main steps to construct a quantization.
The first is to construct a renormalized, or effective action, $\{I[L]\}$ which is a family of action functionals depending on a scale parameter $L$ whose $L \to 0$ limit agrees with the classical interaction $I$ (among other conditions).
Additionally, this effective family is required to satisfy the \textit{quantum master equation}
\beqn\label{eqn:qme}
(Q + \hbar \triangle_L) e^{I[L]/\hbar}  = 0 ,
\eeqn
where $\triangle_L$ is the effective BV Laplacian.

Starting with a classical BV theory $(\cE,\omega,S = S_{kin} + I)$, the problem of quantization order by order in $\hbar$ is controlled by the dg Lie algebra $\oloc(\cE)[-1]$.
While there are possibly many inequivalent ways to quantize a theory, even at one-loop, there is a scheme-dependent obstruction to the one-loop quantization of a classical BV theory which is measured by a degree $+1$ cohomology class
\beqn
\Theta_{1-loop} \in H^1_{loc}(\cE) = H^2 \left(\oloc(\cE)[-1], \{S,-\}\right) ,
\eeqn
obtained as the $L \to 0$ limit of the quantum master equation (in other words, it is the failure for an effective action to solve the master equation).
Similarly, if a quantization exists and is fixed to order $\hbar^{n-1}$ there is an obstruction $\Theta_{n-loop}\in H^1_{loc}(\cE)$ to extending this to a quantization to order $\hbar^n$.

Now we turn to the question of the equivariant quantization; more specifically, the quantization of $\cL$-backgrounds as defined above.
Here is the summary of the definition, we refer to \cite{CG2}[Definition 13.2.2.1] for the full definition.

\begin{dfn}
Suppose $\{I[L]\}$ is an effective family of interactions describing a quantum field theory (satisfies, amongst other conditions, the ordinary quantum master equation \eqref{eqn:qme}).
A \defterm{quantum $\cL$-background} is a family of effective actions
\beqn
I^{\cL}[L] \in \cO(\cL[1] \oplus \cE) \slash \cO(\cL[1]) [[\hbar]]
\eeqn
which agrees with $I[L]$ modulo functionals of $\cL[1]$ and
satisfies the following equivariant master equation
\beqn
\d_\cL I^{\cL}[L] + Q I^{\cL}[L] + \frac12 \{I^{\cL}[L],I^{\cL}[L]\}_L + \hbar \triangle_L I^{\cL}[L] = 0 
\eeqn
in $\cO(\cL[1] \oplus \cE) \slash \cO(\cL[1]) [[\hbar]]$.

Likewise, an \defterm{inner} quantum $\cL$-background is an effective family in a bigger class of functionals 
\beqn
I^{\cL}[L] \in \cO(\cL[1] \oplus \cE) [[\hbar]]
\eeqn
satisfying the same quantum master equation.
\end{dfn}

In the equivariant context the dg Lie algebras $\op{Act}(\cL,\cE)$ and $\op{InnerAct}(\cL,\cE)$ control different equivariant quantization situations, analogously to how the dg Lie algebra $\oloc(\cE)[-1]$ controls the non-equivariant situation.
For simplicity, we consider the problem of quantization to one-loop.
Throughout, we assume that a fixed quantization of the classical BV theory $(\cE,\omega,Q)$ is fixed at one-loop (in particular the one-loop obstruction $\Theta_{1-loop}$ vanishes).
The complex which controls quantizations of a classical $\cL$-background while leaving this quantization of the BV theory fixed is precisely~$\op{Act}(\cL,\cE)$.

\begin{prop}
Let $S^{\cL} \in \op{Act}(\cL,\cE)$ be a classical $\cL$-background and suppose $S_{1-loop} = S + \hbar S^{(1)}$ is a one-loop quantization of a classical BV theory $(\cE,\omega, S)$.
The obstruction to lifting this to a one-loop $\cL$-background is an element
\beqn
\Theta^{\cL,\cE} \in H^2 \left(\op{Act}(\cL,\cE) \right) ,
\eeqn
which, in this situation, lifts to an element
\beqn
\Theta^{\cL,\cE} \in H^1_{loc}\left(\cL \ltimes \cE[-1] \; | \; \cL\right) .
\eeqn

Similarly, the obstruction to lifting this to a one-loop inner $\cL$-background is an element
\beqn
\Theta^{\cL} \in H^2 \left(\op{InnerAct}(\cL,\cE) \right) 
\eeqn
which, in this situation, lifts to an element 
\beqn
\Theta^{\cL} \in H_{loc}^1 (\cL) = H^2 \left(\cloc^\bu(\cL)[-1] \right) .
\eeqn
\end{prop}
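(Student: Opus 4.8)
I would note first that the assertions that $\Theta^{\cL,\cE}$ and $\Theta^{\cL}$ are degree-two classes are an instance of the general obstruction--deformation formalism of \cite{CostelloBook,CG2}. As recorded above, $\op{Act}(\cL,\cE)$ is the dg Lie algebra controlling quantizations of the classical $\cL$-background which leave the chosen one-loop quantization of $(\cE,\omega,S)$ fixed, and $\op{InnerAct}(\cL,\cE)$ controls the inner version; for any dg Lie algebra controlling such a problem, the obstruction to promoting a solution of the (equivariant) quantum master equation from order $\hbar^{0}$ to order $\hbar^{1}$ is a degree-one cocycle, i.e. lives in $H^{2}$. So once the cocycles are written down the first half of each statement is automatic, and the real content is the two refinements.

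To produce the cocycle I will flow the classical $\cL$-background $S^{\cL}$ under the homotopy renormalization group to a scale-$L$ family $I^{\cL}_{(0)}[L]$ solving the $\cL$-equivariant quantum master equation modulo $\hbar$, which uses only that $S^{\cL}$ satisfies \eqref{eqn:equivariantmaster}. Then I set $I^{\cL}[L]=I^{\cL}_{(0)}[L]+\hbar\,I^{\cL}_{(1)}[L]$ with $I^{\cL}_{(1)}[L]$ an arbitrary extension, to a functional on $\cL[1]\oplus\cE$, of the scale-$L$ version of the fixed one-loop interaction $S^{(1)}$. The order-$\hbar^{2}$ component of
\[
\d_{\cL}I^{\cL}[L]+QI^{\cL}[L]+\tfrac12\{I^{\cL}[L],I^{\cL}[L]\}_{L}+\hbar\triangle_{L}I^{\cL}[L]
\]
is a functional $\Theta^{\cL,\cE}[L]$; applying this operator to the equivariant master equation (valid to order $\hbar^{1}$) and using the graded Jacobi identity for $\{-,-\}_{L}$ together with the compatibility of $\triangle_{L}$ with the renormalization group flow shows that $\Theta^{\cL,\cE}[L]$ is closed, that its $L\to0$ limit is local, and that its class is independent of $I^{\cL}_{(1)}[L]$ and of $L$. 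Running the same construction in $\cO(\cL[1]\oplus\cE)[[\hbar]]$, without discarding functionals of $\cL[1]$ alone, produces $\Theta^{\cL}[L]$.

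For the refinements I will use that a quantum $\cL$-background reduces to the fixed quantization $I[L]$ upon setting the $\cL[1]$-fields to zero, so that $I^{\cL}[L]=I[L]+A[L]$ with $A[L]$ supported on terms carrying at least one $\cL[1]$-input. Substituting this and using that $\d_{\cL}$ annihilates functionals of $\cE$ alone while $I[L]$ already solves the ordinary quantum master equation through order $\hbar^{1}$, every term contributing to $\Theta^{\cL,\cE}[L]$ carries an $\cL[1]$-input; in particular its $\cE$-only component vanishes. Hence $\Theta^{\cL,\cE}[L]$ is in fact a cocycle in the relative local complex computing $H^{\bu}_{loc}(\cL\ltimes\cE[-1]\;|\;\cL)$, which surjects onto $\op{Act}(\cL,\cE)$, giving the lift $\Theta^{\cL,\cE}\in H^{1}_{loc}(\cL\ltimes\cE[-1]\;|\;\cL)$. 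The same bookkeeping kills the $\cE$-only part of $\Theta^{\cL}[L]$; to remove the remaining part that still involves $\cE$-fields I will invoke the quantum Noether theorem of \cite{CG2}, which says a classical $\cL$-symmetry quantizes projectively, so that the mixed piece of the obstruction can be absorbed into a redefinition of the mixed part of $I^{\cL}[L]$, leaving an anomaly represented by a local functional of $\cL$ alone --- the cocycle defining a central extension of $\cL$ --- hence a class $\Theta^{\cL}\in H^{1}_{loc}(\cL)=H^{2}(\cloc^{\bu}(\cL)[-1])$.

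The hard part is the simultaneous verification of the cocycle identity and of the existence and locality of the $L\to0$ limit: this is where one genuinely uses the effective Batalin--Vilkovisky apparatus --- the homotopy renormalization group operators and the way $\triangle_{L}$ intertwines with them --- rather than purely homological input, and it is this reduction that, for $\cL=\cT_{X}$, makes the obstruction computable by Riemann--Roch as advertised in the introduction. A subsidiary point requiring care is that $\d_{\cL}$ is the Chevalley--Eilenberg differential of $\cL$ itself, not that of the semidirect product $\cL\ltimes\cE[-1]$, so that the filtration of $\cloc^{\bu}(\cL\ltimes\cE[-1])$ by number of $\cL[1]$-inputs is compatible with the differentials on $\op{Act}(\cL,\cE)$, $\op{InnerAct}(\cL,\cE)$, and the relative complex; the vanishing of the $\cE$-only component of the obstruction, which underlies both lifts, is exactly where the assumption of a fixed one-loop quantization of the underlying theory enters.
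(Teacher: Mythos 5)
The paper states this proposition without proof: it is presented as a recollection of the obstruction--deformation framework of \cite{CostelloBook,CG2} (the definitions are imported from [CG2, Definition 13.2.2.1]), so there is no in-paper argument to compare against. Your sketch is a faithful reconstruction of that framework's argument and I see no gap at the level of detail given: the obstruction arises as the order-$\hbar$ failure of the scale-$L$ equivariant quantum master equation, the $L\to 0$ limit is local by the general obstruction theorem, and both lifts hinge on the vanishing of the $\cE$-only component of the anomaly, which is exactly what fixing a one-loop quantization of $(\cE,\omega,S)$ buys. The one step I would tighten is the inner case: invoking the quantum Noether theorem to absorb the mixed piece is close to circular, since that theorem is essentially the statement being established; the cleaner route is the long exact sequence attached to $0 \to \cloc^\bu(\cL)[-1] \to \op{InnerAct}(\cL,\cE) \to \op{Act}(\cL,\cE) \to 0$, which identifies the kernel of $H^2(\op{InnerAct}(\cL,\cE)) \to H^2(\op{Act}(\cL,\cE))$ with the image of $H^1_{loc}(\cL)$, the vanishing of the image of $\Theta^{\cL}$ in $H^2(\op{Act}(\cL,\cE))$ being what ``in this situation'' implicitly supplies.
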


\subsection{Central charges}

The local cohomology classes of cohomological degree one deserve a special name which takes the form of the following terminological definition.

\begin{dfn}
The \defterm{space of central charges} for a local Lie algebra $\cL$ is the vector space 
\beqn
H_{loc}^1 (\cL) .
\eeqn
In particular, an obstruction to a quantum inner $\cL$-background determines a central charge for $\cL$.
\end{dfn}

The motivation for this definition is the following.
Suppose that our local Lie algebra $\cL = \Gamma(L)$ is defined on $\R^d$.
Typically, a \textit{current} is a $(d-1)$-form valued in Lagrangian densities.
More generally, for $\cL$ a local Lie algebra, we obtain the space of currents valued in $\cL_c$ by evaluating on the codimension one sphere $S^{d-1}$.
Here $\cL_c = \Gamma_c(L)$ is the cosheaf of compactly supported sections of the underlying vector bundle $L$.
Some care must be taken in this definition, since strictly it only makes sense to evaluate $\cL$ on open subsets; thus one should consider the value of $\cL$ on open sets which are infinitesimal neighborhoods of the $(d-1)$-sphere.
In nice situations, which we will turn to momentarily, the value of $\cL$ on a neighborhood of the form $(-\epsilon,\epsilon) \times S^{d-1}$ can be written as
\beqn\label{eqn:splitting}
\Omega^\bu_c (- \epsilon, \epsilon) \otimes \lie{g}_{S^{d-1}}
\eeqn
where $\lie{g}_{S^{d-1}}$ is some dg Lie (or more generally $L_\infty$) algebra.
We call $\lie{g}_{S^{d-1}}$ the Lie algebra of $S^{d-1}$-currents associated to $\cL$.

Now suppose that $\phi \in C^{1}_{loc}(\cL)$ is a local cocycle of cohomological degree $+1$.
Assume its value on $(-\epsilon,\epsilon) \times S^{d-1}$ can be expressed as $\int_{-\epsilon}^{\epsilon} \boxtimes \psi$ where
\begin{itemize}
\item $\int_{-\epsilon}^{\epsilon} \colon \Omega^\bu_c (- \epsilon, \epsilon) \to \C[-1]$ is the (degree one) integration map along the interval, and
\item $\psi$ is a degree two cocycle for the Lie algebra $\lie{g}_{S^{d-1}}$.
Automatically, $\psi$ is of cohomological degree two since integration along the interval is of cohomological degree one.
\end{itemize}
In this situation, $\psi$ determines a central extension 
\beqn
0 \to \C \to \Hat{\lie{g}}_{S^{d-1}} \to \lie{g}_{S^{d-1}} \to 0 ,
\eeqn
and we interpret $\lie{g}_{S^{d-1}}$ as the Lie algebra of centrally extended currents associated to $\cL$ and $\phi$.

We consider an example, studied in \cite{GWkm}.
Let $\lie{g}$ be an ordinary Lie algebra and consider the following local Lie algebra defined on $\C^n = \R^{2n}$
\beqn
\cL = \lie{g} \otimes \Omega^{0,\bu}(\C^n) .
\eeqn
Take the neighborhood $\C^n - 0 = \R \times S^{2n-1}$ of the sphere.
In this situation, there is a dense embedding of $\lie{g}$-valued derived algebraic sections of the structure sheaf into the spherical currents
\beqn
\lie{g} \otimes \R \Gamma(D^n - 0, \cO) \hookrightarrow \lie{g}_{S^{2n-1}} .
\eeqn
Any invariant and symmetric homogenous polynomial $\theta \in \Sym^{n+1}(\lie{g})^{G}$ determines a local cocycle for $\lie{g} \otimes \Omega^{0,\bu}(\C^n)$ defined by
\beqn
\phi(\alpha_0,\ldots,\alpha_n) = \int_{\C^n} \theta ( \alpha_0 \del \alpha_1\cdots \del \alpha_n ) .
\eeqn
The corresponding cocycle $\psi$ is precisely the higher dimensional Kac--Moody cocycle \cite{FHK}:
\beqn
\psi(f_0,\ldots,f_n) = \oint_{S^{2n-1}} \theta ( \alpha_0 \del \alpha_1\cdots \del \alpha_n ) 
\eeqn
where $f_0,\ldots,f_n \in \lie{g} \otimes \R \Gamma(D^n - 0, \cO)\subset \lie{g}_{S^{2n-1}}$.
Thus, every choice of ``level'' $\theta$ defines a central extension of the spherical currents.
In \cite{GWkm} we prove that such $\theta$ label all such central charges for this local Lie algebra which are $GL(n)$ and translation invariant.

Similarly, one could start with the local Lie algebra $\cT$ resolving holomorphic vector fields on $\C^n$.
We will explain a characterization of central charges in this case in the next section.
We will not, however, return to the resulting higher dimensional Virasoro algebras such central charges determine.
We refer to \cite{Bomans:2023mkd} for results in the complex two-dimensional setting.

\subsection{Diffeomorphism anomalies and Gelfand--Fuks cohomology}

We turn our attention to the holomorphic version of diffeomorphism, or gravitational, anomalies.
As recollected, the underlying local Lie algebra describing infinitesimal holomorphic diffeomorphisms is $\cT$, which resolves the Lie algebra of holomorphic vector fields.

The following result characterizes, in any dimension, the local anomalies to holomorphic reparametrization invariance.

\begin{thm}
\cite{BWgf}
\label{thm:bwgf}
On flat space $X = \C^n$ there is a quasi-isomorphism
\beqn
\cloc^\bu \left(\cT ( \C^n ) \right) \simeq C^\bu_{red}(\lie{vect}(n)) [2n] .
\eeqn
The space of central charges (on flat space) for $\cT$ is
\beqn
H_{loc}^1 (\cL (\C^n)) \simeq H^{2n+1}(\lie{vect}(n)) .
\eeqn
\end{thm}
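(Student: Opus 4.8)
The plan is to reduce the statement about local cochains on $\C^n$ to a computation of Gelfand--Fuks cohomology of the Lie algebra $\lie{vect}(n)$ of formal vector fields, via the $\infty$-jet description of local functionals. First I would invoke the general principle, as in \cite{CostelloBook,CG1}, that for a local Lie algebra $\cL = \Gamma(L)$ on $M$ the complex of local cochains $\cloc^\bu(\cL)$ is computed by the de Rham complex of $M$ with coefficients in the $D_M$-module $C^\bu_{red}(j^\infty L)$ of reduced Chevalley--Eilenberg cochains of the $\infty$-jet bundle; that is,
\beqn
\cloc^\bu(\cL)(M) \simeq \Omega^\bu_M \otimes_{D_M} C^\bu_{red}(j^\infty L) .
\eeqn
When $M = \C^n$ this is a flat-space computation, and the de Rham complex $\Omega^\bu_{\C^n}$ can be replaced by its cohomology $\C$ sitting in degree zero (using translation-invariance, or that $\C^n$ is contractible, to control the $D_M$-module homotopy), but keeping careful track of the shift coming from the top-degree density: this is the origin of the $[2n]$ (a shift by the real dimension, since $\op{Dens}_{\C^n}$ contributes the volume form in de Rham degree $2n$).

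Next I would identify the fiber $j^\infty_0 L$ with the relevant formal object. By the Lemma in Section \ref{s:vf}, $j^\infty_0 \cT(\C^n) \simeq \lie{vect}(n)$ as dg Lie algebras (the $\zbar, \d\zbar$ directions are killed by a Poincar\'e-lemma-type contraction onto the holomorphic jets at the origin). Therefore
\beqn
C^\bu_{red}(j^\infty_0 \cT(\C^n)) \simeq C^\bu_{red}(\lie{vect}(n)) ,
\eeqn
and combining with the previous step gives the quasi-isomorphism $\cloc^\bu(\cT(\C^n)) \simeq C^\bu_{red}(\lie{vect}(n))[2n]$. The second assertion then follows immediately by taking $H^1$ of both sides: by definition the space of central charges is $H^1_{loc}(\cT(\C^n)) = H^1(\cloc^\bu(\cT(\C^n)))$, and the shift by $2n$ turns this into $H^{2n+1}$ of reduced Chevalley--Eilenberg cohomology of $\lie{vect}(n)$, which is by definition the Gelfand--Fuks cohomology $H^{2n+1}(\lie{vect}(n))$ (reduced and unreduced cohomology agree in positive degrees).

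The main obstacle is the first step: justifying that the $D_{\C^n}$-module resolution collapses so that only the constant de Rham cohomology of $\C^n$ contributes, and tracking the degree shift correctly. One must argue that $\Omega^\bu_{\C^n} \otimes_{D_{\C^n}} C^\bu_{red}(j^\infty \cT)$ is quasi-isomorphic to the fiber at a point tensored with $H^\bu_{dR}(\C^n)[2n] = \C[2n]$; this requires knowing that $C^\bu_{red}(j^\infty \cT)$, as a $D$-module, is flat (or at least that the relevant Tor/spectral sequence degenerates), together with a homotopy-invariance statement for the de Rham complex with coefficients in such a $D$-module. Since this is precisely the content of \cite{BWgf}, which the theorem cites, I would present the above as the strategy and defer the technical $D$-module input to that reference, emphasizing that the conceptual content is (i) the jet-level identification with $\lie{vect}(n)$ from Section \ref{s:vf} and (ii) the dimensional shift by the density bundle.
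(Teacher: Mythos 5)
Your outline matches the paper's treatment: the theorem is imported from \cite{BWgf}, and the strategy there is exactly the one you describe --- resolve $\op{Dens}_{\C^n}$ over $D_{\C^n}$ by the shifted de Rham complex so that $\cloc^\bu(\cT(\C^n))\simeq \Omega^\bu_{\C^n}[2n]\otimes_{C^\infty}C^\bu_{red}(j^\infty \T)$, collapse to the fiber at the origin using translation invariance, and invoke the quasi-isomorphism $j^\infty_0\cT(\C^n)\simeq\lie{vect}(n)$ from Section \ref{s:vf}. The $[2n]$ shift from the density and the resulting identification $H^1_{loc}(\cT(\C^n))\simeq H^{2n+1}(\lie{vect}(n))$ follow as you say, and the one genuinely technical point you flag (that the $D$-module de Rham complex really does reduce to the fiber) is precisely what is carried out in \cite{BWgf}, so deferring it there is appropriate.
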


Consider the one-dimensional case $n=1$.
Then the space of central charges for holomorphic vector fields is $H^3(\lie{vect}_1)$, which is one-dimensional.
By transgression this space is naturally isomorphic to the cohomology of the holomorphic Witt algebra $\lie{witt} = \C((z)) \del_z$, which is the Lie algebra of vector fields on the punctured formal disk.
The unique (up to scale) central charge for the local Lie algebra $\cT(\C)$ corresponds to the usual central charge in conformal field theory.
As a local cocycle for $\cT (\C)$, a local representative for this central charge is
\beqn
\phi(\mu, \mu') = \int_\C J \mu \wedge \del (J \mu') .
\eeqn
Of course, in coordinates $\mu = f(z,\zbar) \del_z$, where $f(z,\zbar)$ is a Dolbeault form of type $(0,\bu)$, this recovers the familiar expression for the Virasoro cocycle.

In the notation of the previous section, the space of spherical currents in this example receives a dense embedding of the Lie algebra of algebraic vector fields on the punctured disk $\C[z,z^{-1}] \del_z \hookrightarrow \lie{g}_{S^1}$.
The corresponding cocycle for $\C[z,z^{-1}]\del_z$, denoted $\psi$ above, is the standard Virasoro cocycle.

We comment that the result above can be generalized.
In \cite{BWgf}, see also \cite{BWthesis}, it is shown that for \textit{any} complex manifold $X$ of complex dimension $n$ that there is an isomorphism
\beqn
H_{loc}^\bu(\cT(X)) \simeq H^\bu_{dR}(X) \otimes H^\bu (\lie{vect}(n)) [2n] .
\eeqn
When $X = \C^n$ we recover the result as stated above.

\subsection{The holomorphic $a,c$ anomalies}

In the last section we have seen how the space of central charges for the local Lie algebra of holomorphic vector fields can be expressed in terms of the Gelfand--Fuks cohomology of formal vector fields.
In this section we recall an elegant description of this cohomology, following \cite{Fuks} and match with explicit representatives of the corresponding local cocycles in complex dimension two.

Let $\op{Gr}(n,k)$ be the complex Grassmannian of $n$-planes in $\C^k$. 
Denote by $\op{Gr}(n,\infty)$ the colimit of the natural sequence of topological spaces
\beqn
\{\star\} \to {\rm Gr}(n, n+1) \to \op{Gr}(n,n+2) \to \cdots . 
\eeqn 
It is a standard fact that ${\rm Gr}(n, \infty)$ is a model for the classifying space $BGL(n)$ of rank $n$ vector bundles.
%Let $EU(n) \to BU(n)$ be the universal principal $U(n)$-bundle. 
From the colimit description, this results in the skeletal filtration of the classifying space:
\beqn
{\rm sk}_{k} BGL(n) = {\rm Gr}(n,k) .
\eeqn 
Let $X_n$ denote the restriction of the universal bundle $EGL(n)$ over the $2n$-skeleton:
\beqn
\begin{tikzcd}
X_n \ar[r] \ar[d] & E GL(n) \ar[d] \\
{\rm sk}_{2n} B GL(n) \ar[r] & B GL(n) .
\end{tikzcd}
\eeqn

%\begin{rmk}
%Though not the way the Gelfand and Fuks originally proved the result, one can use the computation of the cohomology of $\W_d$ with coefficients in $\hOmega^k_d$ together with the spectral sequence (\ref{ss1}) to prove this description of $H^*(\W_d)$. 
%Indeed, the spectral sequence (\ref{ss1}) is isomorphic, up to regradings, to the Serre spectral sequence for the principal $\U(d)$-bundle $X_d \to {\rm sk}_{2d} B \U(d)$. 
%In other words, the formal de Rham differential on $\hOmega^*_d$ is exactly the $E_2$ differential for the Serre spectral sequence. 
%\end{rmk}

\begin{thm}[\cite{Fuks} Theorem 2.2.4] 
\label{thm:fuks}
There is an isomorphism of graded commutative algebras
\beqn
H^\bu(\lie{vect}(n)) \cong H^\bu_{dR} (X_n) .
\eeqn
The commutative product is the trivial one.
\end{thm}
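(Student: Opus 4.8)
The plan is to follow the Gelfand--Fuks strategy, computing the (continuous) Chevalley--Eilenberg cohomology of $W_n := \lie{vect}(n)$ in three steps: first reduce to cohomology relative to the reductive subalgebra of linear vector fields, then compute that relative cohomology as a truncated Weil algebra, and finally reassemble the answer into the de~Rham cohomology of $X_n$. For the first step, observe that the linear vector fields form a copy of $\lie{gl}_n \subset W_n$ which is reductive, and that its adjoint action on $C^\bu(W_n)$ is the differential of the algebraic $GL(n,\C)$-action by linear changes of coordinates (equivalently, up to homotopy, of the compact group $U(n)$). Consequently the Hochschild--Serre spectral sequence of the pair $(W_n, \lie{gl}_n)$ behaves exactly as the Serre spectral sequence of a principal $U(n)$-bundle: using $H^\bu(\lie{gl}_n) = \Lambda(\alpha_1, \alpha_3, \ldots, \alpha_{2n-1})$ with $\deg \alpha_{2i-1} = 2i-1$ (this is $H^\bu(U(n))$), together with the fact that each primitive generator $\alpha_{2i-1}$ transgresses to (the image of) the $i$-th Chern--Weil invariant, one identifies $H^\bu(W_n)$ with the cohomology of the Weil-type complex
\[
\left( H^\bu(W_n, \lie{gl}_n) \otimes \Lambda(\alpha_1, \ldots, \alpha_{2n-1}), \ d\alpha_{2i-1} = c_i \right),
\]
where $c_i \in H^{2i}(W_n, \lie{gl}_n)$ is the class of the $i$-th invariant polynomial in the curvature.

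The second step is to compute the relative cohomology $H^\bu(W_n, \lie{gl}_n)$, and this is where the genuinely infinite-dimensional input enters --- I expect it to be the main obstacle. Relative cochains are the $\lie{gl}_n$-basic cochains on $W_n$; filtering $W_n$ by order of vanishing at the origin (so that the associated graded sees $\lie{gl}_n$ acting on its polynomial representations) and running the resulting spectral sequence, one should find that $H^\bu(W_n, \lie{gl}_n)$ is the cohomology of the \emph{truncated} Weil algebra of $\lie{gl}_n$ --- concretely, $\R[c_1, \ldots, c_n]$ modulo the monomials of total degree $> 2n$, which is exactly $H^\bu_{dR}(\mathrm{sk}_{2n} BGL(n))$. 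The heart of the matter is the vanishing statement that a $GL(n)$-invariant polynomial in the curvature of degree $> 2n$ becomes a coboundary in the relative Chevalley--Eilenberg complex of formal vector fields on the $n$-disk; morally there is no room for it, since an $n$-dimensional formal space carries no closed invariant forms of degree exceeding $2n$. Everything surrounding this finiteness theorem is formal homological algebra.

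For the third step, I would recognize the complex displayed above --- with $H^\bu(W_n, \lie{gl}_n) = H^\bu_{dR}(\mathrm{sk}_{2n}BGL(n))$ and the twisted differential $d\alpha_{2i-1} = c_i$ --- as a commutative dg model for the Serre spectral sequence of the principal $GL(n)$-bundle obtained by restricting the universal bundle $EGL(n) \to BGL(n)$ to the $2n$-skeleton, namely the bundle $X_n \to \mathrm{sk}_{2n}BGL(n)$; the transgressions agree on the two sides by naturality of Chern--Weil theory. This gives the asserted isomorphism of graded algebras $H^\bu(W_n) \cong H^\bu_{dR}(X_n)$. It remains to see that the product is trivial: in the model above every class of positive degree is represented by a cocycle involving at least one of the generators $\alpha_{2i-1}$, because a pure polynomial class $P$ in the $c_i$ is exact, $c_i P = \pm d(\alpha_{2i-1} P)$; a short bookkeeping argument using the degree-$2n$ truncation then shows that the product of any two such cocycles lies in the image of $d$ (or vanishes because its total degree exceeds that carried by $X_n$), so that all products of positive-degree classes in $H^\bu(\lie{vect}(n))$ vanish.
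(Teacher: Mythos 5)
Your proposal follows exactly the route the paper itself sketches (and that Fuks's book carries out): reduce to the relative cohomology $H^\bullet(\lie{vect}(n)\,|\,\lie{gl}_n)$, identify it with the truncated polynomial algebra $H^\bullet(\op{sk}_{2n}BGL(n))$, and match the Hochschild--Serre spectral sequence of the pair $(\lie{vect}(n),\lie{gl}_n)$ with the Serre spectral sequence of the principal bundle $X_n \to \op{sk}_{2n}BGL(n)$ via the transgression of the primitive generators to the formal Chern classes. The two genuinely hard inputs --- the computation of the relative cohomology as the truncated Weil algebra, and the triviality of the ring structure (where a pure degree count only suffices for $n\leq 2$) --- are only gestured at in your write-up, but the paper likewise defers them to the cited reference, so this is an accurate reconstruction of the intended argument.
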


Note that when $n = 1$ we have ${\rm sk}_2 B U(1) = \PP^1 \subset \PP^\infty = B U(1)$. 
Moreover, the restriction of the universal bundle is the Hopf fibration $U(1) \to S^3 \to \PP^1$ so that $X_1 \simeq S^3$.

It will be useful to say a few words about the proof of this theorem, as we will need a more general version to discuss mixed anomalies shortly.

For a Lie algebra $\lie{g}$ with a subalgebra $\lie{h} \subset \lie{g}$ the relative Lie algebra cohomology is defined to be the cohomology of the subcomplex 
\beqn
C^\bu(\lie{g} \; | \; \lie{h}) \subset C^\bu(\lie{g})
\eeqn
which consists of $\lie{h}$-invariant cochains $\phi \in C^\bu(\lie{g})$ satisfying $\iota_X \phi = 0$.
Here $\iota_X \colon C^\bu(\lie{g}) \to C^{\bu-1} (\lie{g})$ is the contraction with an element $X \in \lie{h}$.
The Hochschild--Serre spectral sequence for a subalgebra $\lie{h} \subset \lie{g}$ is a spectral sequence whose $E_2$ page is given in terms of relative Lie algebra cohomology
\beqn
E_2 \simeq H^\bu(\lie{g}, \lie{h}) \otimes H^\bu(\lie{h}),
\eeqn
and converges to the absolute Lie algebra cohomology $H^\bu(\lie{g})$.

In the current context we are interested in the absolute cohomology of $\lie{vect}(n)$.
The key idea is that the Hochschild--Serre spectral sequence for the subalgebra $\lie{gl}_n \subset \lie{vect}(n)$ consisting of linear vector fields can be interpreted geometrically in terms of classifying spaces.
One can identify the relative Lie algebra cohomology $H^\bu(\lie{vect}(n) \; | \; \lie{gl}_n)$ with the cohomology of the $2n$-skeleton $\op{sk}_{2n} BGL(n)$.
This identification can be extended to an isomorphism of spectral sequences between the Serre spectral sequence computing the cohomology of $X_n$, as a $GL(n)$ space over the $2n$-skeleton, and the Hochschild--Serre spectral sequence for the subalgebra $\lie{gl}_n \subset \lie{vect}(n)$ converging to the absolute cohomology of $\lie{vect}(n)$.

An immediate application of the Serre spectral sequence shows that first nontrivial cohomology above degree zero of $X_n$ is in degree $2n+1$ and we have
\beqn
H^{2n+1}(\lie{vect}(n)) \simeq H^{2n+1}(X_n) \simeq H^{2n+2}(BGL(n)) .
\eeqn
In other words, the degree $2n+1$ cohomology of formal vector fields is isomorphic to degree $2n+2$ polynomials built from universal characteristic classes of rank $n$ vector bundles.

In the previous section we have argued that the space of holomorphic diffeomorphism anomalies in $n$-dimensions, the space of central charges for $\cT$, is precisely this first non-trivial cohomology group of $X_n$.
When $n=1$ we have that $X_1 \simeq S^3$ hence the space of central charges for holomorphic vector fields in complex dimension one is one-dimensional, as expected (all being multiples of the universal characteristic class is $c_1^2 \in H^4 (BGL_1)$).
When $n = 2$ we see that the space of central charges is two-dimensional
\beqn
H^1_{loc}(\cT(\C^2)) \simeq H^{5} (\lie{vect}_2) \simeq H^6(BGL(2)) = \C \cdot c_1^3 \oplus \C \cdot c_1 c_2
\eeqn
spanned by the universal characteristic classes $c_1^3$ and $c_1 c_2$.

It will be convenient for our purposes to relabel this cohomology in terms of the degree six components of the universal Chern character:
\beqn
\ch_1^3 = c_1^3, \quad \ch_1 \ch_2 = \frac12 \left(c_1^3 - 2 c_1 c_2\right) .
\eeqn 
Any holomorphic diffeomorphism anomaly represented by a class $\Theta \in H^5(\lie{vect}(n))$ can be written as a linear combination of these universal classes
\beqn
\label{eqn:achol}
\Theta = a^{hol} \cdot \ch_1 \ch_2 + c^{hol} \cdot \ch_1^3 .
\eeqn
We will see momentarily how the coefficients $a^{hol},c^{hol}$ are related to the standard $a,c$ anomalies in four-dimensional conformal geometry.

We state explicit local models for the central charges $\ch_1 \ch_2, \ch_1^3$.

\begin{prop}\label{prop:holac}
The following are explicit degree one local cocycle representatives for the classes $\ch_1 \ch_2$ and $\ch_1^3$:
\beqn
\Theta_{a^{hol}} = \frac{1}{12} \int_{\C^2} \op{tr} ( J \mu ) \op{tr}(\del J \mu \del J \mu) ,
\eeqn
and
\beqn
\Theta_{c^{hol}} = \frac{1}{6} \int_{\C^2} \op{tr}(J \mu) \op{tr}(\del J \mu) \op{tr}(\del J \mu) 
\eeqn
respectively.
\end{prop}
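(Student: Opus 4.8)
The plan is to verify directly that the two displayed functionals are cocycles of cohomological degree $+1$ for $\cloc^\bu(\cT(\C^2))$, and then to identify which classes in $H^1_{loc}(\cT(\C^2)) \simeq H^5(\lie{vect}(2)) \simeq H^6(BGL(2))$ they represent by reading off their leading jet order under the isomorphisms of Theorems \ref{thm:bwgf} and \ref{thm:fuks}.

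\emph{Step 1: the cocycle property.} Write $\mu = \mu^i\del_{z_i}$ for a point of $\cT(\C^2)[1]$ and let $J\mu$ be the $\lie{gl}_2$-valued Dolbeault form with entries $\del_j\mu^i$. I would begin with a bidegree count: since $\int_{\C^2}$ pairs only with $(2,2)$-forms and $\del$ raises holomorphic but not antiholomorphic form degree, the three copies of $\mu$ in each integrand must carry total antiholomorphic Dolbeault degree $2$; together with the shift by one in $\cT[1]$ this is precisely the statement that the functional has cohomological degree $+1$. Closedness under the internal $\dbar$ is then automatic: $J$ involves only holomorphic derivatives so it commutes with $\dbar$, while $\del$ anticommutes with $\dbar$; hence $\dbar$ applied to either integrand equals, up to a sign, the sum of the terms obtained by inserting $\dbar\mu$ into one argument, and this is a total derivative that integrates to zero over $\C^2$. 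The remaining substantive check is closedness under the bracket-dual part of the Chevalley--Eilenberg differential, which I expect to reduce to the elementary identity
\[
\op{tr}(J[\nu,\mu]) = \nu\cdot\del\,\op{tr}(J\mu) - \mu\cdot\del\,\op{tr}(J\nu),
\]
the holomorphic avatar of $\op{div}[\nu,\mu] = \nu(\op{div}\mu) - \mu(\op{div}\nu)$, after which the Leibniz rule and an integration by parts cancel the putative anomaly exactly as in the Kac--Moody and Virasoro cocycle verifications recalled above.

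\emph{Step 2: identifying the classes.} Under the quasi-isomorphism of Theorem \ref{thm:bwgf}, realized by pulling back along the jet map $j_0^\infty$ of Section \ref{s:vf} and integrating the universal density over $\C^2$, a degree-one local cocycle is determined up to coboundary by its leading term as a polynomial in the jet of $\mu$, which by the spectral-sequence identification of Theorem \ref{thm:fuks} is an invariant cubic on $\lie{gl}_2$, i.e.\ a class in $H^6(BGL(2))$. I would then read off these leading terms: treating $\op{tr}(J\mu)$ as the (linearized) representative of $\ch_1 = c_1 = \op{tr}(A)$, the two-form $\op{tr}(\del J\mu\,\del J\mu)$ as that of $\ch_2 = \tfrac12\op{tr}(A^2)$, and $\op{tr}(\del J\mu)\op{tr}(\del J\mu) = (\del\,\op{tr} J\mu)^2$ as that of $\ch_1^2 = \op{tr}(A)^2$, one gets the invariants $\op{tr}(A)\op{tr}(A^2)$ and $\op{tr}(A)^3$ for $\Theta_{a^{hol}}$ and $\Theta_{c^{hol}}$ respectively, which by the change of basis \eqref{eqn:achol} are $2\,\ch_1\ch_2$ and $\ch_1^3$. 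Since $\Sym^3(\lie{gl}_2^*)^{GL(2)}$ is exactly two-dimensional (Cayley--Hamilton for $2\times 2$ matrices eliminates $\op{tr}(A^3)$), these two invariants form a basis, so the functionals are nonzero in cohomology, span $H^1_{loc}(\cT(\C^2))$, and lie on the $\ch_1\ch_2$- and $\ch_1^3$-axes as asserted.

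\emph{Step 3: normalizations --- the main obstacle.} The remaining work, which I expect to be the crux, is to fix the constants $\tfrac1{12}$ and $\tfrac16$. These depend on the precise normalizations of (i) the jet-map-and-integration identification of Theorem \ref{thm:bwgf}, (ii) the transgression homomorphism in the spectral sequence, and (iii) the Chern-class/Chern-character dictionary of \eqref{eqn:achol}; a consistency check is that the ratio $\tfrac{1/12}{1/6} = \tfrac12$ matches $\ch_1\ch_2 = \tfrac12\op{tr}(A)\op{tr}(A^2)$ against $\ch_1^3 = \op{tr}(A)^3$, so both functionals should descend from one cubic Chern--Weil normalization. Rather than tracking all three factors from first principles, I would fix the overall scale by restricting the universal construction along a linear embedding $\C\hookrightarrow\C^2$: this relates $\Theta_{c^{hol}}$ to the $n=1$ Virasoro cocycle $\phi(\mu,\mu') = \int_\C J\mu\wedge\del(J\mu')$ whose normalization is already pinned down, reducing the constant-chasing to a single Gaussian-type evaluation on a convenient polynomial vector field. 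Threading the normalization consistently through this dimensional restriction is the part I expect to be delicate; the rest is routine Gelfand--Fuks and Chern--Weil bookkeeping.
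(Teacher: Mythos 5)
Your route runs in the opposite direction from the paper's. The paper constructs the cocycles by descent: it first exhibits explicit representatives $a_1\tau_1^2$ and $a_1\tau_2$ on the $E_\infty$ page of the spectral sequence for $C^\bu(\lie{vect}(2),\Omega^\bu(\Hat{D}^2))$, where closedness and the identification with $H^6(BGL(2))$ are automatic, and then applies $\dbar$-descent to land in $\cloc^\bu(\cT(\C^2))$. You instead start from the displayed integrals, verify the cocycle condition by hand, and identify the classes afterwards. Your degree count, the two-dimensionality of $\Sym^3(\lie{gl}_2^*)^{GL(2)}$ via Cayley--Hamilton, and the leading-jet matching with invariant cubics are all correct, and this is a legitimate alternative in outline. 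But the Chevalley--Eilenberg closedness does not reduce to the divergence identity alone: one has $J[\mu,\nu]=\mu\cdot\del(J\nu)-\nu\cdot\del(J\mu)+[J\mu,J\nu]$, and while the matrix commutator is traceless (which is exactly why your identity for $\op{tr}(J[\mu,\nu])$ holds), it survives inside $\op{tr}(\del J[\mu,\nu]\,\del J\mu_k)$, so the cancellation for $\Theta_{a^{hol}}$ requires controlling these commutator terms as well. This is precisely the computation the paper's $E_\infty$-page construction sidesteps.

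The more serious gap is the normalization step, which you correctly identify as the crux but whose proposed resolution does not work as stated. A linear embedding $\C\hookrightarrow\C^2$ does not induce a map $H^1_{loc}(\cT(\C^2))\to H^1_{loc}(\cT(\C))$: reducing a local cocycle from $\C^2$ to $\C$ requires integrating over a \emph{compact} transverse factor (as in the paper's compactification section, where one integrates over a compact curve $\Sigma$), and over a noncompact transverse $\C$ the integral is undefined; note also that naive restriction of a degree-five Gelfand--Fuks class lands in $H^5(\lie{vect}(1))=0$ since $X_1\simeq S^3$. To pin down the constants $\tfrac1{12}$ and $\tfrac16$ one must either track the transgression and descent normalizations explicitly, as is done in the reference the paper leans on, or calibrate against an independently computed anomaly (for instance the wheel-diagram computation for the free $\beta\gamma$ system, or the consistency check $\int_\Sigma \Td|_6=\tfrac1{12}\chi^{hol}(\Sigma,\cO)\ch_1^2$ appearing later in the paper). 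Without one of these, the specific prefactors in the statement are not established.
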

\begin{proof}
The arguments that follow are elaborated in \cite{BWgf}. 

There are two main steps.
The first is to identify explicit representatives in the cohomology of formal vector fields.
The second is in passing from Gelfand--Fuks classes in $H^5(\lie{vect}(2))$ to local cohomology classes in $H^1_{loc}(\cT(\C^2))$.
Both steps follow the general principle of descent.

The first step is accomplished by considering the double complex associated to the Lie algebra cohomology of $\lie{vect}(2)$ with coefficients in the formal de Rham complex~$\Omega^\bu(\Hat{D}^2)$.
Since the de Rham complex is quasi-isomorphic to $\C$ in degree zero, the resulting spectral sequence converges to $H^\bu(\lie{vect}(2))$.
The first page of the spectral sequence is generated, as an bigraded algebra, by generators $a_1,a_2$ of bidegree $(0,1),(0,3)$ respectively and by generators $(\tau_1,\tau_2)$ of bidegree $(1,1)$, $(2,2)$ respectively.
These generators are subject to the relations $\tau_1^3 = \tau_1 \tau_2 = \tau_2^2 = 0$.
In degree five, the following two classes remain at the $E_\infty$ page: $a_1 \tau_1^2, a_1 \tau_2$ which are represented by the totally skew symmetric tri-linear functionals on $\lie{vect}(2)$ with values in $\Omega^2(\Hat{D}^2)$:
\begin{align*}
a_1 \tau_1^2 \colon (X_0,X_1,X_2) & \mapsto \op{tr}(J X_0) \op{tr} (\d J X_1 \wedge \op{tr} (\d J X_2) + \cdots\\
a_1 \tau_2 \colon (X_0,X_1,X_2) & \mapsto \op{tr}(J X_0) \op{tr} (\d J X_1 \wedge \d J X_2) + \cdots .
\end{align*}
The $\cdots$ denote appropriate skew-symmetrization.

The passage from Gelfand--Fuks classes to local cohomology classes uses descent for the $\dbar$-operator.
If $\phi$ is one of the representatives above, we can extend it to a functional on $\cT = \Omega^{0,\bu}(\C^2, \T)$ with values in densities on $\C^2$ and can then take the local class.
The resulting local functional is of total degree $+1$ since the functionals are cubic and must have anti-holomorphic Dolbeault degree which sums to two.
\end{proof}

\subsection{Mixed anomalies}

In this section we generalize the discussion of holomorphic diffeomorphism anomalies on $\C^n$ to anomalies for a local Lie algebra built from both infinitesimal holomorphic diffeomorphisms and infinitesimal holomorphic gauge symmetries.
For gauge symmetries of the trivial holomorphic $G$-bundle, the local Lie algebra is the semi-direct product local Lie algebra
\beqn
\cT_X \ltimes \lie{g} \otimes \Omega^{0,\bu}_X .
\eeqn
Where the local Lie algebra $\cT_X$ resolving holomorphic vector fields acts by Lie derivative on the Dolbeault complex for the trivial bundle.

Our goal is to describe, just as we did in the case of $\cT_X$, the space of anomalies for the existence of quantum backgrounds with respect to this local Lie algebra.
Following the discussion of section \ref{sec:anomalies1} the space of such anomalies is the degree one local cohomology
\beqn
H^1_{loc} (\cT_X \ltimes \lie{g} \otimes \Omega^{0,\bu}_X ).
\eeqn
We will use the characterization, in particular, to describe diffeomorphism anomalies for holomorphic gauge theory in \ref{sec:mixed2}.
In analogy with the case that the group $G$ is trivial, we have the following result, which can be proved by similar methods as in theorem \ref{thm:bwgf}.

\begin{thm}
On flat space $X = \C^n$ there is a quasi-isomorphism
\beqn
\cloc^\bu \left(\cT ( \C^n ) \ltimes \lie{g} \otimes \Omega^{0,\bu}(\C^n) \right) \simeq C^\bu_{red}(\lie{vect}(n) \ltimes \lie{g} \otimes \cO_n) [2n] .
\eeqn
In particular, the space of mixed anomalies can be identified with the following Lie algebra cohomology
\beqn
H_{loc}^1 \simeq H^{2n+1}(\lie{vect}(n) \ltimes \lie{g} \otimes \cO_n) .
\eeqn
\end{thm}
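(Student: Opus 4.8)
The plan is to mirror the proof of Theorem \ref{thm:bwgf} word for word, inserting the gauge symmetry algebra at each stage. First I would observe that the entire argument for $\cT(\C^n)$ proceeds in two independent pieces: (1) a purely local-to-global step identifying $\cloc^\bu(\cL(\C^n))$ with $C^\bu_{red}$ of formal sections, via $\infty$-jets and the fact that $D$-modules on $\C^n$ have trivial de Rham cohomology (so that passing from the local functional complex to the reduced Chevalley--Eilenberg complex of the $\infty$-jet Lie algebra at the origin is a quasi-isomorphism, up to the shift $[2n]$ coming from the top density); and (2) a formal/algebraic step computing the cohomology of the formal Lie algebra via Gelfand--Fuks techniques. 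For the present theorem only step (1) is really needed to get the quasi-isomorphism as stated, since the right-hand side is still phrased as a Chevalley--Eilenberg complex of formal objects. So the main task is to check that step (1) goes through with $\cT$ replaced by $\cT \ltimes \lie{g}\otimes\Omega^{0,\bu}$.

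Concretely, the steps in order: (i) Recall from Lemma (the $j_0^\infty$ lemma) and its evident extension that $j_0^\infty$ induces a quasi-isomorphism $\cT(\C^n) \ltimes \lie{g}\otimes\Omega^{0,\bu}(\C^n) \simeq \bigl(\lie{vect}(n) \ltimes \lie{g}\otimes\cO_n\bigr)[[\zbar,\d\zbar]]$ as $L_\infty$ algebras, where the right-hand side is the $\infty$-jet at the origin of the local Lie algebra, equipped with its $\dbar$ differential; the semidirect product structure is compatible with taking jets because the action of $\cT$ on $\lie{g}\otimes\Omega^{0,\bu}$ is by (holomorphic) differential operators, hence passes to jets. (ii) Invoke the general machinery (as in \cite{CG1}, and exactly as used in \cite{BWgf}) that for any local Lie algebra $\cL$ on $\C^n$, the local cochain complex $\cloc^\bu(\cL)(\C^n)$ is computed by $C^\bu_{red}$ of the $D$-module of $\infty$-jets, and that taking the de Rham complex of this $D$-module — which on $\C^n$ is just restriction to the fiber at a point, up to quasi-isomorphism and the shift by $[2n]$ — yields $C^\bu_{red}$ of the fiber Lie algebra. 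Applying this to $\cL = \cT \ltimes \lie{g}\otimes\Omega^{0,\bu}$ and combining with (i), one gets
\beqn
\cloc^\bu\bigl(\cT(\C^n)\ltimes \lie{g}\otimes\Omega^{0,\bu}(\C^n)\bigr) \simeq C^\bu_{red}\bigl(\lie{vect}(n)\ltimes\lie{g}\otimes\cO_n\bigr)[2n].
\eeqn
(iii) Take $H^1$ of both sides: the left side is $H^1_{loc}$ by definition, and the right side in cohomological degree $1$ after the shift is $H^{2n+1}(\lie{vect}(n)\ltimes\lie{g}\otimes\cO_n)$, giving the second claim.

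The main obstacle, as I see it, is step (ii) applied to this larger algebra: one must be careful that the semidirect-product local Lie algebra genuinely satisfies the hypotheses under which ``local cochains $=$ reduced Chevalley--Eilenberg of jets, then de Rham'' holds — in particular that $\lie{g}\otimes\Omega^{0,\bu}$ contributes no subtlety (it is an elliptic complex of the same analytic type as $\cT$, so this should be fine) and that the $D_{\C^n}$-module structure on the $\infty$-jets of the semidirect product is still flat/holonomic enough for the de Rham cohomology to be concentrated as claimed. A secondary point to verify is that the shift is $[2n]$ and not something depending on $\lie{g}$ — but the shift comes only from the density-valued pairing on $\C^n$, which is unchanged, so this is automatic. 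Everything else is a routine adaptation of the arguments in \cite{BWgf}, which is why the theorem is stated as provable ``by similar methods.''
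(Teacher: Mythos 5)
Your proposal is correct and matches the paper's intent exactly: the paper offers no independent proof of this theorem, stating only that it ``can be proved by similar methods as in theorem \ref{thm:bwgf},'' and your two-step decomposition --- (1) the jet/de Rham identification of local cochains with reduced Chevalley--Eilenberg cochains of the formal fiber algebra, shifted by $[2n]$, applied verbatim to the semidirect product $\cT \ltimes \lie{g}\otimes\Omega^{0,\bu}$, and (2) the observation that only this step is needed since the right-hand side remains a formal Chevalley--Eilenberg complex --- is a faithful elaboration of that analogy. Your flagged concerns (compatibility of the semidirect product with jets, and the $\lie{g}$-independence of the shift) are the right ones to check and are resolved exactly as you say.
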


Like in the case of pure holomorphic diffeomorphism anomalies, we will identify the space of mixed anomalies with certain characteristic classes.
This will be facilitated by the following result.

\begin{thm}[\cite{Khoroshkin}]
Let $G$ be a compact Lie group and $\lie{g}$ its complexified Lie algebra.
There is a graded ring isomorphism
\beqn
H^\bu(\lie{vect}(n) \ltimes \lie{g} \otimes \cO_n \; | \; \lie{gl}_n \oplus \lie{g}) \simeq H^{\bu \leq 2n} \left(B GL(n) \times BG\right) ,
\eeqn
where the right hand side is the cohomology of the $2n$-skeleton of the product of classifying spaces.
\end{thm}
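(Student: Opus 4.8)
The plan is to reduce the statement to the known computation of the Lie algebra cohomology relative to the maximal reductive subalgebra, and then geometrically interpret the relative cohomology via classifying spaces, exactly in parallel with the proof of Theorem \ref{thm:fuks} but now with gauge symmetry included. Concretely, I would first recall that $\lie{vect}(n) \ltimes \lie{g} \otimes \cO_n$ contains the reductive subalgebra $\lie{gl}_n \oplus \lie{g}$, where $\lie{gl}_n \subset \lie{vect}(n)$ is the linear vector fields and $\lie{g} \hookrightarrow \lie{g} \otimes \cO_n$ is the constant functions. The Hochschild--Serre spectral sequence for this subalgebra has $E_2$-page
\beqn
E_2 \simeq H^\bu(\lie{vect}(n) \ltimes \lie{g} \otimes \cO_n \; | \; \lie{gl}_n \oplus \lie{g}) \otimes H^\bu(\lie{gl}_n \oplus \lie{g})
\eeqn
and converges to the absolute cohomology. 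The main task is to identify the relative factor.

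The key step is to present the relative Chevalley--Eilenberg complex as computing the de Rham cohomology of a finite-dimensional space built from classifying data. For $\lie{vect}(n)$ alone, Fuks identifies $C^\bu(\lie{vect}(n) \;|\; \lie{gl}_n)$ with the truncated Weil algebra of $\lie{gl}_n$ — equivalently with $\Omega^\bu$ of the $2n$-skeleton $X_n$ of $BGL(n)$; the truncation to degree $\le 2n$ comes from the fact that $\lie{vect}(n) / \lie{gl}_n$ is modeled on formal vector fields vanishing to second order, whose formal exponential only sees jets up to order $n$ worth of curvature. I would argue that adjoining $\lie{g} \otimes \cO_n$ introduces an extra copy of the (truncated) Weil algebra of $\lie{g}$: the relative cochains are built from $\lie{gl}_n$-curvature forms and $\lie{g}$-curvature forms, with the same degree $\le 2n$ truncation, since the relevant module is still formal functions/jets on the $n$-disk. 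This is precisely the content of the cited result of Khoroshkin, so I would invoke it. The upshot is
\beqn
H^\bu(\lie{vect}(n) \ltimes \lie{g} \otimes \cO_n \; | \; \lie{gl}_n \oplus \lie{g}) \simeq H^{\bu \le 2n}(BGL(n) \times BG) ,
\eeqn
with the trivial product, as the Weil algebra of $\lie{gl}_n \oplus \lie{g}$ computes $H^\bu(B(GL(n) \times G)) = H^\bu(BGL(n)) \otimes H^\bu(BG)$, and the truncation is the skeletal one.

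There are really two things to verify carefully. First, that the differential in the relative complex matches the de Rham differential on the skeleton — this is a Chern--Weil-type identification: the relative cochains of $\lie{vect}(n) \ltimes \lie{g} \otimes \cO_n$ form a commutative dg algebra generated by the universal connection/curvature for the $\lie{gl}_n$-part (as in Fuks) together with the universal connection/curvature for the $\lie{g}$-part (the gauge field), and the Maurer--Cartan / Bianchi structure forces the differential to be the Weil differential of the product reductive group. Second, that the truncation degree remains $2n$ and does not change in the presence of gauge symmetry: the obstruction-theoretic reason is that the quotient module $(\lie{vect}(n) \ltimes \lie{g} \otimes \cO_n)/(\lie{gl}_n \oplus \lie{g})$ is, as a $\lie{gl}_n \oplus \lie{g}$-module, built from $\mathfrak{m}/\mathfrak{m}^2$-type pieces of the maximal ideal $\mathfrak{m} \subset \cO_n$, and its formal-geometry "size" — measured by how many curvature insertions can be non-trivially paired — is still governed by $\dim_{\C} \C^n = n$. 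I expect this truncation claim, i.e. pinning down that the skeleton is exactly the $2n$-skeleton of the \emph{product} $BGL(n) \times BG$ and not something larger, to be the main obstacle; the cleanest route is to cite Khoroshkin's theorem directly for this, which is what the statement does, and otherwise to run the Gelfand--Fuks continuous-cohomology argument adapting Fuks's filtration argument verbatim with the extra tensor factor carried along.
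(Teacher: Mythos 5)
The paper offers no proof of this statement: it is imported wholesale from \cite{Khoroshkin}, and the surrounding discussion only \emph{uses} it (via the Hochschild--Serre spectral sequence for $\lie{gl}_n \oplus \lie{g}$ compared with the Serre spectral sequence of $Y_n$) to deduce the absolute cohomology in the next theorem. Your sketch --- identify the relative Chevalley--Eilenberg complex with a truncated Weil algebra of the reductive subalgebra $\lie{gl}_n \oplus \lie{g}$, match the differential with the Weil differential by a Chern--Weil argument, and argue the truncation stays in total degree $2n$ because only $n$ constant vector fields are available to feed the curvature cochains --- is the standard outline paralleling the proof of Theorem \ref{thm:fuks}, and your decision to defer to Khoroshkin for the hard truncation step is exactly what the paper does, so there is nothing to compare beyond that. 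One small correction: the relative cohomology $H^{\bullet \leq 2n}(BGL(n)\times BG)$ is a truncated polynomial ring whose product is \emph{not} trivial (e.g.\ $c_1 \cdot c_1 = c_1^2 \neq 0$ once $n \geq 2$); the ``trivial product'' clause in Theorem \ref{thm:fuks} refers to the \emph{absolute} cohomology of the total space $X_n$, where all interesting classes sit above degree $2n$, not to the relative cohomology appearing in the present statement.
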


In this theorem appears the relative Lie algebra cohomology.
In our context we are considering the relative cohomology with respect to the subalgebra
\beqn\label{eqn:subalg}
\lie{gl}_n \oplus \lie{g} \subset \lie{vect}(n) \ltimes \lie{g} \otimes \cO_n 
\eeqn
consisting of linear vector fields and constant $\lie{g}$-valued functions.
Using Khoroshkin's result we can proceed similarly as in the case $\lie{g}=0$ (see theorem \ref{thm:fuks} and the discussion following) to identify the absolute Lie algebra cohomology in terms of classifying spaces.
Indeed, let $Y_n$ be the restriction of the universal bundle over the $2n$-skeleton of $B GL(n) \times BG$.
As in the case $\lie{g}=0$, there is an isomorphism between the Hochschild--Serre spectral sequence for the subalgebra \eqref{eqn:subalg} and the Serre spectral sequence for the $GL(n) \times G$-space $Y_n$ over the $2n$-skeleton of $BGL(n) \times BG$.
This leads to the following.

\begin{thm}
Let $G$ be a compact connected Lie group and $\lie{g}$ is complexified Lie algebra.
There is an isomorphism of graded rings
\beqn
H^\bu(\lie{vect}(n) \ltimes \lie{g} \otimes \cO_n) \simeq H^\bu(Y_n) .
\eeqn
In particular, there is an isomorphism
\beqn
H^{2n+1}(\lie{vect}(n) \ltimes \lie{g} \otimes \cO_n) \simeq H^{2n+2} (BGL(n) \times BG) .
\eeqn
\end{thm}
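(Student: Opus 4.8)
The plan is to follow the proof of Theorem~\ref{thm:fuks} and of the discussion immediately after it, but with the larger reductive subalgebra $\lie{h} = \lie{gl}_n \oplus \lie{g}$ replacing $\lie{gl}_n$. First I would run the Hochschild--Serre spectral sequence for $\lie{h} \subset \lie{vect}(n)\ltimes\lie{g}\otimes\cO_n$, whose $E_2$ page is
\beqn
E_2 \;\simeq\; H^\bu\!\left(\lie{vect}(n)\ltimes\lie{g}\otimes\cO_n \;\big|\; \lie{h}\right) \otimes H^\bu(\lie{h}) ,
\eeqn
and compare it, term by term and differential by differential, with the Serre spectral sequence of the principal bundle
\beqn
GL(n)\times G \longrightarrow Y_n \longrightarrow \op{sk}_{2n}\!\left(BGL(n)\times BG\right) .
\eeqn
By Khoroshkin's theorem the first tensor factor is $H^{\bu\leq 2n}(BGL(n)\times BG)$, which agrees with the cohomology of the $2n$-skeleton because that space has cells only in even degrees; and $H^\bu(\lie{h}) = H^\bu(\lie{gl}_n)\otimes H^\bu(\lie{g}) = H^\bu(GL(n))\otimes H^\bu(G) = H^\bu(GL(n)\times G)$ is exactly the fibre cohomology of $Y_n$. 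So the $E_2$ pages coincide as bigraded algebras.

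Next I would match the differentials. Both spectral sequences are multiplicative and both $E_2$ pages are free modules over the fibre cohomology $H^\bu(GL(n)\times G)$, which is an exterior algebra on primitive generators $x_1,\ldots,x_{2n-1}$ (from the $U(n)$ factor) and $y_1,\ldots,y_r$ (from $G$); hence each spectral sequence is determined by the transgressions of these generators. On the topological side the $x_i$ transgress to the Chern classes $c_i$ and the $y_j$ to the standard polynomial generators of $H^\bu(BG)$. On the Lie-algebra side the transgression is the Lie-algebraic Chern--Weil map, sending the primitive generators of $H^\bu(\lie{gl}_n)$ and $H^\bu(\lie{g})$ to the invariant polynomials in the respective curvatures inside $C^\bu(\lie{vect}(n)\ltimes\lie{g}\otimes\cO_n \mid \lie{h})$. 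The comparison works because Khoroshkin's isomorphism is built precisely so that these invariant polynomials correspond to the polynomial generators of $H^{\bu\leq 2n}(BGL(n)\times BG)$. Granting this, the two spectral sequences are isomorphic from the $E_2$ page onward, yielding $H^\bu(\lie{vect}(n)\ltimes\lie{g}\otimes\cO_n)\cong H^\bu(Y_n)$; this is an isomorphism of graded rings since the comparison is multiplicative and, as in Theorem~\ref{thm:fuks}, there are no nontrivial multiplicative extensions.

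I expect the main obstacle to be exactly this last compatibility: one has to verify that Khoroshkin's ring isomorphism $H^\bu(\lie{vect}(n)\ltimes\lie{g}\otimes\cO_n\mid\lie{h})\simeq H^{\bu\leq 2n}(BGL(n)\times BG)$ is the one induced by Chern--Weil descent for each of the two summands $\lie{gl}_n$ and $\lie{g}$, so that it intertwines the Lie-algebraic transgression with the topological one; this requires tracking how Khoroshkin constructs the isomorphism. Once this is in place, the final assertion follows by running the Serre spectral sequence of $Y_n$ explicitly, exactly as for $X_n$ in the paragraph after Theorem~\ref{thm:fuks}. The $E_2$ page is a Koszul-type complex $H^{\bu\leq 2n}(BGL(n)\times BG)\otimes\Lambda(x_i,y_j)$ with $dx_i = c_i$ and $dy_j$ the matching generator of $H^\bu(BG)$; since the classes $c_i$ together with the generators of $H^\bu(BG)$ form a regular sequence in the untruncated polynomial ring $H^\bu(BGL(n)\times BG)$, all positive-degree cohomology of $Y_n$ is produced by the degree-$2n$ truncation. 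A degree count---the exterior generators begin in degree $1$, while the polynomial classes killed by the truncation begin in degree $2n+2$---then shows that the first nonvanishing cohomology of $Y_n$ above degree zero sits in degree $2n+1$ and satisfies $H^{2n+1}(Y_n)\simeq H^{2n+2}(BGL(n)\times BG)$. Combined with $H^\bu(\lie{vect}(n)\ltimes\lie{g}\otimes\cO_n)\cong H^\bu(Y_n)$ this gives the stated identification of the space of mixed central charges.
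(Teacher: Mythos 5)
Your proposal follows the same route as the paper: the paper also deduces the theorem by combining Khoroshkin's computation of the relative cohomology with an isomorphism between the Hochschild--Serre spectral sequence for $\lie{gl}_n \oplus \lie{g} \subset \lie{vect}(n)\ltimes\lie{g}\otimes\cO_n$ and the Serre spectral sequence of the $GL(n)\times G$-space $Y_n$ over the $2n$-skeleton, exactly mirroring the $\lie{g}=0$ case of Fuks. You supply more detail than the paper does (the transgression matching and the degree count for the final isomorphism), but the argument is the same.
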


Notice that when $G$ is the trivial group we have $Y_n = X_n$ and this result recovers the  result of Fuks that we recollected in theorem \ref{thm:fuks}.
From this theorem we see that the space of mixed anomalies is isomorphic to $H^{2n+2}(BGL(n) \times BG)$.
As in the case of pure holomorphic diffeomorphism anomalies, we can match such cohomology classes with explicit classes in the local cohomology of $\cT(X) \ltimes \lie{g} \otimes \Omega^{0,\bu}(X)$.

Let $\ch_j \in H^{2j}(BGL(n)), j=1,\ldots,n$ be the universal Chern characters as before.
Also, let
\beqn
\theta_k \in \Sym^{k} (\lie{g}^*)^G \simeq H^{2k}(BG)
\eeqn
denote an invariant polynomial of polynomial degree $k$.
Then the degree one local functional corresponding to the class
\beqn
\ch_j \otimes \theta_{n-j+1} \in H^{2n+2}(BGL(n) \times BG), \quad j=0,\ldots,n,
\eeqn
for example, is proportional to
\beqn
\int_{\C^n} \op{tr}(J \mu (\del J \mu)^{j-1}) \, \theta_{n-j+1}(\del A, \ldots, \del A) .
\eeqn
This can further be extended to the case $j=0$.
In this case, the local functional corresponding an invariant polynomial $\theta_{n+1}$ on $\lie{g}$ of polynomial degree $n+1$ is
\beqn
\int_{\C^n} \theta_{n+1}(A, \del A, \ldots, \del A) . 
\eeqn
This type of local cocycle, which does not involve any holomorphic vector fields, was studied in \cite{GWkm} where it is shown how it relates to higher dimensional Kac--Moody algebras \cite{FHK}.

\section{Holomorphic theories and their quantum anomalies}
\label{s:hol}

\subsection{A local index theorem from free field theory}

To any elliptic complex $(\cE,Q)$ on a manifold $M$ there exists a free theory in the Batalin--Vilkovisky formalism whose complex of fields (including ghosts, fields, anti-fields, etc.) is `double' the size of $\cE$:
\beqn
\T^* [-1] \cE = \cE \oplus \cE^![-1] ,
\eeqn
see \cite[Section ??]{CG2}.
Here $\cE^!$ stands is the elliptic complex which is given by sections of the bundle $E^* \otimes \op{Dens}_M$ and whose differential is $Q^* \otimes \id$; here $E$ is the bundle whose sections is the underlying graded vector space of the elliptic complex $\cE$.
A field is a section of $\T^*[-1] \cE$ which we denote by the tuple $(\gamma, \beta)$.
The BV anti-bracket on observables is determined by the natural $(-1)$-shifted symplectic structure on the shifted cotangent space---explicitly this utilizes the natural integration pairing between sections of $E$ and sections of $E^* \otimes \op{Dens}_M$.
The action functional is given by
\beqn
\int_M \beta (Q \gamma) ,
\eeqn
so that the equations of motion imply, in particular, that $\gamma$ is $Q$-closed.

Some examples of this constructions may be familiar. 
If $M$ is just a smooth manifold, then taking $\cE = \Omega^\bu(M)[1]$ to be the shifted smooth de Rham complex of $M$ results in the BV description of abelian topological $BF$ theory.
If $M = \Sigma$ is a Riemann surface then $\cE = \Omega^{0,\bu}(\Sigma, K^{\otimes r})$ recovers the spin $r$ bosonic ghost system used in string theory.
%If $M$ is a spin manifold then we can take $\cE = \Gamma(\cS)$, sections of the spinor bundle, and $Q = \slashed \del$ to be the Dirac operator.

Next, let $\cL$ be a local $L_\infty$ algebra on $M$ and suppose that the elliptic complex $\cE$ is a local $\cL$-module.
The coupling
\beqn\label{eqn:SLbackground}
S^{\cL}(\alpha;\beta,\gamma) = \int_M \beta (\alpha \cdot \gamma) ,
\eeqn
determines an element $S^{\cL} \in \op{Act}(\cL;\T^* [-1] \cE)$.
By the virtue that $\cE$ is a $\cL$-module this coupling automatically satisfies the equivariant classical master equation \eqref{eqn:equivariantmaster}.
Thus, any local $\cL$-module $\cE$ defines a classical $\cL$-background for the resulting free BV theory $\T^*[-1] \cE$.

Since $\cE$ is a free theory there is no obstruction to lifting $S^\cL$ to a quantum $\cL$-background.
There is, nevertheless, an obstruction to lifting this to an inner quantum $\cL$-background.
This obstruction admits an elegant description.

\begin{thm}[\cite{GwilliamThesis}]\label{thm:owen}
Let $(\cE,Q)$ be an elliptic complex which is equipped with a local $\cL$-action, where $\cL$ is some local Lie algebra.
Endow the BV theory $\T^*[-1] \cE$ with the structure of a classical $\cL$-background via $S^{\cL}$ in equation \eqref{eqn:SLbackground}.
The anomaly to the existence of a quantum inner $\cL$-background for $\T^*[-1] \cE$ is the trace of the action of $\cL$ on $\T^*[-1] \cE$.
\end{thm}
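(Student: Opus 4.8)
The plan is to reduce the statement to a one-loop computation and then recognize that computation as a heat-kernel trace, following \cite{GwilliamThesis}. By the obstruction-theoretic analysis of Section~\ref{s:bv}, once we fix the (obstruction-free, since $\T^*[-1]\cE$ is free) trivial quantization $I[L]=0$ of the underlying theory, the obstruction to promoting the classical $\cL$-background $S^\cL$ of \eqref{eqn:SLbackground} to an \emph{inner} quantum $\cL$-background lies in $H^2(\op{InnerAct}(\cL,\cE))$ and lifts to the space of central charges $H^1_{loc}(\cL)$. First I would write the equivariant effective family as the renormalization group flow $I^\cL[L] = W(P^L_0, S^\cL)$, with $P^L_0$ the propagator of the free theory, and analyze which Feynman diagrams actually contribute.

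The key structural observation is a diagram count. Every vertex of $S^\cL(\alpha;\beta,\gamma)=\int_M\beta(\alpha\cdot\gamma)$ is trivalent, carrying exactly one $\cL$-leg, one $\gamma$-leg and one $\beta$-leg, while the propagator of $\T^*[-1]\cE$ contracts only a $\gamma$ against a $\beta$. Hence, in a connected graph, the subgraph of propagator edges has all vertices of valence $\leq 2$, so it is either a path or a single cycle: paths give trees with exactly two external field legs, cycles give one-loop ``wheels'' with no external field legs, and nothing of loop order $\geq 2$ occurs. Therefore $I^\cL[L] = I^\cL_{\mathrm{tree}}[L] + \hbar\,I^\cL_{\mathrm{wheel}}[L]$ exactly, where $I^\cL_{\mathrm{tree}}[L]$ carries the field dependence and $I^\cL_{\mathrm{wheel}}[L]\in\cO(\cL[1])$ is a functional of $\cL$ alone, of cohomological degree one.

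Next I would run the equivariant quantum master equation order by order. The $\hbar^0$ part is the $\cL$-equivariant classical master equation for $I^\cL_{\mathrm{tree}}[L]$, which holds because $\cE$ is a local $\cL$-module and the flow preserves the classical master equation. The $\hbar^1$ part, given the exact tree-plus-wheel splitting, is controlled by the effective BV Laplacian applied to the tree-level background: since each contributing tree has precisely one $\gamma$-leg and one $\beta$-leg, $\triangle_L$ contracts this unique conjugate pair and closes the path into the corresponding wheel. The resulting obstruction $\Theta^\cL[L]$ is thus a degree-one functional of $\cL$ only; I would check it is $\d_\cL$-closed using the Jacobi identity behind $\{S^\cL,S^\cL\}=0$, and take its $L\to 0$ limit (with the $\cL$-inputs smeared) to obtain the local cocycle representing the anomaly class in $H^1_{loc}(\cL)$.

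The last step is to identify this wheel contribution with a trace. Writing $P^L_0 = \int_0^L \d t\,K_t$ for $K_t$ the heat kernel of the generalized Laplacian $[Q,Q^*]$ on $\T^*[-1]\cE$, the closed loop with $k$ insertions of the $\cL$-action is $\tfrac1k\,\op{Str}_{\T^*[-1]\cE}\!\big((\alpha\cdot)K_{t_1}\cdots(\alpha\cdot)K_{t_k}\big)$ integrated over a simplex, and summing over $k$ and applying $\triangle_L$ assembles this into $\op{Str}_{\T^*[-1]\cE}\!\big(\alpha\,e^{-t[Q,Q^*]}\big)$. Its short-time ($t\to0$) asymptotic expansion is the local index density of the operator by which $\alpha$ acts, and its potentially divergent part---well defined as a local cohomology class, independently of the regularization---is precisely the trace of the action of $\cL$ on $\T^*[-1]\cE$, as claimed. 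I expect the genuine obstacle to be exactly this last point: attaching a scheme-independent meaning to the trace of an operator on the infinite-dimensional $\cE$ through the heat-kernel expansion and matching it to the QME failure. This is the content of Costello's local index theorem and is where \cite{GwilliamThesis} carries out the real work; the earlier steps are essentially the bookkeeping of the BV formalism.
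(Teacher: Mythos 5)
The paper gives no proof of this statement---it is imported directly from \cite{GwilliamThesis}---but your reconstruction is essentially the argument given there: the linear coupling forces every connected graph to be either a tree with exactly one $\beta$- and one $\gamma$-leg or a one-loop wheel with only $\cL$-legs, so the equivariant theory is one-loop exact, and the obstruction is the wheel contribution, identified via the heat-kernel expansion with the regularized supertrace of the $\cL$-action. Your closing caveat is also the right one: the only genuinely nontrivial step is giving that trace a scheme-independent meaning as a class in $H^1_{loc}(\cL)$, which is precisely what the local index argument in \cite{GwilliamThesis} supplies.
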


We specialize to when $M = X$ is a complex $n$-manifold.
The basic free holomorphic theory we consider is labeled by a holomorphic vector bundle $V$ on $X$.
The elliptic complex is the Dolbeault complex resolving the sheaf of holomorphic sections of $V$:
\beqn\label{eqn:dolbeaultelliptic} 
\cE_{V} = \Omega^{0,\bu}(X, V) .
\eeqn
The action functional $\int_X \beta \dbar \gamma$ returns the familiar free action of the $\beta\gamma$-ghost system used in string theory.
Notice that in dimension $n$, however, that the field $\beta$ is an element
\beqn
\beta \in \Omega^{n,\bu}(X,V^*)[n-1],
\eeqn
whose degree zero component is a field of Dolbeault type $(n,n-1)$.

The following will be the most important case for us.
We assume that $V \to X$ is a \textit{natural} holomorphic vector bundle; meaning a bundle built from the tangent bundle by taking duals, tensor products, direct sums, cohomological shifts etc..
This implies that the space of holomorphic sections of $V$ is equipped with the structure of a representation for the Lie algebra of holomorphic vector fields.
The action is by the Lie derivative.
This action extends in a natural way to a local $\cT_X$-module structure on the elliptic complex $\cE_V$ resolving the sheaf of holomorphic sections of $V$.

Summarizing, we have the following.

\begin{prop}
If $V \to X$ is a natural holomorphic vector bundle then the corresponding BV theory $\T^* [-1] \cE_V$ is equipped with a $\cT_X$-background defined by the coupling
\beqn\label{eqn:Tbackground}
S^{\cT}(\mu ; \beta,\gamma) = \int_X \beta (L_\mu \gamma) ,
\eeqn
where $L_\mu$ denotes the Lie derivative.
\end{prop}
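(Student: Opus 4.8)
The plan is to recognize $S^{\cT}$ as a special instance of the universal coupling $S^{\cL}$ of \eqref{eqn:SLbackground}. Recall that for \emph{any} local $L_\infty$-algebra $\cL$ and \emph{any} local $\cL$-module $\cE$, the functional $\int_M \beta(\alpha\cdot\gamma)$ on $\cL[1]\oplus\T^*[-1]\cE$ defines an element of $\op{Act}(\cL;\T^*[-1]\cE)$ of the correct cohomological degree which solves the $\cL$-equivariant classical master equation \eqref{eqn:equivariantmaster}, and this holds purely by virtue of $\cE$ being a module. Specializing to $\cL=\cT_X$, $\cE=\cE_V$, $\alpha=\mu$ and $\alpha\cdot\gamma=L_\mu\gamma$ turns this coupling verbatim into \eqref{eqn:Tbackground}. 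So the proposition reduces to the single assertion: \emph{if $V\to X$ is a natural holomorphic bundle, then $\cE_V=\Omega^{0,\bu}(X,V)$ is a local $\cT_X$-module with action the Lie derivative.}

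To prove that assertion, I would start from the classical fact that the sheaf of holomorphic sections of a natural bundle carries a canonical action of the sheaf of holomorphic vector fields by Lie derivative: each of the operations — dual, tensor product, direct sum, cohomological shift — out of which $V$ is assembled is functorial for the $\op{Vect}^{hol}$-action on $\T_X$, and the resulting $L_\mu$ is, for fixed holomorphic $\mu$, a first-order holomorphic differential operator on sections, which moreover is a holomorphic differential operator in the argument $\mu$. I would then extend $L$ to the Dolbeault resolutions by the evident formula: for $\mu=f_{I,i}\,\d\zbar_I\,\del_{z_i}\in\Omega^{0,\bu}(X,\T_X)$ and $\gamma\in\Omega^{0,\bu}(X,V)$, let $L_\mu\gamma$ combine the wedge product on the $\Wedge^\bu\br\T_X^*$-factors with the holomorphic Lie derivative on the $V$-factor. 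This is polydifferential in each slot, so it defines a local map $\cT_X\otimes\cE_V\to\cE_V$ of the right type, and it plainly depends on both arguments.

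It then remains to verify the two dg-module axioms. Compatibility with differentials, $\dbar(L_\mu\gamma)=L_{\dbar\mu}\gamma+(-1)^{|\mu|}L_\mu(\dbar\gamma)$, holds because $\dbar$ differentiates only in antiholomorphic directions while the holomorphic Lie derivative differentiates only in holomorphic directions, so the two (anti)commute apart from the term where $\dbar$ falls on the coefficient functions of $\mu$, which recombines precisely into $L_{\dbar\mu}$. The graded action axiom $L_{[\mu,\nu]}\gamma=L_\mu L_\nu\gamma-(-1)^{|\mu||\nu|}L_\nu L_\mu\gamma$, with $[-,-]$ the dg bracket on $\cT_X$ from section \ref{s:vf}, reduces to the classical identity $L_{[\mu,\nu]}=[L_\mu,L_\nu]$ for honest holomorphic vector fields; the extension to the Dolbeault complex only introduces the Koszul signs produced when $\d\zbar$-factors are permuted past one another, and the signs built into the bracket formula of section \ref{s:vf} are exactly those for which the identity persists. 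A more conceptual route to both axioms is to pass to $\infty$-jets: naturality of $V$ exhibits $j^\infty\cE_V$ as the bundle of $\lie{vect}(n)[[\zbar,\d\zbar]]$-modules associated to the formal $\lie{gl}_n$-representation determined by $V$, so the axioms follow from the classical statements for $\lie{vect}(n)$ together with the quasi-isomorphism $j_0^\infty\cT(\C^n)\simeq\lie{vect}(n)$ of the Lemma of section \ref{s:vf}.

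I expect the only genuine obstacle to be this sign bookkeeping in the graded action axiom on the Dolbeault side; the reduction to \eqref{eqn:SLbackground}, the naturality input, and the locality of the action are all formal. The claim that $S^{\cT}$ has cohomological degree zero is then immediate from the shapes of the complexes: the integral picks out the summand on which the antiholomorphic form degrees $p_\mu,q_\beta,q_\gamma$ of $\mu,\beta,\gamma$ total $n$, and on that summand the cohomological degrees $p_\mu-1$, $q_\beta-(n-1)$ and $q_\gamma$ add to zero.
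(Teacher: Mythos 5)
Your proposal is correct and follows exactly the paper's route: the paper likewise obtains the statement as the specialization of the general fact that any local $\cL$-module yields a classical $\cL$-background via the coupling \eqref{eqn:SLbackground}, combined with the observation that naturality of $V$ gives the Lie-derivative action of holomorphic vector fields, extended to the Dolbeault resolution, the structure of a local $\cT_X$-module. The only difference is that you spell out the dg-module axioms and the degree count, which the paper asserts without detail.
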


In this context we can use \ref{thm:owen} to obtain a geometric description of the obstruction to this lifting to an inner $\cT_X$-background.

\subsection{Riemann--Roch theorem and anomalies}

Let $\cM$ denote the moduli stack of complex $n$-manifolds.
For the purposes of this note we will not need a precise model for this moduli space, but we will use the following features:
\begin{enumerate}
\item There is a universal complex manifold $\pi \colon \cX \to \cM$ whose fiber over an $n$-manifold $X \in \cM$ is $X$ itself.
\item Any natural bundle (one built from the holomorphic tangent bundle by taking duals, tensor products, direct sums etc.) defines a vector bundle on the universal $n$-manifold.
For example, there is a universal tangent bundle $\cT_{univ} \to \cX$ whose restriction to a complex $n$-manifold $X$ yields $\T_{X}$.
\item A model for the tangent complex of $\cM$ at a fixed complex $n$-manifold $X$ is the dg Lie algebra
\beqn
\mathbb{T}_X \cM \simeq \cT_X = \Omega^{0,\bu}(X, \T_X) .
\eeqn
\end{enumerate}

The last item is a formal version of the formalism of Kodaira and Spencer which says that the formal moduli problem for deforming complex structures is controlled by the dg Lie algebra $\cT_X$.
This means that a formal deformation of a complex $n$-manifold is a Beltrami differential
\beqn
\mu \in \Omega^{0,1}(X, \T_X)
\eeqn
which satisfies the Maurer--Cartan equation
\beqn\label{eqn:beltramimc}
\dbar \mu + \frac12 [\mu,\mu] = 0.
\eeqn
The deformed complex $n$-manifold has Dolbeault complex
\beqn
\left(\Omega^{0,\bu}(X) , \dbar + \mu\right) .
\eeqn
In other words, a formal deformation has the effect of modifying the $\dbar$ operator as $\dbar \rightsquigarrow \dbar + \mu$.
The condition that this squares to zero is precisely the Maurer--Cartan equation.

Now suppose that $V \to \cX$ is a universal holomorphic vector bundle.
We then obtain a family of elliptic complexes $\cE_{V}$ over $\cX$ whose fiber over a complex $n$-manifold $X \in \cM$ is the elliptic complex $\cE_{X,V}$ from \eqref{eqn:dolbeaultelliptic}.
To get a family over the moduli of complex structures $\cM$ we take the (derived) pushforward along $\pi \colon \cX \to \cM$ to get the family of elliptic complexes $\RR \pi_* \cE_{V}$ over $\cM$.
The determinant of this virtual bundle is a special case of the determinant, or anomaly, line bundle \cite{Freed, Quillen} associated to a family of $\dbar$-operators parameterized by a moduli of complex structures.

Now we consider the formal completion of $\cM$ at $X$.
By the expectation (3) above, the formal moduli space $\cM^{\;\Hat{}}_X$ is described by the dg Lie algebra $\cT_X$:
\beqn
\cM^{\;\Hat{}}_X \simeq \text{B} \cT_X .
\eeqn
The resulting family $\RR \pi_* \cE_V$ over the formal completion corresponds to a $\cT_X$-representation which is precisely encoded by the $\cT_X$-background \eqref{eqn:Tbackground}.

The consequence of theorem \ref{thm:owen} is that the one-loop anomaly for an inner $\cT_X$-background is the first Chern class of the virtual bundle $\RR \pi_* \cE_V$.

\begin{prop}
Let $V$ be a natural holomorphic vector bundle and consider the resulting $\cT_X$-background for the BV theory $\T^* [-1] \cE_{X,V}$.
When $X = \C^n$, the anomaly to lifting this to an inner $\cT_X$-background is given by the class
\beqn
\left. \op{Td} \cdot \ch(V)\right|_{2n+2} \in H^{2n+2} (BU(n)) \simeq H^{2n+1} (\lie{vect}(n)) \hookrightarrow H^1_{loc}(\cT(X)) .
\eeqn
\end{prop}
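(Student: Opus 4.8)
The strategy is to combine Theorem~\ref{thm:owen} with a family index computation. By Theorem~\ref{thm:owen}, the anomaly to lifting $S^{\cT}$ to an inner $\cT_X$-background for the free BV theory $\T^*[-1]\cE_V$ is the trace of the $\cT_X$-action on $\T^*[-1]\cE_V = \cE_V \oplus \cE_V^![-1]$. Unwinding the doubling, this trace is (a shift of) the first Chern class of the virtual bundle $\RR\pi_*\cE_V$ over the formal moduli space $\cM^{\;\Hat{}}_X \simeq \mathrm{B}\cT_X$, as explained in the discussion preceding the proposition; so the computation reduces to identifying $c_1$ of this family of $\dbar$-operators. The plan is to apply the Grothendieck--Riemann--Roch (equivalently, the family index / local index) theorem: the Chern character of $\RR\pi_*\cE_V$ is $\pi_*\!\big(\mathrm{Td}(\cT_{\pi})\cdot \ch(V)\big)$, where $\cT_\pi$ is the relative tangent bundle of $\pi\colon\cX\to\cM$. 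Taking the degree-one (i.e.\ $c_1$) component and specializing to $X=\C^n$ — where the relative geometry is captured by the universal rank-$n$ bundle classified by $X_n \to \mathrm{sk}_{2n}BGL(n)$, per Theorem~\ref{thm:fuks} and the surrounding discussion — produces the asserted class $\left.\mathrm{Td}\cdot\ch(V)\right|_{2n+2}$ in $H^{2n+2}(BU(n))$, matched with $H^{2n+1}(\lie{vect}(n))$ via Fuks's theorem and then with $H^1_{loc}(\cT(\C^n))$ via Theorem~\ref{thm:bwgf}.

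In more detail, I would proceed in the following steps. First, record that for the free BV theory $\T^*[-1]\cE_V$ the obstruction of Theorem~\ref{thm:owen} is literally a one-loop determinant, so it equals $c_1$ of the determinant line of the family $\RR\pi_*\cE_V$; since the fibers are Dolbeault complexes, this is the Quillen/Freed determinant line of a family of $\dbar$-operators (\cite{Freed,Quillen}). Second, invoke GRR for $\pi\colon\cX\to\cM$ to get $\ch(\RR\pi_*\cE_V)=\pi_*(\mathrm{Td}(\cT_\pi)\cdot\ch(V))$; because the classical $\cT_X$-background sees only the formal neighborhood of $X$, it suffices to work with the formal family, where integration along the fiber $\pi_*$ has already been built into the identification of local cochains with reduced Lie-algebra cochains (the shift by $[2n]$ in Theorems~\ref{thm:bwgf} and the mixed-anomaly theorem is exactly this fiber integration). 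Third, specialize $X=\C^n$: the relative tangent bundle is the universal $\cT_{univ}$, classified over $\mathrm{sk}_{2n}BGL(n)$, so $\mathrm{Td}(\cT_\pi)$ and $\ch(V)$ become universal characteristic classes in $H^\bu(BU(n))$; the degree-one part of the anomaly picks out the degree-$(2n+2)$ component $\left.\mathrm{Td}\cdot\ch(V)\right|_{2n+2}$. Fourth, transport this class across the isomorphisms $H^{2n+2}(BU(n))\cong H^{2n+1}(\lie{vect}(n))\hookrightarrow H^1_{loc}(\cT(\C^n))$ supplied by Theorem~\ref{thm:fuks} and Theorem~\ref{thm:bwgf}, which is exactly the chain of identifications in the statement.

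The main obstacle is the bookkeeping of degree shifts, dualization, and orientation/sign conventions in matching ``trace of the $\cT_X$-action on $\T^*[-1]\cE_V$'' with ``$c_1$ of $\RR\pi_*\cE_V$.'' Concretely: the cotangent doubling replaces $\cE_V$ by $\cE_V\oplus\cE_V^![-1]$, and one must check that the trace over the doubled complex reproduces $c_1(\RR\pi_*\cE_V)$ (not its negative, and not $c_1$ of the doubled family, which would vanish by Serre duality pairing $\cE_V$ with $\cE_V^!$ up to a sign) — the key point being that the $\cL$-module structures on $\cE_V$ and on $\cE_V^!$ are not independent, so the relevant trace is that of a single $\dbar$-family rather than a symmetric pair. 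Equally delicate is verifying that the fiber-integration $\pi_*$ appearing in GRR is the \emph{same} map as the $[2n]$-shift/descent map used to define $H^1_{loc}$ in Theorem~\ref{thm:bwgf}; once that is pinned down, extracting the degree-$(2n+2)$ component of $\mathrm{Td}\cdot\ch(V)$ and reading off the coefficients is routine. For the argument itself I would cite \cite{GWkm,BWgf} for the parallel computations and \cite{CG2,GwilliamThesis} for the trace formula, and present the GRR step as the substantive input, leaving the index-theoretic identity and the explicit class extraction as standard.
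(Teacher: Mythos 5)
Your proposal follows essentially the same route as the paper: the paper's (quite terse) argument is precisely the chain "Theorem~\ref{thm:owen} identifies the anomaly with the trace of the $\cT_X$-action, which is $c_1$ of the virtual bundle $\RR\pi_*\cE_V$ over the formal moduli of complex structures, computed by Grothendieck--Riemann--Roch and transported into $H^1_{loc}(\cT(\C^n))$ via Theorems~\ref{thm:fuks} and~\ref{thm:bwgf}." Your additional attention to the cotangent doubling and to matching the GRR fiber integration with the $[2n]$ descent shift fills in bookkeeping the paper leaves implicit, but it is not a different argument.
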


The consequence of this result in the complex one-dimensional case is hopefully familiar.
The ghost system for the bosonic string is the free theory based off of the complex $\cE_{\Sigma, \T_\Sigma [1]}$, where $\T_\Sigma[1]$ is the holomorphic tangent bundle shifted down by one.
The resulting anomaly in terms of universal Chern characters is
\beqn
\Td \cdot \ch(\T[1])|_4 = - \left(1 + \frac12 c_1 + \frac{1}{12} c_1^2\right)\left(1 + c_1 + \frac12 c_1^2\right)|_4 = - \frac{13}{12} c_1^2 .
\eeqn
The typical normalization for the generator of the Virasoro central charge is $\frac1{24} c_1^2$, implying that the central charge of this free system is $c = -26$ as expected.
To compare, the anomaly associated to the complex free boson CFT is $\frac1{12} c_1^2$; thus recovering the fact that the obtained by gauging $13$ complex copies of the free boson by holomorphic vector fields is anomaly free.
%We will comment more on this in section \ref{??}.

Let's move to the complex $n=2$ dimensional situation.
As an example, consider the case where $V$ is the trivial rank one bundle.
Then the universal Riemann--Roch theorem implies that the anomaly is represented by the local cocycle corresponding to the universal class 
\beqn
\Td|_6 = - \frac{1}{24} \ch_1 \ch_2 + \frac{1}{48} \ch_1^3 .
\eeqn
In other words $a^{hol}(V = triv) = -1/24$, $c^{hol}(V = triv) = 1/48$ so that
\beqn
- 4 \pi^2 \Theta_{V = triv} = -\frac{1}{24} \int_{\C^2} \op{tr} ( J \mu ) \op{tr}(\del J \mu \del J \mu) + \frac{1}{48} \int_{\C^2} \op{tr}(J \mu) \op{tr}(\del J \mu) \op{tr}(\del J \mu) .
\eeqn

We can generalize this example to an arbitrary natural line bundle which we will write as $V = K^\lambda$ for $\lambda$ a rational number, where $K$ is the canonical bundle on~$\C^2$.
We record the result here.

\begin{lem}\label{lem:grrfree}
There are the following expressions for the holomorphic central charges associated to the free holomorphic theory $\T^* [-1] \cE_{K^{\lambda}}$:
\beqn
a^{hol} (K^\lambda) = \frac{1}{24} r \quad \text{and} \quad c^{hol}(K^\lambda) = - \frac{1}{48} r^3 ,
\eeqn
where $r = 2 \lambda -1$.
\end{lem}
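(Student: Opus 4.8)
The plan is to specialize the universal Grothendieck--Riemann--Roch computation of the preceding proposition to the natural line bundle $V = K^{\lambda}$ on $X = \C^2$ and then rewrite the resulting degree-six universal class in the basis $\{\ch_1\ch_2,\ch_1^3\}$ of \eqref{eqn:achol}. Here $K=\det\T^*$ is the canonical bundle and $K^{\lambda}$ is understood through its formal first Chern class; the only input needed is the proposition asserting that the inner $\cT_X$-anomaly of $\T^*[-1]\cE_{X,V}$ is $\Td\cdot\ch(V)\big|_{2n+2}$, together with Theorem \ref{thm:bwgf} and Proposition \ref{prop:holac} identifying this universal class with a local central charge and fixing the normalization.

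Concretely I would proceed as follows. Write $c_1 = c_1(\T)$, $c_2 = c_2(\T)$. Since $c_1(K) = -c_1$ we have $c_1(K^{\lambda}) = -\lambda c_1$, hence
\[
\ch(K^{\lambda}) = e^{-\lambda c_1} = 1 - \lambda c_1 + \frac{\lambda^2}{2}c_1^2 - \frac{\lambda^3}{6}c_1^3 ,
\]
while the Todd class of the rank-two universal tangent bundle expands, in degrees $\le 6$, as $\Td(\T) = 1 + \tfrac12 c_1 + \tfrac1{12}(c_1^2+c_2) + \tfrac1{24}c_1 c_2$, so $\Td(\T)\big|_6 = \tfrac1{24}c_1 c_2$. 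Multiplying and extracting the degree-six part gives
\begin{align*}
\Td(\T)\cdot\ch(K^{\lambda})\big|_6 &= \frac{1}{24}c_1 c_2 - \frac{\lambda}{12}(c_1^2+c_2)c_1 + \frac{\lambda^2}{4}c_1^3 - \frac{\lambda^3}{6}c_1^3 \\
&= \left(\frac{1}{24} - \frac{\lambda}{12}\right)c_1 c_2 + \left(\frac{\lambda^2}{4} - \frac{\lambda}{12} - \frac{\lambda^3}{6}\right)c_1^3 .
\end{align*}
Finally I would substitute $\ch_1^3 = c_1^3$ and $c_1 c_2 = \tfrac12\ch_1^3 - \ch_1\ch_2$ (the inverse of $\ch_1\ch_2 = \tfrac12(c_1^3 - 2c_1 c_2)$). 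The coefficient of $\ch_1\ch_2$ becomes $\tfrac{\lambda}{12} - \tfrac1{24} = \tfrac{2\lambda-1}{24}$, and the coefficient of $\ch_1^3$ becomes $\tfrac1{48} - \tfrac{\lambda}{8} + \tfrac{\lambda^2}{4} - \tfrac{\lambda^3}{6}$, which factors as $-\tfrac1{48}(2\lambda-1)^3$. Comparing with \eqref{eqn:achol} and putting $r = 2\lambda-1$ yields $a^{hol}(K^{\lambda}) = \tfrac1{24}r$ and $c^{hol}(K^{\lambda}) = -\tfrac1{48}r^3$.

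None of the individual steps is difficult; the part that needs genuine care is keeping the normalization conventions straight, namely checking the overall scaling between the universal characteristic class produced by Theorem \ref{thm:owen} and the Riemann--Roch proposition on the one hand and the explicitly normalized local cocycle representatives $\Theta_{a^{hol}},\Theta_{c^{hol}}$ of Proposition \ref{prop:holac} on the other, together with the sign in $c_1(K) = -c_1$. As a sanity check I would specialize to $\lambda = 0$, the trivial line bundle with $r = -1$: the formulas then give $a^{hol} = -1/24$ and $c^{hol} = 1/48$, reproducing the relation $\Td\big|_6 = -\tfrac1{24}\ch_1\ch_2 + \tfrac1{48}\ch_1^3$ already recorded in the excerpt.
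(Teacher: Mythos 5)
Your computation is correct and is exactly the paper's argument: the paper's proof simply states the identity $\Td\cdot\ch(K^{\lambda})\big|_6 = \tfrac1{12}(\lambda-\tfrac12)\ch_1\ch_2 - \tfrac16(\lambda-\tfrac12)^3\ch_1^3$ as "straightforward to verify," and your expansion of $\Td(\T)\cdot e^{-\lambda c_1}$ followed by the change of basis $c_1c_2 = \tfrac12\ch_1^3 - \ch_1\ch_2$ is precisely that verification, with the coefficients agreeing after setting $r=2\lambda-1$. The $\lambda=0$ sanity check against $\Td\big|_6$ is a nice touch but nothing here departs from the paper's route.
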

\begin{proof}
This follows from the expression in universal Chern classes
\beqn
\left. \Td \cdot \ch (K^\lambda)\right|_6 = \frac{1}{12} \left(\lambda - \frac12\right) \ch_1 \ch_2 - \frac16 \left(\lambda - \frac12\right)^3 \ch_1^3 ,
\eeqn
which is straightforward to verify.
\end{proof}

Observe that under $\lambda \mapsto 1- \lambda$ one has $r \mapsto - r$.
In particular, $a^{hol}$ and $c^{hol}$ satisfy 
\beqn
a^{hol}(K^\lambda) = - a^{hol} (K^{1-\lambda}), \quad c^{hol}(K^\lambda) = - c^{hol} (K^{1-\lambda}) .
\eeqn

\subsection{Computing the anomaly using the holomorphic gauge}

Recall that the free part of a holomorphic theory is determined by the $\dbar$ operator associated to some holomorphic vector bundle.
In \cite{BWhol} it is shown that any holomorphic theory admits a natural gauge where the gauge fixing operator is $Q^{GF} = \dbar^\star$.
The corresponding renormalization is well-behaved.
On flat space $\C^n$ the main results concerning renormalization for holomorphic theories can be summarized as follows.

\begin{thm}[\cite{BWhol}]
\label{thm:bw}
Suppose $(\cE,\omega,S = S_{free} + I)$ is a holomorphic theory on $\C^n$ and let $S[L]$ denote the effective action, at scale $L$, constructed using the gauge fixing operator $Q^{GF} = \dbar^\star$ whose corresponding propagator is $P_{\epsilon <L}$. 
\begin{enumerate}
\item To first-order in $\hbar$ the effective action is finite
\beqn
I[L] = \lim_{\epsilon \to 0} W(P_{\epsilon}^L , I) \mod \hbar^2
\eeqn
In particular, all one-loop counterterms are identically zero.
\item Suppose the free part of the action is of the form $S_{free} (\Phi) = \omega(\Phi, \dbar \Phi)$.
The anomaly to solving the quantum master equation to first-order in $\hbar$ can be expressed as the sum over wheels with exactly $(d+1)$-vertices. 
Explicitly, as a local functional this anomaly is
\beqn\label{eqn:generalanomaly}
\Theta_{1-loop} = \lim_{L \to 0} \lim_{\epsilon \to 0} \sum_{\Gamma \in {\rm Wheel}_{n+1}, e} W_\Gamma(P_{\epsilon < L}, K_\epsilon,I) .
\eeqn
where the sum is over all wheels with $(n+1)$-vertices and distinguished edges thereof.
Here $K_\epsilon$ is the heat kernel associated to the generalized Laplacian built using the gauge fixing operator $\dbar^\star$.
\end{enumerate}
\end{thm}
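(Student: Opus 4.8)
The plan is to reduce both parts to explicit Gaussian estimates, exploiting that the holomorphic gauge $Q^{GF}=\dbar^\star$ makes the propagator and heat kernel on $\C^n$ completely explicit. First I would record the analytic input: the generalized Laplacian attached to $\dbar^\star$ on flat space acts componentwise as the scalar heat operator, so the heat kernel is $K_t(z,w) = (4\pi t)^{-n}e^{-|z-w|^2/4t}$ tensored with the Dolbeault symbol data, and the propagator is $P_{\epsilon<L}(z,w)=\int_\epsilon^L dt\,(\dbar^\star_z\otimes 1)K_t(z,w)$. Two structural features drive everything: (a) $K_\epsilon$ and $P_{\epsilon<L}$ carry uniformly bounded antiholomorphic form degree in each variable ($n$, resp. $n-1$), and (b) a holomorphic interaction is built only from the holomorphic derivatives $\partial_{z_i}$ wedged with antiholomorphic forms, so upon integration against Lebesgue measure each interaction vertex must absorb the top antiholomorphic form $d\bar{z}_1\wedge\cdots\wedge d\bar{z}_n$.

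For part (1) I would analyse an arbitrary connected one-loop graph $\Gamma$ contributing to $W(P_\epsilon^L,I)$ at order $\hbar$ --- necessarily a wheel with trees attached. Fixing one vertex by translation invariance, its weight is a configuration-space integral whose integrand is a product of Gaussians from the loop heat kernels, polynomial prefactors from the holomorphic vertex derivatives, and antiholomorphic form data. The bookkeeping of (a)--(b) contracts the form data and leaves an absolutely convergent scalar integral: the Gaussians control the long-distance regime, while the short-distance singularity has strictly negative superficial degree once the form constraint is imposed, because holomorphic interactions are \emph{soft}. Hence $\lim_{\epsilon\to 0}W(P_\epsilon^L,I)$ exists for every such $\Gamma$, and no one-loop counterterms are needed.

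For part (2) I would start from the general one-loop anomaly formula (see \cite{CG2}): expanding $(\dbar+\hbar\triangle_L)e^{I[L]/\hbar}$, the obstruction $\Theta_{1-loop}$ is the $L\to 0$ limit of a sum of wheel weights $W_\Gamma(P_{\epsilon<L},K_\epsilon,I)$ in which $\triangle_L$ has glued one pair of legs of the loop with the heat kernel $K_\epsilon$ in place of a propagator; part (1) ensures this sum is finite in $\epsilon$ before that limit. I would then rerun the form-degree count on these graphs: going once around a wheel with $k$ vertices one meets $k-1$ propagators and one $K_\epsilon$-insertion, and matching the total antiholomorphic form degree against the $n$ demanded at each of the $k$ vertices pins $k=n+1$. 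Concretely, a wheel with fewer than $n+1$ vertices cannot assemble a nonzero density and so has vanishing weight, while a wheel with more than $n+1$ vertices, after rescaling the configuration variables by $\sqrt{L}$, carries a strictly positive power of $L$ and dies as $L\to 0$. What survives is precisely the sum over ${\rm Wheel}_{n+1}$ with distinguished edge, which is \eqref{eqn:generalanomaly}; for a free theory one checks this reproduces the trace of Theorem~\ref{thm:owen}.

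The main obstacle is this last step: establishing simultaneously that the $(\ge n+2)$-vertex wheels vanish in the limit, that the $(n+1)$-vertex wheels have a finite, nonzero double limit, and that this limit is a genuine local functional rather than a nonlocal remainder. This is where the explicit Gaussian form of the holomorphic-gauge kernels is indispensable: after rescaling, each surviving weight becomes a finite-dimensional Gaussian moment which can be evaluated and recognised as a local density in the gravitational (and gauge) background data, of the type displayed in Proposition~\ref{prop:holac}. The uniform convergence estimate underlying part (1), while more routine, still requires care to make the superficial-degree bound precise.
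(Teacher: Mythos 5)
The paper does not prove this theorem itself --- it is imported verbatim from \cite{BWhol} --- and your sketch accurately reconstructs the strategy of that reference: explicit Gaussian kernels in the $\dbar^\star$-gauge, reduction of one-loop graphs to wheels, and the antiholomorphic form-degree count that kills wheels with fewer than $n+1$ vertices while the $\sqrt{L}$-rescaling kills those with more. Modulo an immaterial normalization of the heat kernel, this is essentially the same approach, with the genuinely hard analytic estimates correctly identified rather than elided.
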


Let's return to the context of the previous section.
We consider the free BV theory~$\cE_{V}$ on $X = \C^n$, where $V$ is some natural vector bundle, in the $\cT$-background~$S^{\cT}(\xi;\beta,\gamma)$ from~\eqref{eqn:Tbackground}.
In this case we can use the trivialization $V \simeq \C^n \times V_0$ to identify the $\dbar$-operator for $V$ with $\dbar \otimes \id$ where $\dbar$ is simply the ordinary $\dbar$-operator on $\C^n$.
Correspondingly, we choose the gauge fixing operator is $\dbar^\star \otimes \id$ where $\dbar^\star$ is Hodge adjoint of $\dbar$ constructed using the flat hermitian metric on $\C^n$. 
Explicitly, the heat kernel is given by
\beqn
K_L(z,w) = \frac{1}{(2\pi i L)^n} e^{-|z-w|^2/4L} \otimes \id_V \in \cE_{V} \Hat{\otimes} \cE_V [1] 
\eeqn
where $\id_V \in V_0 \otimes V_0^*$ represents the identity morphism $V_0 \to V_0$.
The propagator is 
\beqn
P_{\epsilon<L} (z,w) = \int_{T = \epsilon}^L \d T \, (\dbar^\star \otimes \id) \, K_T(z,w) .
\eeqn

Since $\T^*[-1]\cE_{V}$ is a free BV theory, item (1) of theorem \ref{thm:bw} implies that 
\beqn
S^{\cL} [L] \define \lim_{\epsilon \to 0} W(P_{\epsilon}^L , I) \mod \hbar^2 
\eeqn
is a one-loop quantum $\cT$-background.
Further, item (2) implies that the obstruction to lifting this to a one-loop inner $\cT$-background is
\beqn
\Theta = \lim_{L \to 0} \lim_{\epsilon \to 0} \sum_{\Gamma \in {\rm Wheel}_{n+1}, e} W_\Gamma(P_{\epsilon < L}, K_\epsilon,I) ,
\eeqn
which automatically determines a degree one cocycle in $\cloc^\bu(\cT(\C^n))$.

We specialize to complex dimension $n=2$.
In this case, the anomaly $\Theta$ is expressed as a sum over wheels with three vertices.
Since the effective action is $GL(2,\C)$-invariant it is straightforward to see that based on symmetry the anomaly local cocycle $\Theta$ is cohomologous to a linear combination of the cocycles $\Theta_a^{hol}$ and $\Theta_c^{hol}$ from proposition \ref{prop:holac}.
That is
\beqn
- 4 \pi^2 \Theta_V = a^{hol}(V) \Theta_{a^{hol}} + c^{hol}(V) \Theta_{c^{hol}}
\eeqn
for some coefficients $a^{hol}(V),c^{hol}(V)$.
Instead of computing these coefficients explicitly using the effective action, which appears to be quite cumbersome, we will refer to the Riemann--Roch theorem as outlined in the previous subsection to fix them.
%In the appendix we have worked out the coefficient using the effective action in the case~$n=1$.
%\brian{fix this citation}

\subsection{Riemann--Roch theorem for mixed anomalies}
\label{sec:mixed2}

So far we have only discussed anomalies for the action of holomorphic vector fields on free theories within the BV formalism.
These anomalies are avatars of gravitational, or conformal, anomalies as portrayed in the physics literature.
For non-free theories, like gauge theories, there are also possible \textit{internal} anomalies, like gauge anomalies, which must be trivial in order for quantization to exist.

In this section we consider anomalies present when holomorphic gauge theory is put in a $\cT$-background.
In addition to the holomorphic gravitational anomalies we have been focused on thus far, there will also be pure gauge and mixed anomalies which must be trivialized.
To make the discussion as parallel as possible with the case of pure diffeomorphism anomalies, we consider the case that both diffeomorphisms and gauge symmetries are considered as background fields.
From this we will be able to deduce mixed anomalies for interacting gauge theories.

Let $G$ be a Lie group.
Let $\cM_G$ denote the moduli of pairs $(X,P)$ of a complex $n$-manifold $X$ together with a principal holomorphic $G$-bundle $P$ on $X$.

As in the case of complex $n$-manifolds, we will not develop a precise model for this moduli space, but we will use the following expected features:
\begin{enumerate}
\item There is a universal moduli $\pi \colon \cX_G \to \cM_G$ whose fiber over a pair $(X,P) \in \cM_G$ is the principal bundle $P \to X$ itself.
\item Any natural bundle (one built from the holomorphic tangent bundle by taking duals, tensor products, direct sums etc.) together with a representation of $G$ defines a vector bundle on the universal $n$-manifold.
%For example, there is a universal tangent bundle $\cT_{univ} \to \cX$ whose restriction to a complex $n$-manifold $X$ yields $\T_{X}$.
\item A model for the tangent complex of $\cM$ at the pair $(X,P_{triv})$, where $P_{triv}$ is the trivial $G$-bundle, is the semi-direct product dg Lie algebra
\beqn
\mathbb{T}_X \cM_G \simeq = \Omega^{0,\bu}(X, \T_X) \ltimes \Omega^{0,\bu}(X) \otimes \lie{g} .
\eeqn
This dg Lie algebra is a resolution of the sheaf of holomorphic sections of the sheaf of Lie algebras $\cT^{hol}_X \ltimes \cO^{hol} \otimes \lie{g}$.
\end{enumerate}

The last point says that a formal deformation of a pair $(X,P_{triv})$ is a Beltrami differential
$\mu \in \Omega^{0,1}(X, \T_X)$, satisfying \eqref{eqn:beltramimc}, as before together with a $(0,1)$ connection 
\beqn
A \in \Omega^{0,1}(X) \otimes \lie{g}
\eeqn
which together satisfy the Maurer--Cartan equation
\beqn
(\dbar + \mu) A + \frac12 [A,A] = 0.
\eeqn
The deformed complex $n$-manifold has Dolbeault operator $\dbar + \mu$ and the deformed principal bundle has Dolbeault operator $\dbar + A$.

\begin{prop}
\label{prop:mixedRR}
Let $V$ be a natural holomorphic vector bundle equipped with a $G$-action and consider the resulting $\cT_X \ltimes \Omega^{0,\bu}_X \otimes \lie{g}$-background for the BV theory $\T^* [-1] \cE_{X,V}$.
When $X = \C^n$, the anomaly to lifting this to an inner $\cT_X \ltimes \Omega^{0,\bu}_X \otimes \lie{g}$-background is given by the class
\beqn
\left. \op{Td} \cdot \ch_{GL(n) \times G} (V)\right|_{2n+2} \in H^{2n+2} (BGL(n) \times BG) \hookrightarrow H^1_{loc}(\cT(X) \ltimes \Omega^{0,\bu}(X) \otimes \lie{g}) ,
\eeqn
where $\ch_{GL(n) \times G} (V)$ is the $GL(n) \times G$-character of the representation $V$.
\end{prop}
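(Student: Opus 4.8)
The plan is to run the same argument that proves the pure holomorphic diffeomorphism statement above, now with the local Lie algebra $\cL = \cT_X \ltimes \Omega^{0,\bu}_X\otimes\lie{g}$ in place of $\cT_X$, and then to invoke the Grothendieck--Riemann--Roch theorem for the universal family over $\cM_G$. First I would note that $\cE_V = \Omega^{0,\bu}(X,V)$ is a genuine local $\cL$-module: the $\cT_X$-factor acts by the Lie derivative (since $V$ is natural), the $\Omega^{0,\bu}_X\otimes\lie{g}$-factor acts through the pointwise $\lie{g}$-action coming from the $G$-structure on $V$, and these are compatible with the semidirect product bracket. Consequently $S^{\cL}$ of the form $\int_X\beta(\alpha\cdot\gamma)$ defines a classical $\cL$-background for the free BV theory $\T^*[-1]\cE_{X,V}$, and by theorem \ref{thm:owen} the obstruction to lifting it to an inner quantum $\cL$-background is the trace of the action of $\cL$ on $\T^*[-1]\cE_{X,V}$.

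Next I would identify this trace geometrically, exactly as in the pure-diffeomorphism case. Under the moduli-theoretic dictionary recorded above, the formal completion of $\cM_G$ at $(X,P_{triv})$ is $\text{B}\cL$, and the family $\RR\pi_*\cE_V$ of Dolbeault complexes over $\cM_G$ restricts on this formal neighborhood to precisely the $\cL$-representation encoded by $S^{\cL}$. Hence theorem \ref{thm:owen} says the one-loop anomaly is the first Chern class of the virtual bundle $\RR\pi_*\cE_V$, now living in $H^1_{loc}(\cL(\C^n))$. The universal Grothendieck--Riemann--Roch theorem for $\pi\colon\cX_G\to\cM_G$ computes
\beqn
\ch\!\left(\RR\pi_*\cE_V\right) = \pi_*\!\left(\op{Td}(\T_\pi)\cdot\ch(V)\right),
\eeqn
where $\T_\pi$ is the relative holomorphic tangent bundle and $\pi_*$ is integration over the $n$-dimensional fiber; extracting the first Chern class (degree-two) component therefore picks out the degree $2n+2$ component of $\op{Td}\cdot\ch(V)$ on the base.

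Finally I would specialize to $X=\C^n$ with $P_{triv}$ the trivial bundle. Since the flat base $\C^n$ contributes nothing beyond its degree-zero de Rham cohomology, every characteristic class in sight is pulled back along the classifying map for the $GL(n)\times G$-structure, so $\op{Td}(\T_\pi)$ becomes the universal Todd class in $H^\bu(BGL(n))$ and $\ch(V)$ becomes the $GL(n)\times G$-equivariant Chern character $\ch_{GL(n)\times G}(V)$ of the representation $V$ in $H^\bu(BGL(n)\times BG)$. Combining this with the isomorphism $H^{2n+2}(BGL(n)\times BG)\simeq H^{2n+1}(\lie{vect}(n)\ltimes\lie{g}\otimes\cO_n)\hookrightarrow H^1_{loc}(\cT(X)\ltimes\Omega^{0,\bu}(X)\otimes\lie{g})$ established above (via Khoroshkin's theorem and the comparison of the Hochschild--Serre and Serre spectral sequences) yields the claimed formula $\left.\op{Td}\cdot\ch_{GL(n)\times G}(V)\right|_{2n+2}$.

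I expect the main obstacle to be the second step: justifying the Grothendieck--Riemann--Roch computation in the formal/derived moduli setting, equivalently showing that the local index-theoretic description of the trace --- the sum over wheels with $n+1$ vertices from theorem \ref{thm:bw}, whose vertices are now either Lie-derivative or gauge-action insertions --- reassembles into $\op{Td}\cdot\ch_{GL(n)\times G}(V)$ without performing that Feynman-diagram sum directly. As in the pure-diffeomorphism case one can sidestep the explicit computation: $GL(n)\times G$-invariance of the effective action forces the anomaly cocycle to be a universal characteristic class of the expected cohomological degree, and Riemann--Roch then fixes which one; the only remaining point is to match normalizations, which is routine and consistent with the $n=1,2$ checks already recorded.
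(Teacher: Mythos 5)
Your proposal is correct and follows essentially the same route the paper takes: couple the free theory to the semidirect product local Lie algebra via the Lie-derivative and pointwise $\lie{g}$-actions, identify the inner-background obstruction with the trace via theorem \ref{thm:owen}, interpret that trace as the first Chern class of $\RR\pi_*\cE_V$ over the formal moduli of pairs $(X,P)$, and evaluate it by Grothendieck--Riemann--Roch, landing in $H^{2n+2}(BGL(n)\times BG)$ through the Khoroshkin/Hochschild--Serre identification. Your closing remark about sidestepping the explicit wheel computation by $GL(n)\times G$-invariance plus Riemann--Roch is also exactly how the paper fixes the coefficients.
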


Let's consider a simple example when the complex dimension $n=1$ and $G = GL_1(\C) = \C^\times$.
Suppose that $V$ is the trivial super vector bundle $W[\eps] = W \oplus \Pi W$ where $W$ is some natural vector bundle on a curve.
In this case, the pure holomorphic gravitational anomaly for the BV theory $\T^* [-1] \cE_{\C,W[\eps]}$ vanishes.
So, the anomaly corresponds to an element in
\beqn
H^2(BGL_1) \otimes H^2(B G) \subset H^4(BGL_1 \times B G), 
\eeqn
which turns out to be proportional to $c_1 \til c_1$ where $c_1$ represents the `gravitational' first Chern class and $\til c_1$ is the background gauge $G=GL_1(\C)$ class.
Explicitly, as a local functional this class is represented by
\beqn
\int_\C J \mu \op{Tr}(\del A) 
\eeqn
where $\mu$ is a background holomorphic vector field and $A$ is a background gauge field.
This anomaly was found independently in \cite{GRWthf} where it is interpreted as the famous Calabi--Yau anomaly for the two-dimensional $B$-model.
Indeed, the holomorphic theory we are considering here is a holomorphic twist of the two-dimensional $\cN=2$ chiral multiplet valued in $W$.
The $GL_1(\C) \times GL_1(\C)$ represents, at the group level, the dilations and $R$-symmetry which remains inside of the two-dimensional $\cN=2$ superconformal algebra after twisting.

\subsection{Holomorphic gauge theories}

In this section we have only discussed anomalies for \textit{background} holomorphic diffeomorphism symmetry or \textit{background} holomorphic gauge symmetry where it suffices to just consider the underlying free theory.
For non-free theories, like gauge theories, there are also possible \textit{internal} anomalies, like gauge anomalies, which must be trivial in order for quantization to exist.
In this section we explain how holomorphic diffeomorphism symmetry mixes with possible internal gauge symmetries.

The idea is the following.
We start with a holomorphic gauge theory which has a background symmetry by holomorphic vector fields.
Before discussing the issue of quantizing this background symmetry, we first need to know that the gauge theory has no internal anomaly.
Once this internal anomaly is trivialized, we can then treat the underlying free theory as a theory with a background symmetry by $\cT \ltimes \Omega^{0,\bu} \otimes \lie{g}$.
We then treat the problem just as we did before with two caveats.
First, since we have already trivialized the pure gauge anomaly, it suffices to only look at anomalies which depend non-trivially on a holomorphic vector field.
Second, the terms in the anomaly which have a nontrivial dependence on the gauge field represent the anomaly to having a quantum $\cT$-background.
Once these are trivialized, we can then contemplate the anomaly to having a quantum inner $\cT$-background; this is represented by the terms in the anomaly which are independent of the gauge field.

Let's consider an explicit example.
Let $\lie{g}$ be a Lie algebra of a compact Lie group~$G$.
The basic holomorphic gauge theory we consider is holomorphic $BF$ theory which can be defined on any complex manifold $X$.
Globally, holomorphic $BF$ theory is the shifted cotangent bundle to the moduli stack of holomorphic $G$-bundles on $X$.
Within the perturbative BV formalism the fields consist of pairs 
\begin{align*}
A & \in \Omega^{0,\bu}(X, \lie{g})[1] \\
B & \in \Omega^{n,\bu}(X, \lie{g}^*)[n-2]
\end{align*}
and the action is of standard $BF$-type: $S_{BF} = \int_X B F_A$.
This theory enjoys a classical backround $\cT$-symmetry defined by
\beqn
S^\cT = \int_X B (L_\mu A) .
\eeqn

Holomorphic $BF$ theory suffers from an internal gauge anomaly.
As usual, we focus just on the case $X = \C^n$.
Using theorem \ref{thm:bw} one can show the following. 

\begin{prop}
\label{prop:puregauge}
Holomorphic $BF$ theory on $\C^n$ is anomalous at one-loop in perturbation theory.
The local functional representing this anomaly is
\beqn
- \frac{1}{(n+1)!}\int_{\C^n} \op{Tr}_{ad} \left(A (\del A)^n\right) ,
\eeqn
where the trace denotes the trace is in the adjoint representation of $\lie{g}$ on itself.
\end{prop}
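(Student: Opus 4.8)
The plan is to compute the one-loop obstruction to the quantum master equation directly from Theorem~\ref{thm:bw}(2), which on $\C^n$ presents it, after the limits $\epsilon\to0$ and $L\to0$, as a finite sum over wheels with exactly $n+1$ trivalent vertices, one edge of each being distinguished. The only interaction is the cubic term of $S_{BF}=\int_{\C^n}BF_A$, namely $I=\tfrac12\int_{\C^n}B[A,A]$, which carries one $B$-leg and two $A$-legs, while the propagator attached to $Q^{GF}=\dbar^\star$ pairs a $B$ with an $A$. A wheel with $n+1$ vertices has $n+1$ internal edges forming a single cycle, so all $n+1$ of the $B$-legs and $n+1$ of the $2(n+1)$ $A$-legs are internal, leaving precisely one external $A$-leg at each vertex. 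Hence every contributing graph yields a local functional that is $(n+1)$-linear in $A$, consistent with the stated answer $\op{Tr}_{ad}(A(\del A)^n)$.

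Next I would extract the Lie-algebraic factor. Each vertex contributes the structure constants of $\lie{g}$, and since the propagator is proportional to the identity in colour, contracting the $B$-leg of one vertex against an $A$-leg of the next around the cycle threads the structure constants into a single cyclic trace --- the trace in the adjoint representation of a product of $n+1$ generators. Summing over cyclic relabellings of the $n+1$ vertices and over the choice of distinguished edge, and using that $A$ is the same field at every vertex, assembles the external legs into the invariant polynomial $X\mapsto\op{Tr}_{ad}(X^{n+1})$. This matches the expectation from Theorem~\ref{thm:owen}: the free part of holomorphic $BF$ theory is $\T^*[-1]\cE$ with $\cE=\Omega^{0,\bu}(\C^n,\lie{g})[1]$, the gauge symmetry acts through its linear, adjoint part up to the affine shift $A\mapsto A+\dbar c$ (which does not enter the loop), and the anomaly is the regularized trace of this action on $\Omega^{0,\bu}(\C^n,\lie{g})[1]\oplus\Omega^{n,\bu}(\C^n,\lie{g}^*)[n-2]$, which is proportional to $\op{Tr}_{ad}(X^{n+1})$.

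Then comes the analytic part of the wheel integral, built from the flat heat kernel $K_T(z,w)=(2\pi i T)^{-n}e^{-|z-w|^2/4T}\otimes\id$ and the propagator $P_{\epsilon<L}=\int_\epsilon^L\d T\,(\dbar^\star\otimes\id)K_T$ on the $n$ ordinary edges, with $K_\epsilon$ on the distinguished edge, taking $\epsilon\to0$ then $L\to0$. This is the same integral that governs the one-loop gauge anomaly of the $\beta\gamma$ system and the higher-dimensional Kac--Moody cocycle, performed in \cite{BWhol,GWkm} (see also \cite{CostelloBook}): after a Feynman-parameter rescaling of the $n+1$ time variables and the Gaussian integrations over the vertex positions, the surviving contribution carries exactly $n$ holomorphic derivatives $\del$ distributed over the external legs while the rest collapses to a numerical constant, and the antiholomorphic form degrees organise into an $(n,n)$-density $A(\del A)^n$. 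Collecting the $\tfrac12$ at each vertex, the automorphism factor of the wheel, the $n+1$ choices of distinguished edge, and the normalisation of the Gaussians produces the overall coefficient $-\tfrac1{(n+1)!}$.

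The main obstacle is precisely this last step: tracking the sign and the combinatorial constants so as to land on $-\tfrac1{(n+1)!}$ rather than a nearby rational, in particular verifying that the coincidence of the gauge field with the loop content and the symmetry of the cubic $BF$ vertex do not alter the normalisation relative to the $\beta\gamma$ case. To conclude, one notes the resulting cocycle is cohomologically nontrivial whenever $\op{Tr}_{ad}(X^{n+1})\neq0$: under the identification of $H^1_{loc}(\lie{g}\otimes\Omega^{0,\bu}(\C^n))$ with $\Sym^{n+1}(\lie{g}^*)^G$ from \cite{FHK,GWkm} it corresponds to this invariant polynomial, so holomorphic $BF$ theory with a nonabelian gauge algebra is genuinely anomalous at one loop.
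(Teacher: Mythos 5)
Your proposal is correct and follows essentially the same route as the paper, which itself only indicates "Using theorem \ref{thm:bw} one can show the following": the anomaly is the sum over wheels with $n+1$ cubic $BF$ vertices, the colour structure threads into $\op{Tr}_{ad}$, and the analytic wheel integral produces $A(\del A)^n$ with the normalisation $-\tfrac{1}{(n+1)!}$ consistent with the identification of the obstruction as $-\ch^G_{n+1}(\lie{g}^{ad})$ used later in Proposition \ref{prop:coefficients}. Your consistency check via Theorem \ref{thm:owen} is a sensible addition; the only part left unverified, the precise constant, is likewise not spelled out in the paper.
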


%The local functional representing the pure gauge anomaly is automatically a cocycle in the local complex 
%\beqn
%\cloc^\bu \left( \Omega^{0,\bu}(\C^n, \lie{g}) \right)
%\eeqn
%of cohomological degree one.
%In \cite{GWkm} an injective map of cochain complexes
%\beqn
%\Sym^{n+1}(\lie{g})^{G} \hookrightarrow \cloc^\bu \left( \Omega^{0,\bu}(\C^n, \lie{g}) \right) [1] .
%\eeqn
%The anomaly of holomorphic $BF$ theory is represented by the invariant polynomial
%\beqn
%X \in \lie{g} \mapsto \op{Tr}_{ad} (X^{n+1}) ,
%\eeqn
%which we will denote by $\op{ch}_{n+1}(\T_{B \lie{g}})$.
For example, in complex dimension one $n=1$, the holomorphic $BF$ theory associated to any semi-simple Lie algebra is anomalous.
In complex dimension two $n=2$ the anomaly for holomorphic $BF$ theory vanishes for complex semi-simple Lie algebras.
%(As is the case for twists of four-dimensional supersymmetric QCD, see section~\ref{sec:qcd}.)

We now turn to thinking of the underlying free theory as a theory with a background symmetry by $\cT \ltimes \Omega^{0,\bu} \otimes \lie{g}$.
Such anomalies to quantization will be labeled by certain elements in
\beqn
H^{2n+2} (BGL(n) \times BG) 
\eeqn
as in the last section.
The thing that distinguishes this discussion from the last section is that we are assuming that the component in $H^{2n+2}(BG)$, which we just summarized in the previous proposition, vanishes
Explicitly, this is the component $-\ch_{n+1}(\T_{B\lie{g}}) = - \ch_{n+1}^G(\lie{g}_{ad})$.

The following is a consequence of proposition \ref{prop:mixedRR}.
\begin{prop}\label{prop:Tgauge}
Suppose $\ch_{n+1}(\T_{B\lie{g}}) = 0$.
The anomaly to lifting $S^{\cT}$ to a quantum $\cT$-background is represented by the class
\begin{multline}
\label{eqn:Tgauge}
- \left. \op{Td} \cdot \op{ch}^G(\lie{g}^{ad}) \right|_{2n+2}^\star = - \sum_{j=1}^{n} \Td_{j} \ch_{n+1-j} (\T_{B \lie{g}}[1]) \\
\in \oplus_{j} H^{2j}(BGL(n)) \otimes H^{2n+2-2j}(BG) \subset H^{2n+2}(BGL(n) \times BG) 
\end{multline}
where the $\star$ indicates that we do not include the component $\Td_{2n+2}$.
\end{prop}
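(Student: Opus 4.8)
The plan is to recognize the underlying free theory of holomorphic $BF$ theory on $\C^n$ as an instance of the setup of Proposition~\ref{prop:mixedRR}, and then to read off the statement by decomposing the resulting characteristic class according to its bidegree in $H^\bu(BGL(n))\otimes H^\bu(BG)$.

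First I would record that the kinetic term of holomorphic $BF$ theory is $\int_{\C^n}B\,\dbar A$, so its underlying free BV theory is $\T^*[-1]\cE_V$ with $\cE_V=\Omega^{0,\bu}(\C^n,\lie{g})[1]$; in other words $V$ is the trivial holomorphic bundle with fiber $\lie{g}$, carrying its adjoint $G$-action and an odd cohomological shift, and its $\cT$-module structure is the Lie derivative on Dolbeault forms tensored with the identity on $\lie{g}$, so that $V$ is $GL(n)$-trivial. The classical background $S^{\cT}=\int_{\C^n}B(L_\mu A)$ extends to a classical $\cT\ltimes\Omega^{0,\bu}\otimes\lie{g}$-background by letting $\Omega^{0,\bu}\otimes\lie{g}$ act on $(A,B)$ by the adjoint and coadjoint actions. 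With these identifications Proposition~\ref{prop:mixedRR} applies and exhibits the anomaly to lifting this to an \emph{inner} $\cT\ltimes\Omega^{0,\bu}\otimes\lie{g}$-background as $\left.\op{Td}\cdot\ch_{GL(n)\times G}(V)\right|_{2n+2}$. Since $V$ is $GL(n)$-trivial with adjoint $G$-action and an odd shift, $\ch_{GL(n)\times G}(V)=-\,1\otimes\ch^G(\lie{g}^{ad})$, and hence this anomaly is $-\left.\op{Td}\cdot\ch^G(\lie{g}^{ad})\right|_{2n+2}$.

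Next I would expand $\op{Td}=\sum_{j\ge 0}\Td_j$ with $\Td_j\in H^{2j}(BGL(n))$, and $\ch^G(\lie{g}^{ad})=\sum_{k\ge 0}\ch^G_k(\lie{g}^{ad})$ with $\ch^G_0(\lie{g}^{ad})=\dim\lie{g}$, so that the class above is the sum over $j+k=n+1$ of the bihomogeneous pieces $-\Td_j\otimes\ch^G_k(\lie{g}^{ad})$, and then match these pieces against the various anomalies in play. The $j=0$ piece lies in $H^0(BGL(n))\otimes H^{2n+2}(BG)$ and equals $-\ch^G_{n+1}(\lie{g}^{ad})$, which by Proposition~\ref{prop:puregauge} is the internal gauge anomaly of holomorphic $BF$ theory; the hypothesis $\ch_{n+1}(\T_{B\lie{g}})=0$ is exactly the vanishing of this piece, and this is the condition that permits passing from the gauge theory to its free part regarded as a theory with a background $\cT\ltimes\Omega^{0,\bu}\otimes\lie{g}$-symmetry. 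The $k=0$ piece lies in $H^{2n+2}(BGL(n))\otimes H^0(BG)$ and equals $-\dim\lie{g}\cdot\Td_{n+1}$; being independent of the background gauge field, it represents---by the discussion preceding this proposition---the obstruction to a quantum \emph{inner} $\cT$-background, not the obstruction to a quantum $\cT$-background, and is exactly the component excised by the superscript~$\star$. The surviving pieces, with $1\le j\le n$, depend nontrivially on both the background vector field and the background gauge field, and these are precisely the obstruction to lifting $S^{\cT}$ to a quantum $\cT$-background; summing them and rewriting $\ch^G_{n+1-j}(\lie{g}^{ad})$ through the identification $\T_{B\lie{g}}[1]\simeq\lie{g}^{ad}$ yields the formula in the statement.

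The step I expect to be the main obstacle is this last matching: one must check that the tri-grading of $-\left.\op{Td}\cdot\ch^G(\lie{g}^{ad})\right|_{2n+2}$ into its pure-$BG$, pure-$BGL(n)$, and mixed pieces coincides, under the natural maps relating $\op{Act}$ and $\op{InnerAct}$ for $\Omega^{0,\bu}\otimes\lie{g}$, for $\cT$, and for $\cT\ltimes\Omega^{0,\bu}\otimes\lie{g}$, with the splitting of the one-loop obstruction into the internal gauge anomaly, the anomaly to an inner $\cT$-background, and the anomaly to a $\cT$-background. This amounts to a functoriality statement for the obstruction theory recalled in Section~\ref{s:bv} along the evident morphisms of background-symmetry data; once it is in place, what remains is the characteristic-class bookkeeping indicated above, where one has only to keep careful track of the odd cohomological shift in $V=\lie{g}[1]$ (equivalently, in the identification $\T_{B\lie{g}}[1]\simeq\lie{g}^{ad}$) so that all the signs in the statement come out correctly.
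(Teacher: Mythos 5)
Your argument is correct and is essentially the paper's own: the paper derives this proposition directly from Proposition \ref{prop:mixedRR} applied to the underlying free theory with $V=\lie{g}[1]$, separating the bihomogeneous components of $\left.\op{Td}\cdot\op{ch}^{GL(n)\times G}(V)\right|_{2n+2}$ into the pure gauge anomaly (the hypothesis), the mixed terms (the stated class), and the pure $\Td_{n+1}$ term (excluded by $\star$ as the inner $\cT$-anomaly), exactly as you do. Your write-up is in fact more detailed than the paper's, which offers only the surrounding discussion and the phrase ``a consequence of proposition \ref{prop:mixedRR}.''
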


The reason we have not included the term $\Td_{2n+2}$ is that this term represents the anomaly to having an inner $\cT$-action, which is only well-defined once the above class vanishes.
It is very rare that pure holomorphic $BF$ theory admits a quantum $\cT$-background; the above anomaly conditions are very restrictive.
We obtain many examples however, once we include matter.
We will detail an explicit example motivated by four-dimensional supersymmetry in the next section, but for now we state the general result.

Suppose that $V$ is a $G$-equivariant natural holmorphic vector bundle on $X$.
Examples of such bundles include $K_X^{\otimes \lambda} \otimes U$ where $U$ is a $G$-representation.
In particular, on $X = \C^n$, we can consider its $GL(n) \times G$-character
\beqn
\op{ch}^{GL(n) \times G} (V) \in H^\bu(GL(n) \times G) .
\eeqn

We consider coupling to holomorphic $BF$ theory for the Lie group $G$ the following holomorphic matter system valued in $V$.
The fields, in the BV formalism, are
\beqn
\gamma \in \Omega^{0,\bu}(X, V)
\eeqn
together with the conjugate field $\beta \in \Omega^{n,\bu}(X, V)[n-1]$.
The total action functional is 
\beqn
\int_X B F_A + \int_X \beta \dbar_A \gamma
\eeqn
where $\dbar_A = \dbar + A$.
We refer to this as holomorphic BF theory coupled to the $\beta \gamma$ system with values in $V$.
The classical $\cT$-background is 
\beqn
S^{\cT} = S^\cT = \int_X B (L_\mu A)  + \int_X \beta (L_\mu \gamma) .
\eeqn
The proof of the proposition below combines the results above.

\begin{prop}
\label{prop:coefficients}
Holomorphic BF theory coupled to the $\beta \gamma$ system with values in $V$ on $\C^n$ is anomalous at one-loop in perturbation theory.
The invariant representing this anomaly is
\beqn\label{eqn:anomalycoefficients}
\ch^G_{n+1} (V) - \ch^G_{n+1}(\lie{g}^{ad}) \in H^{2n+2} (BG) .
\eeqn

Suppose the invariant \eqref{eqn:anomalycoefficients} vanishes.
The anomaly to lifting $S^{\cT}$ to a quantum $\cT$-background is represented by the class
\beqn
\label{eqn:Tcoefficients}
\left. \op{Td} \cdot \left(\op{ch}^{GL(n) \times G}(V) - \op{ch}^G(\lie{g}^{ad}) \right) \right|_{2n+2}^\star 
\in H^{2n+2}(BGL(n) \times BG) 
\eeqn
where the $\star$ indicates that we do not include the component in $H^{2n+2}(BGL(n))$.

Suppose the invariant \eqref{eqn:Tcoefficients} vanishes.
Then the $\cT$-central charge of the system is 
\beqn
\left. \op{Td} \cdot \left(\op{ch}^{GL(n) \times G}(V) - \op{ch}^G(\lie{g}^{ad}) \right) \right|_{2n+2} \in H^{2n+2} (BGL(n)) .
\eeqn
\end{prop}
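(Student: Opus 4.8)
The plan is to reduce everything to the universal Riemann--Roch statement established in Proposition \ref{prop:mixedRR}, together with the pure-gauge computation of Proposition \ref{prop:puregauge} and the bookkeeping of which graded pieces control which obstruction. First I would observe that the total theory here is the free BV theory $\T^*[-1]\cE_{\C^n, V \oplus \lie{g}[1]}$ — that is, holomorphic $BF$ theory is itself the free cotangent theory on $\Omega^{0,\bu}(\C^n,\lie{g})[1]$, and coupling to the $\beta\gamma$ system with values in $V$ simply enlarges the natural $GL(n)\times G$-equivariant vector bundle from $\lie{g}[1]$ to $V \oplus \lie{g}[1]$. Since the full $BF$ action $\int B F_A$ is not merely $\int B\dbar A$, one must first note that the interaction term $\int B[A,A]$ (plus the matter coupling $\int \beta A\gamma$) is exactly the $\op{Act}$-element encoding the $\cT_X \ltimes \Omega^{0,\bu}\otimes\lie{g}$-background on the underlying free theory, so Theorem \ref{thm:owen} applies verbatim: the obstruction to an inner $(\cT\ltimes\lie{g}\otimes\Omega^{0,\bu})$-background is the trace of the action, i.e. $\op{Td}\cdot\ch^{GL(n)\times G}(V\oplus\lie{g}[1])\big|_{2n+2}$.

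Next I would unpack $\ch^{GL(n)\times G}(V \oplus \lie{g}[1]) = \ch^{GL(n)\times G}(V) - \ch^{G}(\lie{g}^{ad})$, the sign coming from the cohomological shift (the ghost $A$ contributes with fermionic sign). So the total class in $H^{2n+2}(BGL(n)\times BG)$ is $\op{Td}\cdot\big(\ch^{GL(n)\times G}(V) - \ch^G(\lie{g}^{ad})\big)\big|_{2n+2}$. The three successive statements then correspond precisely to the three successive quotients $H^1_{loc}(\lie{g}\otimes\Omega^{0,\bu}) \leftarrow H^1_{loc}(\cT\ltimes\lie{g}\otimes\Omega^{0,\bu}\,|\,\cT) \leftarrow H^1_{loc}(\cT\ltimes\lie{g}\otimes\Omega^{0,\bu})$ discussed in section \ref{sec:anomalies1}, which under the identification with $H^{2n+2}(BGL(n)\times BG) = \bigoplus_{j} H^{2j}(BGL(n))\otimes H^{2n+2-2j}(BG)$ become the filtration by $GL(n)$-degree. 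Concretely: (i) the \emph{internal} gauge anomaly is the image in $H^{2n+2}(BG)$, i.e. the $j=0$ component, and I would use Proposition \ref{prop:puregauge} to identify $-\ch^G_{n+1}(\lie{g}^{ad})$ as the $BF$-contribution and add the matter contribution $\ch^G_{n+1}(V)$ (the $j=0$ part of $\op{Td}\cdot\ch^{GL(n)\times G}(V)$ is just $\ch_{n+1}^G(V)$ since $\Td_0 = 1$), giving \eqref{eqn:anomalycoefficients}; (ii) assuming that vanishes, the obstruction to a quantum $\cT$-background lives in the components with $1 \le j \le n$, i.e. $\bigoplus_{j\geq 1}H^{2j}(BGL(n))\otimes H^{2n+2-2j}(BG)$, which is \eqref{eqn:Tcoefficients} with the $\star$ excluding the top $GL(n)$-piece $j = n+1$; (iii) assuming \emph{that} vanishes, what remains is precisely the $j=n+1$ piece $\Td_{2n+2}\cdot(\ldots)$ living in $H^{2n+2}(BGL(n))$, which is the $\cT$-central charge.

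The main obstacle — and the step I would be most careful about — is justifying that these graded pieces really are the successive obstructions rather than merely a direct-sum decomposition of one total class. This requires invoking the Hochschild--Serre / descent picture: the short exact sequences of local Lie algebras $\lie{g}\otimes\Omega^{0,\bu} \to \cT\ltimes\lie{g}\otimes\Omega^{0,\bu}\to\cT$ and the associated filtration of $\op{Act}$ versus $\op{InnerAct}$ induce on $H^1_{loc}$ exactly the skeletal/$GL(n)$-weight filtration after passing through Khoroshkin's isomorphism $H^\bu(\lie{vect}(n)\ltimes\lie{g}\otimes\cO_n \,|\, \lie{gl}_n\oplus\lie{g}) \simeq H^{\bu\leq 2n}(BGL(n)\times BG)$. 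I would argue that the universal Riemann--Roch class is \emph{a priori} a cocycle for the full $\cT\ltimes\lie{g}\otimes\Omega^{0,\bu}$, so its restriction to the subalgebra $\lie{g}\otimes\Omega^{0,\bu}$ (set the Beltrami differential $\mu = 0$) recovers the pure-gauge anomaly, and only once that restriction is a coboundary does the relative class (kill the $\cT$-dependence-free part) become well-defined; iterating once more isolates the $\cT$-central charge. Everything else — the explicit local cocycle representatives, the precise rational coefficients — follows from Proposition \ref{prop:holac}, Lemma \ref{lem:grrfree}, and direct expansion of $\Td\cdot\ch$, which I would not carry out here.
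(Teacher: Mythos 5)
Your proposal is correct and follows essentially the same route as the paper, whose own proof is just the one-line remark that it ``combines the results above'': you are combining Theorem \ref{thm:owen}/Proposition \ref{prop:mixedRR} for the total class $\op{Td}\cdot\op{ch}^{GL(n)\times G}(V\oplus\lie{g}[1])\big|_{2n+2}$, Proposition \ref{prop:puregauge} for the pure-gauge piece, and the decomposition of $H^{2n+2}(BGL(n)\times BG)$ by $GL(n)$-degree to separate the three successive obstructions, exactly as intended. Your added care about why the graded pieces are \emph{successive} obstructions (via the $\op{Act}$/$\op{InnerAct}$ filtration) is a detail the paper leaves implicit, but it is the right justification and not a departure from the paper's argument.
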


\subsection{Compactification}

Suppose we have a holomorphic theory defined on $\C^n \times X$ where $X$ is some compact complex manifold of dimension $m$.
Then, we obtain a holomorphic theory on $\C^n$ via compactification along $\C^n \times X \to \C^n$.
We turn to the problem of characterizing the $\cT_{\C^n}$-anomaly polynomial from the anomaly polynomial in $n+m$ dimensions.

Recall that the space of anomalies in $(n+m)$-dimensions is
\beqn
H^1_{loc}(\cT_{\C^{n+m}}) = H^{2n+2m+1}(\lie{vect}(n+m)) = H^{2n+2m+2}(BU(n+m)) .
\eeqn
Let $X$ be a compact $m$-dimensional manifold, then we have the composition
\beqn
\begin{tikzcd}
H^{2n+2m+2} (BU(n+m)) \ar[d,equal] \ar[r] & H^{2n+2m+2}(BU(n) \times BU(m)) \ar[r] & H^{2n+2}(BU(n)) \ar[d,equal] \\ 
H^{1}_{loc}(\cT_{\C^{n+m}}) \ar[rr,dotted,"\int_X"] & & H^{1}_{loc}(\cT_{\C^{n}}) .
\end{tikzcd}
\eeqn
The first map on the top line is induced from the block diagonal $U(n) \times U(m) \to U(n+m)$ and the second map restricts along the tangent classifier $X \to BU(m)$ and then integrates along $X$.
We denote this composition by $\int_X$.

\begin{prop}
If $\Theta$ is the anomaly to a quantum $\cT$-background on $\C^n \times X$ then $\int_X \Theta$ is the anomaly to a quantum $\cT$-background on $\C^n$.
\end{prop}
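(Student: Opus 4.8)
The plan is to reduce the statement to the compatibility of the two constructions—Riemann--Roch on the universal manifold, and fiber integration of characteristic classes—which is essentially multiplicativity of the Todd genus and the Chern character under the relevant bundle decomposition. The starting point is Proposition~\ref{prop:mixedRR} (or its $G$-trivial version): for a natural holomorphic vector bundle $V$ on $\C^n\times X$, the $\cT_{\C^{n+m}}$-anomaly is $\left.\op{Td}(\T_{\C^{n+m}\times X})\cdot\ch(V)\right|_{2n+2m+2}$, viewed inside $H^{2n+2m+2}(BU(n+m))$ after restricting along the tangent classifier of $\C^n\times X$. Compactification along $\C^n\times X\to\C^n$ replaces the theory by the free theory whose elliptic complex is $\R\pi_*\cE_V = \Omega^{0,\bu}(\C^n, H^\bu_{\dbar}(X,V))$, a $\cT_{\C^n}$-module; by Theorem~\ref{thm:owen} its anomaly is again a first Chern class, namely $\ch_1$ of this virtual bundle over the relevant moduli, which the universal Grothendieck--Riemann--Roch formula computes.

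The key steps, in order: (i) Observe that over $\C^n\times X$ the tangent bundle splits as $\T_{\C^n}\boxplus\T_X$, so $\op{Td}(\T_{\C^{n+m}\times X}) = \op{Td}(\T_{\C^n})\cdot\op{Td}(\T_X)$; this is precisely the block-diagonal $U(n)\times U(m)\hookrightarrow U(n+m)$ appearing on the top line of the diagram. (ii) Apply Grothendieck--Riemann--Roch to $\pi\colon\C^n\times X\to\C^n$ (fiberwise over the universal manifold $\cX$):
\beqn
\ch(\R\pi_*\cE_V) = \pi_*\!\left(\op{Td}(\T_X)\cdot\ch(V)\right),
\eeqn
where $\pi_*$ is integration along the fiber $X$. (iii) Multiply by $\op{Td}(\T_{\C^n})$—which is a pullback from the base and hence passes through $\pi_*$ by the projection formula—to get
\beqn
\op{Td}(\T_{\C^n})\cdot\ch(\R\pi_*\cE_V) = \pi_*\!\left(\op{Td}(\T_{\C^n})\cdot\op{Td}(\T_X)\cdot\ch(V)\right) = \pi_*\!\left(\op{Td}(\T_{\C^{n+m}\times X})\cdot\ch(V)\right).
\eeqn
(iv) Take the degree $2n+2$ component of both sides: the left side is the $\cT_{\C^n}$-anomaly of the compactified theory by the Riemann--Roch description of Theorem~\ref{thm:owen}, while the right side is exactly $\int_X$ applied to the degree $2n+2m+2$ component of the $(n+m)$-dimensional anomaly, by the definition of $\int_X$ in the displayed commuting square. (v) Finally, note that on the level of the local cocycle the passage from the characteristic class in $H^{2n+2}(BU(n))$ to the element of $H^1_{loc}(\cT_{\C^n})$ is the isomorphism already fixed in Proposition~\ref{prop:holac} and Theorem~\ref{thm:bwgf}, so matching the classes matches the anomalies.

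The main obstacle is step (ii): making the universal/fiberwise Grothendieck--Riemann--Roch statement precise over the moduli stack $\cM$ (or, at the level of formal neighborhoods, over $\mathrm{B}\cT_{\C^n}$), including the claim that compactification of the BV theory does compute the derived pushforward as a $\cT_{\C^n}$-module and that its one-loop inner-anomaly is $\ch_1$ of that virtual bundle. The cleanest route is to bypass a fully global GRR statement and argue locally on $\C^n$: the compactified free theory is again a holomorphic free theory in the sense of Theorem~\ref{thm:bw}, its anomaly is computed by wheels, and $GL(n)$-invariance plus degree reasons force the answer to lie in the span of the universal classes—so it is determined by its class in $H^{2n+2}(BU(n))$, which one then identifies with $\pi_*$ of the higher-dimensional class by restricting the family to a one-parameter family of complex structures and applying ordinary GRR. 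Everything else is multiplicativity of $\op{Td}$ and $\ch$ and the projection formula, which are routine.
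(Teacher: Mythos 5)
Your argument is correct and is exactly the justification the paper leaves implicit: the paper states this proposition without a formal proof and only verifies it on the $\C\times\Sigma$ example, and that example computation is precisely a special case of your chain (Todd multiplicativity under $\T_{\C^n\times X}=\T_{\C^n}\boxplus\T_X$, Grothendieck--Riemann--Roch along $\pi$, projection formula, then extraction of the degree $2n+2$ component). The one point you rightly flag as nontrivial --- that the compactified BV theory computes $\mathbb{R}\pi_*\cE_V$ as a $\cT_{\C^n}$-module so that Theorem \ref{thm:owen} applies to it --- is also taken for granted in the paper, so your proposal is, if anything, more explicit about where the real content lies.
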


As an example, consider the basic higher dimensional $\beta \gamma$ system on $\C \times \Sigma$ where $\Sigma$ is a Riemann surface. 
The fields are 
\begin{align*}
\gamma & \in \Omega^{0,\bu}(\C \times \Sigma) \\
\beta & \in \Omega^{2,\bu}(\C \times \Sigma)[1]  .
\end{align*}
The compactification along $\Sigma$ is simply the ordinary $\beta \gamma$ system on $\C$ with values in the graded vector space $H^\bu(\Sigma, \cO)$\footnote{More accurately, this is the two-dimensional bosonic $\beta\gamma$ system with values in $H^0(\Sigma, \cO)$ coupled to the two-dimensional fermionic $bc$ system with values in $H^1(\Sigma, \cO)$.}, whose fields are
\begin{align*}
\gamma_{2d}& \in \Omega^{0,\bu}(\C) \otimes H^\bu(\Sigma, \cO) \\
\beta_{2d} & \in \Omega^{1,\bu}(\C) \otimes H^{\bu}(\Sigma, K) [1]  .
\end{align*}
In particular, the the $\cT_{\C}$ anomaly polynomial is simply
\beqn
\chi^{hol}(\Sigma,\cO) \cdot \frac{1}{12} c_1^2 \in H^{4}(BU(1)) ,
\eeqn
where $\chi^{hol}(\Sigma, \cO)$ is the holomorphic Euler characteristic.

On the other hand, we have seen that the universal anomaly on $\C^2$ for this higher dimensional $\beta\gamma$ system is simply
\beqn
\op{Td}|_{6} = - \frac{1}{24} \op{ch}_1 \op{ch}_2 + \frac{1}{48} \op{ch}_1^3 \in H^6(BU(2)) .
\eeqn
Now, $\int_\Sigma \op{ch} = 1 + \op{ch}_1 + \int_\Sigma c_1(\Sigma)$ where $c_1(\Sigma)$ is the first Chern class of the tangent bundle of $\Sigma$, and $\int_\Sigma \op{ch}_2 = \frac12 \op{ch}_1^2$.
Since $\chi^{hol}(\Sigma, \cO) = \frac12 \int_\Sigma c_1(\Sigma)$ we have
\beqn
\int_\Sigma \op{Td}|_6 = - \frac{1}{24} \cdot \frac{1}{2} \int_\Sigma c_1(\Sigma) \op{ch}_1^2 + \frac{1}{48} \cdot 3 \cdot \int_{\Sigma} c_1(\Sigma) \ch_1^2 = \frac{1}{12} \chi^{hol}(\Sigma, \cO) \ch_1^2 \in H^4(BU(1)) . 
\eeqn

\section{Relationship to the $a,c$ anomalies}
\label{s:rsymmetry}

Most of the work up until now held in arbitrary complex dimension.
Now we specialize to complex dimension two, and specifically to holomorphic quantum field theories obtained from twisting four-dimensional superconformal theories.

As reviewed in the introduction, to a four-dimensional conformal theory one associates two invariants.
These are the coefficients $c$ of (Weyl)$^2$ and $a$ of (Euler) in the conformal anomaly $\<T_\mu^\mu\>$.

A four-dimensional \textit{super}conformal theory is, in part, a theory which has a symmetry by the four-dimensional superconformal algebra. 
There are variants of this algebra depending on the amount of supersymmetry, but in this section we will only need the minimal $\cN=1$ supersymmetric version.
In particular, being superconformal means that the theory has a symmetry by the so-called $R$-symmetry group $U(1)$ which commutes with supersymmetry (and conformal transformations).
Crucially, for this section, in such a four-dimensional superconformal theory, the coefficients $a,c$ are related to certain coefficients in the $R$-symmetry anomaly which we recall momentarily.

This paper concerns holomorphic quantum field theories.
To any four-dimensional supersymmetric theory there is a twist (unique up to $R$-symmetry, if it exists) which renders the theory holomorphic \cite{CostelloHol,CostelloYangian,ESW}. 
A natural question is: what structure does the holomorphic twist of a \textit{superconformal} field theory posess?
The answer to this question is one of the main results of \cite{SWsuco}.

\begin{thm}[\cite{SWsuco}]
Suppose $\sfT$ is the holomorphic twist of four-dimensional $\cN=1$ superconformal theory of Yang--Mills type.\footnote{By this, we mean a theory consisting of some number of chiral and vector multiplets with both gauge interactions and possibly a superpotential.
In particular this is not just \textit{pure} gauge theory as there are possibly matter fields as well.}
Then $\sfT$ admits a classical $\cT$-background.
\end{thm}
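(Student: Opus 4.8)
The plan is to exhibit $\sfT$ explicitly as a holomorphic gauge theory coupled to matter of the type analyzed in Section~\ref{s:hol}, and then to read off the classical $\cT$-background from the pieces already constructed there. First I would record the field content of the holomorphic twist of an $\cN=1$ theory of Yang--Mills type: each vector multiplet for the gauge group $G$ twists to holomorphic $BF$ theory, with $A\in\Omega^{0,\bu}(\C^2,\lie{g})[1]$ and conjugate $B\in\Omega^{2,\bu}(\C^2,\lie{g}^*)$; each chiral multiplet of $R$-charge $r$ in a representation $U$ of $G$ twists to a $\beta\gamma$-system with $\gamma\in\Omega^{0,\bu}(\C^2,K^{r/2}\otimes U)$ and its conjugate $\beta$; and the gauge and superpotential couplings twist to the local functionals $\int_{\C^2}\beta(A\cdot\gamma)$ and $\int_{\C^2}W(\gamma)$. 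The place where the superconformal (as opposed to merely supersymmetric) hypothesis is essential is that the $R$-charges are not free: the twisting supercharge sits inside the superconformal algebra, the $U(1)_R$ used to twist is the one commuting with it, and this $U(1)_R$ acts on the canonical bundle $K_{\C^2}$ with the weight that makes each $K^{r/2}\otimes U$ a genuine natural holomorphic bundle rather than a fractional object that only makes sense $\cT$-equivariantly on flat space. Consequently every field of $\sfT$ is valued in a natural bundle, hence in a local $\cT$-module via the Lie derivative, which is exactly the input the construction requires.

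Granting this, I would assemble $S^{\cT}$ term by term, reusing the computations of Section~\ref{s:hol}. The kinetic part is the direct sum of the $BF$ background $\int_{\C^2}B(L_\mu A)$ and the matter backgrounds $\int_{\C^2}\beta(L_\mu\gamma)$ of~\eqref{eqn:Tbackground}; each solves the free equivariant master equation because the underlying bundles are local $\cT$-modules. For the gauge interaction $\int_{\C^2}\beta(A\cdot\gamma)$ together with $\int_{\C^2}B F_A$, the identity $\d_\cT S^{\cT}+\{S,S^{\cT}\}+\frac12\{S^{\cT},S^{\cT}\}=0$ holds because the $G$-action and the Lie derivative commute, so $L_\mu$ differentiates $A\cdot\gamma$ by Leibniz and everything assembles exactly as in the classical $\cT$-background recalled just before Proposition~\ref{prop:coefficients}. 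So the genuinely new point is the superpotential term.

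The hard part will be to see that $\int_{\C^2}W(\gamma)$ is $\cT$-invariant, and the reason is once more superconformal invariance. A classical $\cN=1$ superpotential is a $G$-invariant polynomial $W$ on the matter representation of $R$-charge $2$; passing through the dictionary above, $R$-charge $2$ says precisely that each monomial of $W$, evaluated on the twisted fields, is a section of the Dolbeault complex of $K_{\C^2}$, so that $W(\gamma)\in\Omega^{2,\bu}(\C^2)$ and $\int_{\C^2}W(\gamma)$ is defined with no further choice. Then $\cT$-equivariance reduces to the fact that integration of a top holomorphic Dolbeault form kills a Lie derivative: on $\Omega^{2,\bu}(\C^2)$ one has $L_\mu=\del\iota_\mu$ (there is no $\iota_\mu\del$ term in top holomorphic degree), hence $\int_{\C^2}L_\mu(-)=\int_{\C^2}\del\iota_\mu(-)\equiv 0$ as a local functional; combined with $G$-invariance of $W$ this gives $\d_\cT\!\int_{\C^2}W(\gamma)+\{S,\int_{\C^2}W(\gamma)\}=0$, and the remaining cross-brackets vanish for degree and support reasons. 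Summing the contributions yields $S^{\cT}\in\op{Act}(\cT,\cE)^1$ solving~\eqref{eqn:equivariantmaster}; it lies in $\op{Act}(\cT,\cE)$ rather than $\cloc^\bu(\cT)$ because every summand pairs a matter or gauge field against a holomorphic vector field.

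In short, the main obstacle is not the assembly of $S^{\cT}$, which is essentially formal once $\sfT$ is identified with holomorphic $BF$ theory plus $\beta\gamma$-matter in natural bundles plus a $K_{\C^2}$-valued superpotential; it is the input of the first paragraph, namely that the holomorphic twisting supercharge intertwines the $R$-action with holomorphic rescalings of $K_{\C^2}$. This is precisely where ``superconformal'' is indispensable, and it is in effect the content of the cited structure result of~\cite{SWsuco}; the remaining BV-theoretic work is the bookkeeping above.
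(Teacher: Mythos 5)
This theorem is not proved in the paper at all --- it is imported verbatim from \cite{SWsuco} --- so there is no internal proof to compare against. Your sketch is nonetheless consistent with all the ingredients the paper does assemble around the statement: the identification of the twist with holomorphic $BF$ theory plus $\beta\gamma$-matter in powers of $K$ (proposition \ref{prop:twist}), the explicit couplings $\int B(L_\mu A)$ and $\int\beta(L_\mu\gamma)$ of \eqref{eqn:Tbackground} and \eqref{eqn:Tr}, and the observation that a superpotential of $R$-charge $2$ lands in $\Omega^{2,\bu}(\C^2)$ so that $\int L_\mu(-)$ vanishes up to total derivative; your treatment of the superpotential term is exactly the point the paper flags when it remarks that a generic superpotential ``depends on the choice of a holomorphic volume form'' because it breaks $R$-symmetry. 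One inaccuracy in your first paragraph: superconformality does \emph{not} make $K^{r/2}\otimes U$ a non-fractional bundle --- the superconformal $R$-charges are typically fractional (the BPS chiral multiplet twists to $K^{1/3}$), and fractional powers of the trivialized canonical bundle on $\C^2$ are perfectly good local $\cT$-modules for \emph{any} weight, as the paper's coupling $I^{\cT,r}$ for arbitrary $r$ makes clear. The genuine classical role of the superconformal hypothesis is the one you isolate in your third paragraph (the superpotential carries $R$-charge $2$), not integrality of the twisting bundle; with that correction your argument stands.
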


In other words, the holomorphic twist of a $\cN=1$ superconformal theory admits a symmetry by holomorphic vector fields.
In this sense, we argue in \cite{SWsuco} that the Lie algebra of holomorphic vector fields is an enhanced twisted superconformal algebra.

The main result of this section shows how the coefficients $a,c$ of the supersymmetric theory relate to the $\cT$-equivariant anomaly polynomial of the holomorphically twisted theory.

\begin{thm}
\label{thm:ac}
Consider the holomorphic twist of a four-dimensional superconformal theory on~$\R^4$ of Yang--Mills type and let $a,c$ be the gravitational anomaly coefficients of this supersymmetric theory.
The anomaly polynomial for $\cT$-equivariant quantization of the holomorphic twist is 
\beqn\label{eqn:acholTheta}
\Theta = \frac23 (a-c) \ch_1 \ch_2 + \frac19 \left(c - \frac{5}{3} a\right) \ch_1^3 .
\eeqn
\end{thm}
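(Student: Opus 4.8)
The plan is to match the holomorphic $a^{hol}, c^{hol}$ invariants developed in Section~\ref{s:hol} against the standard relation between the four-dimensional $a,c$ anomalies and $R$-symmetry anomaly coefficients, using the holomorphic twist as the bridge. First I would recall the well-known 't~Hooft anomaly matching for $\cN=1$ superconformal theories, which expresses $a$ and $c$ in terms of $\op{tr} R$ and $\op{tr} R^3$ (the $U(1)_R$--gravitational--gravitational and $U(1)_R^3$ anomaly coefficients respectively): explicitly $a = \tfrac{3}{32}(3\,\op{tr}R^3 - \op{tr}R)$ and $c = \tfrac{1}{32}(9\,\op{tr}R^3 - 5\,\op{tr}R)$. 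Inverting these gives $\op{tr}R$ and $\op{tr}R^3$ as linear combinations of $a$ and $c$; these are the two numbers that will ultimately appear in \eqref{eqn:acholTheta}.

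Next I would identify, on the twisted side, precisely which characteristic class data of the universal bundle the twist produces. Under the holomorphic twist of a four-dimensional $\cN=1$ theory, the $U(1)_R$ symmetry is absorbed into the structure group: a field of $R$-charge $q$ becomes a section of (a Dolbeault resolution of) $K_{\C^2}^{\lambda}$ for $\lambda$ determined linearly by $q$. Concretely, for the holomorphic twist of Yang--Mills type theories the chiral multiplets contribute $\beta\gamma$ systems valued in $K^{\lambda_i}\otimes R_i$ and the vector multiplet contributes holomorphic $BF$ theory (i.e.\ the $\lambda = 1$, adjoint-valued piece, up to the ghost structure). Then Lemma~\ref{lem:grrfree} and Proposition~\ref{prop:coefficients} give the $\cT$-anomaly of each summand: from $a^{hol}(K^\lambda) = \tfrac{1}{24}r$, $c^{hol}(K^\lambda) = -\tfrac{1}{48}r^3$ with $r = 2\lambda - 1$, and the key observation that $r$ is exactly (a normalization of) the $R$-charge of the corresponding supersymmetric field, one sees that summing over the field content reproduces $\sum r \sim \op{tr}R$ and $\sum r^3 \sim \op{tr}R^3$ — i.e.\ $a^{hol}$ is proportional to $\op{tr}R$ and $c^{hol}$ to $\op{tr}R^3$, with the adjoint (gauge) contribution accounted for by the $-\op{ch}^G(\lie{g}^{ad})$ term in Proposition~\ref{prop:coefficients} matching the gaugino's $R$-charge contribution.

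Assembling the pieces: write $\Theta = a^{hol}\,\ch_1\ch_2 + c^{hol}\,\ch_1^3$ as in \eqref{eqn:achol}, substitute $a^{hol} = \alpha\,\op{tr}R$, $c^{hol} = \beta\,\op{tr}R^3$ for the universal constants $\alpha, \beta$ extracted from Lemma~\ref{lem:grrfree} (being careful that $\op{tr}R^3$ and $\op{tr}R$ here are exactly the combinations of $r^3$ and $r$ over all fields, gauginos included), then substitute the inverted anomaly-matching relations $\op{tr}R = f(a,c)$, $\op{tr}R^3 = g(a,c)$. Collecting coefficients should yield the claimed $\tfrac23(a-c)$ in front of $\ch_1\ch_2$ and $\tfrac19(c - \tfrac53 a)$ in front of $\ch_1^3$. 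The main obstacle I anticipate is purely bookkeeping: getting the normalization dictionary between (i) the physicists' $R$-charge conventions in the Anselmi-type formulas for $a,c$, (ii) the shift $\lambda \leftrightarrow$ $R$-charge induced by the particular holomorphic twist used in \cite{SWsuco}, and (iii) the overall $-4\pi^2$ and $\tfrac1{24}$-type normalizations of the local cocycles $\Theta_{a^{hol}}, \Theta_{c^{hol}}$ — all consistent, so that the rational coefficients land exactly as stated rather than off by a factor. A secondary subtlety is confirming that no further contributions (e.g.\ from the $B$-field/antifield sectors or from superpotential constraints on the $\lambda_i$) spoil the linearity in $\op{tr}R, \op{tr}R^3$; the superpotential marginality condition $\sum_{\text{superpotential}} \lambda_i = \text{const}$ should be exactly what is needed for consistency, and I would check it reduces to the vanishing of the pure-gauge anomaly \eqref{eqn:anomalycoefficients}.
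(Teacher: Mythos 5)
Your proposal is correct and follows essentially the same route as the paper: invert the Anselmi relations to express $\op{Tr}R$ and $\op{Tr}R^3$ in terms of $a,c$, use the twisting dictionary (Proposition \ref{prop:twist}) to identify each multiplet of $R$-charge $r_i$ with a $\beta\gamma$ system valued in $K^{(r_i+1)/2}$, and then read off $a^{hol}=\tfrac{1}{24}\op{Tr}R$ and $c^{hol}=-\tfrac{1}{48}\op{Tr}R^3$ from Lemma \ref{lem:grrfree}, since $r=2\lambda-1$ is precisely the $R$-charge. The normalization bookkeeping you flag as the main obstacle works out exactly as you anticipate, and the paper's proof does not need the extra superpotential/gauge-sector checks you mention since it reduces immediately to the underlying free theory.
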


In other words, there is the following relation involving the holomorphic anomalies $a^{hol},c^{hol}$ (see equation \eqref{eqn:achol}):
\begin{align*}
a^{hol} &= \frac23 (a-c) \\
c^{hol} &= \frac{1}{9} \left(c - \frac53 a\right) 
\end{align*}
Alternatively, the expression for $a,c$ in terms of $a^{hol},c^{hol}$ is
\beqn\label{eqn:fromhol}
\begin{aligned}
a & = - \frac{9}{4} \left(a^{hol} + 6 c^{hol} \right) \\
c & = - \frac{3}{4} \left(5 a^{hol} + 18 c^{hol} \right) .
\end{aligned}
\eeqn

\subsection{$R$-symmetry and twisting data}

A four-dimensional superconformal theory is, in particular, a theory with $R$-symmetry.
The minimal amount of $R$-symmetry is with respect to the group $U(1)$, which we denote by $U(1)_R$ when speaking of $R$-symmetry.

Let $R$ be the generator of the $U(1)_R$-symmetry in the original supersymmetric theory.
We will utilize the following well-known relationship between the $a,c$ coefficients of the anomaly polynomial and the $R$-symmetry, see \cite{Anselmi:1997ys,Anselmi:1997am}:
\beqn\label{eqn:ac1}
\begin{aligned}
a & = \frac{3}{32} \left(3 \op{Tr} R^3 - \op{Tr} R \right) \\
c & = \frac{1}{32} \left(9 \op{Tr} R^3 - 5 \op{Tr} R\right) .
\end{aligned}
\eeqn
For standard multiplets, the $R$-charges appearing in the above formulas are those of the chiral fermions in the theory.
In terms of the anomaly polynomial for mixed gravitational and $R$-symmetry (in this case, $U(1)_R$) the coefficients $(\op{Tr} R^3), (\op{Tr} R)$ appear as
\beqn\label{eqn:R-symmetryanomaly}
\Theta^{\cN=1} = \frac{1}{3!} \left( (\op{Tr} R^3) \til c_1^3 - \frac14 (\op{Tr} R) \til c_1 p_1\right)
\eeqn
where $\til c_1$ is the generator of $H^2(BU(1)_R)$ and $p_1 \in H^4 (BSO(4))$ is the universal first Pontryagin class, which we have normalized with a factor of $-1/4$ to fit with physics conventions.
Our goal is to relate these coefficients to objects in the holomorphic twist.

Generally speaking, the operation of `twisting' a supersymmetric theory involves the following steps.
Throughout, we fix a supersymmetry algebra together with an $R$-symmetry group $G_R$.
\begin{enumerate}
\item Find a supercharge $Q$, which is square zero.
That is, $Q$ is an odd element of the supersymmetry algebra which satisfies $[Q,Q] = 0$.
\item Find a `grading homomorphism' $\alpha\colon U(1) \to G_R$, where $G_R$ is the $R$-symmetry group. 
The supercharge $Q$ must be weight one with respect to this homomorphism.
\item Find a `twisting homomorphism'.
This amounts to choosing a Lie group $G$, together with homomorphisms $\iota \colon G \to Spin(n)$ and $\phi \colon G \to G_R$ with the property that $Q$ is $G$-invariant with respect to the induced homomorphism $\iota \times \phi \colon G \to Spin(n) \times G_R$.
\end{enumerate}

The four-dimensional $\cN=1$ supersymmetry algebra has an essentially unique supercharge which is square-zero \cite{EStwist}.
Any other supercharge is in the same $R$-symmetry orbit.
We fix such a supercharge $Q^{hol}$.
This supercharge is \textit{holomorphic}, in the sense that the rank of the map $[Q^{hol},-]$ is two (half the dimension of spacetime).
For more details on twisting we refer to \cite{CostelloHol,ESW,SWchar}.

\begin{prop}
\label{prop:twist}
Consider a free supersymmetric theory on $\R^4$ built from chiral and vector multiplets with $R$-symmetry generator $R = (r_i)$ where $r_i$ is the $R$-charge of the $i$th multiplet.
Our convention is that for chiral (respectively, vector) multiplets, the scalar (respectively gauge) field has $R$-charge $r_i+1$ (respectively, $r_i-1$).

Then, there exists twisting data such that the holomorphic twist of of this theory is equivalent to the cotangent theory (in the BV sense) to
\beqn
\oplus_i \Omega^{0,\bu}(\C^2, K^{(r_i+1)/2}) .
\eeqn
\end{prop}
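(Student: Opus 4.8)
The plan is to reduce the statement to a component-by-component analysis of the four-dimensional free multiplets, applying the known description of the holomorphic twist of a single chiral (resp.\ vector) multiplet and then assembling the direct sum. First I would recall, following \cite{EStwist,CostelloHol,ESW}, the field content of the free $\cN=1$ chiral and vector multiplets on $\R^4$ in the BV formalism, together with the action of the square-zero holomorphic supercharge $Q^{hol}$ fixed above. The key input is that $Q^{hol}$ has rank two, so that the cohomology of $Q^{hol}$ acting on the multiplet fields is computed fiberwise and produces a complex supported in holomorphic directions: concretely, for the chiral multiplet the $Q^{hol}$-cohomology of the fields organizes into $\Omega^{0,\bu}(\C^2)$ (the surviving scalar and its descendants) together with a conjugate copy $\Omega^{2,\bu}(\C^2)[1]$ coming from the antichiral fermion and auxiliary field, and similarly for the vector multiplet one obtains $\Omega^{0,\bu}(\C^2,\lie g)[1] \oplus \Omega^{2,\bu}(\C^2,\lie g^*)[?]$, i.e.\ holomorphic $BF$ theory. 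The precise homological degrees are determined by the ghost numbers in the BV complex of the untwisted theory.

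Next I would address the $R$-charge weighting. The grading homomorphism $\alpha\colon U(1)\to U(1)_R$ and the twisting homomorphism $\phi\colon G\to U(1)_R$ act on the fields, and when one writes the twisted complex in terms of Dolbeault forms on $\C^2$ the residual $R$-weight of the $i$th multiplet twists the trivial bundle $\Omega^{0,\bu}(\C^2)$ into $\Omega^{0,\bu}(\C^2, K^{(r_i+1)/2})$. The powers of the canonical bundle $K = K_{\C^2}$ enter because, under the twisting homomorphism into $Spin(4)\times U(1)_R$, the holonomy of the spacetime spin connection is identified (via $\iota$) with a specific character of the structure group acting on $K^{1/2}$; tracking the $R$-charge $r_i+1$ of the scalar against the charge of the holomorphic volume form produces exactly the exponent $(r_i+1)/2$. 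I would verify this weight bookkeeping for the single chiral multiplet, record that the vector multiplet corresponds to the case $r_i = 1$ so $K^{(r_i+1)/2} = K$, matching holomorphic $BF$ theory with values in $\lie g$, and then note that for a general Yang--Mills-type theory the free limit is the direct sum over all multiplets.

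Finally I would identify the resulting complex with a cotangent theory in the BV sense. The point is that Serre duality on $\C^2$ gives a pairing $\Omega^{0,\bu}(\C^2,K^{\lambda}) \otimes \Omega^{2,\bu}(\C^2,K^{-\lambda}) \to \Omega^{2,2}(\C^2)\to\C$ of the right degree, so that the conjugate-field sector produced by $Q^{hol}$-cohomology is precisely $\Omega^{0,\bu}(\C^2,K^\lambda)^{!}$ with $\lambda = (r_i+1)/2$; hence the twisted free theory is $\T^*[-1]\big(\bigoplus_i \Omega^{0,\bu}(\C^2,K^{(r_i+1)/2})\big)$, with the $(-1)$-shifted symplectic pairing being the Serre-duality pairing. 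Assembling the summands completes the argument. The main obstacle I anticipate is the careful degree and $R$-charge accounting in step two: making sure the BV ghost degrees, the cohomological shift induced by $Q^{hol}$, and the exponent of $K$ are all consistent with the stated convention that the scalar of a chiral multiplet of $R$-charge $r_i$ carries charge $r_i+1$ (and the gauge field carries $r_i-1$). Everything else is an application of the fiberwise $Q^{hol}$-cohomology computation plus Serre duality, but the sign and normalization conventions relating spin structure, $R$-symmetry, and powers of $K$ are where the real care is needed.
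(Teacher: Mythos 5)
The paper offers no proof of this proposition: it is asserted as a consequence of the twisting literature \cite{CostelloHol,ESW,SWchar,EStwist} and checked only on the two examples that follow the statement (the vector multiplet with $r=1$ and the superconformal chiral with $r=-\tfrac13$), so there is no argument of the paper's own to compare yours against. Your outline --- fiberwise $Q^{hol}$-cohomology of each free multiplet, $R$-charge bookkeeping through the grading and twisting homomorphisms to produce the exponent $(r_i+1)/2$ of $K$, and Serre duality to exhibit the conjugate sector as the $(-1)$-shifted cotangent fiber --- is exactly the standard argument those references carry out and is essentially correct; the one point genuinely requiring the care you yourself flag is the homological shift in the vector-multiplet sector, where the paper's own example places $A$ in $\Omega^{0,\bu}(\C^2)[1]$ and your ``$[?]$'' must be resolved consistently with the BV ghost degrees of the untwisted theory.
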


As an example, consider a single free vector multiplet with its standard $R$-charge $r = 1$ (so that the gauge field in the multiplet has vanishing $R$-charge $R(A) = 0$).
This lemma implies that the holomorphic twist is the cotangent theory to $\Omega^{0,\bu}(\C^2,K) = \Omega^{2,\bu}(\C^2)$ which consists of fields 
\beqn
(B,A) \in \Omega^{2,\bu}(\C^2) \oplus \Omega^{0,\bu}(\C^2) [1],
\eeqn
thus recovering the free limit of holomorphic $BF$ theory.

As another example, the $R$-charge for a single chiral multiplet which saturates the BPS bound for the superconformal algebra is $r = -\frac13$ (so that the scalar in the multiplet has $R$-charge $R(\phi)=\frac23$).
The holomorphic twist of this theory is the theory whose underlying space of BV fields is 
\beqn
(\gamma, \beta) \in \Omega^{0,\bu}(\C^2, K^{1/3}) \oplus \Omega^{0,\bu}(\C^2, K^{2/3})[1] ,
\eeqn
and action is $\int \beta \dbar \gamma$.
This is the two-dimensional $\beta\gamma$ system twisted by $K^{1/3}$.

\begin{proof}[Proof of theorem \ref{thm:ac}]
To compute the $\cT$-anomaly it suffices to consider the underlying free theory, hence we can appeal to proposition \ref{prop:twist}.

From \eqref{eqn:ac1} we have the following relation between the $R$-symmetry generator and the gravitational anomaly coefficients:
\beqn
\op{Tr} R = 16 (a-c) , \quad \op{Tr} R^3 = \frac{16}{3} \left(\frac53 a - c\right) .
\eeqn

From proposition \ref{prop:twist} we know that we can express the holomorphic twist as a $\beta\gamma$-system with coefficients in a sum of line bundles determined by $R$. 
Using Lemma \ref{lem:grrfree} we see that
\beqn
a^{hol} = \frac{1}{24} 16 (a-c) = \frac{2}{3} (a-c) 
\eeqn
and
\beqn
c^{hol} = - \frac{1}{48} \cdot \frac{16}{3} \left(\frac53 a - c\right) = \frac19 \left(c - \frac53 a\right)
\eeqn
as desired.

\end{proof}

There is an alternative way of deducing these holomorphic anomaly coefficients from the anomaly polynomial of the supersymmetric theory.
Indeed, we can start with the $R$-symmetry anomaly polynomial of the untwisted theory in equation \eqref{eqn:R-symmetryanomaly} and proceed with the following two steps: first, break $SO(4)$ to $MU(2)$ and second, twist by setting $\til c_1 = -\frac12 \op{ch}_1$.
The first step has the effect of setting $p_1 \mapsto 2 \op{ch}_2$ (this is the induced homomorphism $H^4(BSO(4)) \to H^4(BU(2))$ and the second step follows from the standard twisting homomorphism for four-dimensional supersymmetry, see \cite{ESW}.

%
%For $\cN=4$ supersymmetric Yang--Mills theory we know that
%\beqn
%(a,c)_{\cN=4} = \left(\frac14, \frac14\right) .
%\eeqn

\subsection{Anomalies for basic supersymmetric theories}

We unpack the anomaly polynomial for basic four-dimensional supersymmetric gauge theories.
The results, including anomalies for theories with more supersymmetry, are summarized in table \ref{tab:basic}.

\begin{table}
\centering
\begin{tabular}{ |c|c|c|c|c|c|}
 \hline
  $\cN$ & multiplet & $a$ & $c$ & $a^{hol}$ & $c^{hol}$ \\
 \hline
 \hline
 $\cN=1$   & vector   & $\frac{3}{16}$ & $\frac{1}{8}$ & $\frac{1}{24}$ & $-\frac{1}{48}$ \\
 & chiral &   $\frac{1}{48}$  & $\frac{1}{24}$   & $-\frac{1}{72}$ & $\frac{1}{1296}$ \\
 \hline
 $\cN=2$ &  vector  & $\frac{5}{24}$   & $\frac{1}{6}$ & $\frac{1}{36}$ & $-\frac{13}{648}$ \\
  & hyper & $\frac{1}{24}$ &  $\frac{1}{12}$ & $-\frac{1}{36}$ & $\frac{1}{648}$ \\ 
  \hline
  $\cN=4$  & vector & $\frac14$ & $\frac14$ & $0$ & $-\frac{1}{54}$ \\
 \hline
\end{tabular}
\caption{A list of $a,c$ and $a^{hol},c^{hol}$ for basic supersymmetric gauge theories and their twists.}
\label{tab:basic}
\end{table}

\subsubsection{$\cN=1$ gauge theory}

In $\cN=1$ gauge theory there are vector multiplets and chiral multiplets.
We have described the twist of the superconformal vector and chiral multiplet after the statement of proposition \ref{prop:twist}.
To compute the coefficients $a^{hol},c^{hol}$ we could refer to the formula \eqref{eqn:acholTheta}.
Instead, we will compute the coefficients directly from the Riemann--Roch theorem.

In fact, we have already done this.
The anomaly polynomial for the chiral multiplet is obtained by plugging $\lambda = \frac13$ into the coefficients of lemma \ref{lem:grrfree}.
Similarly, the anomaly polynomial for a single vector multiplet is obtained as the negative of the coefficients in proposition \ref{lem:grrfree} when $\lambda = 0$.
The sign arises because this is a gauge theory.

Translating back to the physical coefficients as in equation \eqref{eqn:fromhol} we obtain the expected values of $a,c$ \cite{Duff2}.

\subsubsection{$\cN=2$ gauge theory}

We focus on two multiplets relevant for four-dimensional $\cN=2$ supersymmetric gauge theory.
The first is the vector multiplet, which is defined from the data of a Lie group $G$ and a principal bundle, which we take to be trivial.
The second is the hypermultiplet, which depends on the choice of a representation $R$ of $G$.

In the BV formalism, the holomorphic twist of pure gauge theory is holomorphic $BF$ theory for the holomorphic local Lie algebra whose underlying holomorphic vector bundle is 
\beqn
\lie{g} \otimes \wedge^\bu K^{1/3} = \lie{g} \oplus \Pi \lie{g} \otimes K^{1/3} .
\eeqn
The Lie algebra structure is determined by the Lie bracket on $\lie{g}$.
Note that
\beqn\label{eqn:wedgeK13}
\ch\left(\lie{g}[1] \otimes \wedge^\bu K^{1/3} \right) = \left(- \frac13 \ch_1 + \frac{1}{18} \ch_1^2 - \frac{1}{162} \ch_1^3 \right) \dim G .
\eeqn
Thus
\beqn
\Td \ch\left(\lie{g}[1] \otimes \wedge^\bu K^{1/3} \right)|_6 = \left(\frac{1}{36} \ch_1 \ch_2 - \frac{13}{648} \ch_1^3 \right) \dim G .
\eeqn
(Alternatively, note that as $\cN=1$ multiplets a $\cN=2$ vector is the same as a $\cN=1$ vector plus a $\cN=1$ chiral).

The hypermultiplet is the cotangent theory to 
\beqn
R \otimes \wedge^\bu K^{1/3} \otimes K^{1/3} = R \otimes K^{1/3} \oplus \Pi R \otimes K^{2/3},
\eeqn
From this we see that the anomaly polynomial for the hypermultiplet valued in $R$ is
\beqn
\left(-\frac{1}{36} \ch_1 \ch_2 + \frac{1}{648} \ch_1^3 \right) \dim R .
\eeqn

Translating back to the physical coefficients as in equation \eqref{eqn:fromhol} we obtain the expected values of $a,c$ \cite{Duff2}.

\subsubsection{$\cN=4$ gauge theory}
Consider four-dimensional $\cN=4$ supersymmetric gauge theory for gauge group~$G$ near the trivial $G$-bundle.
In the BV formalism, the holomorphic vector bundle underlying the holomorphic twist of this theory is
\beqn
\lie{g}[1] \otimes \wedge^\bu\left(K^{1/3} \otimes \C^3\right) . 
\eeqn
Notice that in the cohomology of $BGL(2)$ we have the formal expression
\beqn
\ch(K^{\lambda/3}) = 1 - \frac{\lambda}{3} \ch_1 + \frac{\lambda^2}{2 \cdot 3^2} \ch_1^2 - \frac{\lambda^3}{3! \cdot 3^3} \ch_1^3 . 
\eeqn
where $\lambda$ is any integer.
From this we compute
\begin{align*}
\ch \wedge^\bu\left(K^{1/3} \otimes \C^3\right) & = 1 - 3 \ch (K^{1/3}) + 3 \ch (K^{2/3}) - \ch(K) \\
& = \frac{1}{27} \ch_1^3 .
\end{align*}

Thus, we see directly that for $\cN=4$ supersymmetric Yang--Mills theory we have $a^{hol} = 0$ and 
\beqn
c^{hol} = -\frac{1}{2 \cdot 27} \dim G = -\frac{\dim G}{54} .
\eeqn

Translating back to the physical coefficients as in equation \eqref{eqn:fromhol} we obtain the expected values of $a,c$ \cite{Duff2}.

\section{Anomaly matching in holomorphic QCD}
\label{s:qcd}

In this section we consider anomalies in the holomorphic twist of four-dimensional $\cN=1$ supersymmetric quantum chromodynamics (QCD).

\subsection{The holomorphic theory}

Four-dimensional $\cN=1$ supersymmetric Yang--Mills theory depends on a choice of compact Lie group $G$, with complexified Lie algebra $\fg$, and a representation $\bV$ that comprises the supersymmetric matter.
%The field content of the theory on a spin Riemannian four-manifold $M$~is
%\begin{itemize}
%\item a  that consists of a gauge field $A \in \Omega^1(M) \otimes \fg $ together with fermions $\lambda \in \Gamma(M,S) \otimes \fg$, and
%\item a {\it matter multiplet} (often called the chiral multiplet), depending on a choice of a complex representation $V$ of $\fg$, that consists of a scalar $\phi \in C^\infty(M) \otimes V$ together with fermions $\psi \in \Gamma(M,S) \otimes V$. 
%Additionally, one can turn on a {\it superpotential} $W \in \cO(V)^\fg$, a $G$-invariant polynomial on~$V$. 
%\end{itemize}

Typically, in supersymmetric QCD one takes $G = SU(N_c)$ and 
\beqn
\bV = (\C^{N_c})^{\oplus N_f} \oplus (\br \C^{N_c})^{\oplus N_f}
\eeqn
where $\C^{N_c}$ (respectively, $\br \C^{N_c}$) denotes the fundamental (respectively, anti-fundamental)  $SU(N_c)$ representation.
One says that $N_c$ is the number of `colors' and $N_f$ is the number of `flavors'.

In the $\cN=1$ supersymmetry algebra there is (up to equivalence) a unique twisting supercharge.

\begin{thm}[\cite{CostelloYangian,ESW,SWchar}]
The twist of four-dimensional $\cN=1$ supersymmetric Yang--Mills theory on $\R^4$ associated to the pair $(G,\bV)$ is equivalent to holomorphic BF theory on $\C^2$ with gauge group $G_\C$ coupled to the holomorphic $\beta\gamma$ system on~$\C^2$ valued in~$\bV$.

If the supersymmetric theory is equipped with a superpotential $W$, then there is a holomorphic superpotential in the twisted theory, see below.
\end{thm}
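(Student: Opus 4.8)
The plan is to carry out the twist at the level of the free theory and then reinstate the interactions and the superpotential, using homotopy transfer to descend to a small model. First I would fix an off-shell BV presentation of the four-dimensional $\cN=1$ theory attached to $(G,\bV)$: a vector multiplet $(A_\mu,\lambda_\alpha,\bar\lambda_{\dot\alpha},D)$ with BRST ghost $c$, together with chiral multiplets $(\phi,\psi_\alpha,F)$ valued in $\bV$, organized into a classical BV theory $(\cE,\omega,S)$ on $\R^4$; keeping the auxiliary fields $D,F$ is convenient since it makes the twisting supercharge strictly square-zero. By \cite{EStwist} there is, up to the $R$-symmetry action, a unique square-zero supercharge $Q^{hol}$ whose bracket $[Q^{hol},-]$ has rank two, and I would choose complex coordinates $z_1,z_2$ on $\C^2\cong\R^4$ adapted to it, with holomorphic volume form $\Omega=\d z_1\wedge\d z_2$, and compute the action of $Q^{hol}$ on all component fields.

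Next I would form the twisted complex $(\cE,Q_{\mathrm{BV}}+Q^{hol})$ and extract a minimal model. The core computation is that $Q^{hol}$ pairs the two antiholomorphic components of $A_\mu$ with half of the gauginos and with $D$, so that the $Q^{hol}$-cohomology of the vector multiplet is concentrated on a $(0,\bu)$-form-valued field $A$ together with a conjugate $(2,\bu)$-form-valued field $B$ --- exactly the BV fields of holomorphic $BF$ theory in the conventions of Section~\ref{s:hol} --- while the chiral multiplet fields reorganize into $\gamma\in\Omega^{0,\bu}(\C^2,\bV)$ together with a conjugate field $\beta$, that is, the holomorphic $\beta\gamma$ system valued in $\bV$. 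I would make this precise by exhibiting a deformation retract from the full twisted complex onto this minimal model --- chosen compatibly with the $(-1)$-shifted pairing --- and transferring the classical BV structure along it; compare Proposition~\ref{prop:twist} for the interaction-free content.

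With the free identification in hand, I would reinstate the interactions: under the transfer the Yang--Mills cubic and quartic vertices, the minimal matter coupling, and the Yukawa terms assemble into the holomorphic $BF$ interaction $\int_{\C^2}B\,F_A$ and the covariantized kinetic term $\int_{\C^2}\beta\,\dbar_A\gamma$. A useful shortcut is that the surviving vertices are heavily constrained by the residual $GL(2,\C)$-symmetry and the holomorphic grading, which determine the answer up to overall normalization, the normalization being fixed by matching a single transferred diagram. Finally, a superpotential $W$ is a $G$-invariant holomorphic function on $\bV$ of $R$-charge two; after the twist it contributes the local functional $\int_{\C^2}\Omega\wedge W(\gamma)$, and one checks this is $\dbar$-closed and solves the classical master equation together with the rest of the action, so that it is the promised holomorphic superpotential; cf.\ \cite{ESW} for the twisting homomorphism used here.

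The main obstacle is the bookkeeping that any explicit twist demands: tracking cohomological degree against ghost number and the new twist grading, following the antifields and auxiliary fields through the retraction, and confirming that the induced $(-1)$-shifted symplectic pairing on the minimal model is the one expected for holomorphic $BF$ theory coupled to a $\beta\gamma$ system. A related subtlety is that the homotopy transfer should be carried out before discarding the now-contractible auxiliary and antighost directions, since $Q^{hol}$ is only square-zero off shell; working on shell one must instead resolve the gaugino equation of motion by hand, which reproduces the same minimal model but through more delicate homotopies.
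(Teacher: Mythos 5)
The paper does not actually prove this theorem --- it is imported wholesale from \cite{CostelloYangian,ESW,SWchar} --- but your outline (off-shell BV presentation with auxiliary fields and ghosts, deformation of the BV--BRST differential by the essentially unique holomorphic supercharge, homotopy transfer onto the minimal model identifying the vector multiplet with the $(A,B)$ fields of holomorphic $BF$ theory and the chiral multiplets with the $\beta\gamma$ system, then reinstating the interactions and the term $\int_{\C^2}\Omega\wedge W(\gamma)$) is essentially the argument carried out in those references, and its free-field content is exactly Proposition \ref{prop:twist} of this paper. One caveat: your assertion that $W$ necessarily has $R$-charge two is an extra hypothesis not required by the statement --- the paper explicitly notes that a generic superpotential \emph{breaks} the $R$-symmetry, which is precisely why the twisted superpotential term depends on a choice of holomorphic volume form and spoils the cohomological $\Z$-grading --- and you should also be explicit that even with $D$ and $F$ retained the supersymmetry action on the vector multiplet closes only up to gauge transformations, so the square-zero property of $Q^{hol}$ on the BV complex holds only after incorporating the ghost directions into the homotopy action.
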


Following this theorem, we arrive at the holomorphic version of supersymmetric QCD obtained from the previous theorem as the twist of supersymmetric QCD with $N_c$ colors and $N_f$ flavors.
Explicitly, this the holomorphic theory whose fields are:
\begin{itemize}
\item the gauge fields are those of holomorphic $BF$ theory valued in $\lie{sl}(N_c)$, whose fields are
\[
A \in \Omega^{0,\bu}(\C^2 , \lie{sl}(N_c))[1] \quad\text{and}\quad B \in \Omega^{2,\bu}(\C^2, \lie{sl}(N_c)^*);
\]
\item the ``quarks'' are fields of a holomorphic $\beta\gamma$ system valued in 
\[
V = (\C^{N_c})^{N_f},
\] 
where $\C^{N_c}$ is the fundamental $SU(N_c)$ representation whose fields are
\[
\gamma \in \Omega^{0,\bu}(\C^2, V) \quad\text{and}\quad \beta^\dag \in \Omega^{2,\bu}(\C^2 , V^*)[1];
\]
\item the ``antiquarks'' are fields of a holomorphic $\beta\gamma$ system valued in the dual representation 
\[
V^* \cong (\br \C^{N_c})^{N_f},
\]
where $\br \C^{N_c}$ is the anti-fundamental $SU(N_c)$ representation whose fields are
\[
\gamma^\dag \in \Omega^{0,\bu}(\C^2 , V^*) \quad\text{and}\quad \beta \in \Omega^{2,\bu}(\C^2 , V) [1].\footnote{The superscript $\dag$ does not refer to adjoint here.}
\]
\end{itemize}

The action functional is
\[
S_{QCD} = \int B F_A + \int \beta^\dag \dbar_A \gamma + \int \beta \dbar_A \gamma^\dag,
\]
where $\dbar_A = \dbar + A$ is the covariant $\dbar$-operator.

If, additionally, the supersymmetric theory is equipped with a superpotential $W$, then there is a term in the holomorphic action
\beqn
\int \d^2 z \, W(\gamma) = \int \d^2 z \, \left(\frac12 W''(\gamma^0) \gamma^{0,1} \gamma^{0,1} + W'(\gamma^0) \gamma^{0,2}\right) .
\eeqn
Notice that this depends on the choice of a holomorphic volume form and does not define a theory with a cohomological $\Z$-grading.
Both of these drawbacks stem from the fact that the superpotential breaks $R$-symmetry in the original supersymmetric theory.
%We will denote this theory by $\cT(N,F)$.

In the above description we have described the fields on flat space.
Even on flat space, a $\cT$-background will depend on which bundle the quark fields transform under.
For example, for a single superconformal chiral multiplet, the field $\gamma$ is twisted by the line bundle $K^{1/3}_{\C^2}$.

For holomorphic QCD we will do something more general.
We suppose that $\gamma$ (and $\gamma^\dag$) transform as sections of $K^{\otimes \lambda}_{\C^2} \otimes V$ (and $K^{\otimes \lambda}_{\C^2} \otimes V^*$) where $\lambda$ is some rational number.
From section \ref{s:rsymmetry}, recall that this corresponds to the twist of a chiral multiplet with $r = 2 \lambda - 1$.

The $\cT$-background is 
\beqn\label{eqn:Tr}
I^{\cT,r} = \int B L_\mu A + \int \beta^\dagger L_\mu \gamma + \int \beta L_\mu \gamma^\dagger .
\eeqn
With respect to the standard coordinates $\mu = \mu_i (z) \del_{z_i}$ and this coupling reads
\beqn
I^{\cT,r} = \int B \mu_i \del_{z_i} A + (1-r) \int \beta^\dagger \mu_i \del_{z_i} \gamma + (1-r) \int \beta \mu_i \del_{z_i} \gamma^\dag + r \int \del_{z_i} \mu_i (\beta^\dagger \gamma + \beta \gamma^\dagger) .
\eeqn

%\begin{lem}
%Let $\til \gamma = \gamma (\d^2 z)^{\otimes r}$ and $\til \beta = \frac{\del}{\del (\d^2 z)} (\beta) (\d^2 z)^{1-r}$ and similarly for $\til \gamma^\dagger,\til \beta^\dagger$.
%Then
%\beqn
%I^{\cT,r}(A,B,\beta, \gamma, \beta^\dagger,\gamma^\dagger) = I^{\cT} (A,B,\til \beta, \til\gamma, \til\beta^\dagger,\til\gamma^\dagger) .
%\eeqn
%\end{lem}
%\begin{proof}
%This follows from the formula for the Lie derivative
%\beqn
%L_\mu (\gamma (\d^2 z)^r) = \left((1-r) \mu_i \del_{z_i} \gamma + r \del_{z_i} \mu_i \gamma\right) (\d^2 z)^r .
%\eeqn
%\end{proof}
%
%This lemma implies that the generalized coupling $I^{\cT,r}$ is the standard coupling $I^{\cT}$ for the theory whose fields are `twisted' by certain powers of the canonical bundle; namely $\til \gamma \in \Omega^{0,\bu} (\C^2, K^{\otimes r})$, etc..

\subsection{There are no gauge anomalies}

Consider holomorphic QCD with gauge group $SU(N_c)$ and~$N_f$ flavors valued in the fundamental representation.
First we consider the internal anomalies.
For pure $\lie{sl}(N_c)$ holomorphic gauge theory is no pure gauge anomaly as $\op{Tr}_{adj}(X^3) = 0$ for any $X \in \lie{sl}(N_c)$, see proposition \ref{prop:puregauge}.
The mixed gauge-matter anomalies vanish since the matter comprises an equal number of fundamental and anti-fundamental representations.

\subsection{Anomaly to a quantum $\cT$-background}
\label{sec:TbackgroundQCD}

Next we consider the anomaly to having background of holomorphic vector fields $\cT$ using the generalized coupling $I^{\cT,r}$ from equation \eqref{eqn:Tr}.
We find an exact agreement with the anomaly cancelation for the $R$-symmetry current in supersymmetric QCD.

\begin{prop}
Holomorphic QCD admits a quantum $\cT$-background with classical coupling~$I^{\cT,r}$ if and only if
\beqn
r = - \frac{N_c}{N_f} ,
\eeqn
equivalently, $\lambda = \frac12 (1-\frac{N_c}{N_f})$.
\end{prop}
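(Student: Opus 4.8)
The plan is to read the obstruction off directly from Proposition \ref{prop:coefficients}. In the language of that proposition, the matter $\beta\gamma$ system of holomorphic QCD is valued in the $G$-equivariant natural bundle $K^{\lambda}_{\C^2}\otimes(V\oplus V^{*})$ with $G=SU(N_c)$ and $V=(\C^{N_c})^{\oplus N_f}$ ($N_f$ copies of the fundamental), while the gauge sector is holomorphic $BF$ theory for $\lie{g}=\lie{sl}(N_c)$. First I would verify the hypothesis of Proposition \ref{prop:coefficients}: the pure gauge invariant $\ch^{G}_{3}\big(K^{\lambda}_{\C^2}\otimes(V\oplus V^{*})\big)-\ch^{G}_{3}(\lie{g}^{ad})\in H^{6}(BSU(N_c))$ must vanish. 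This is exactly the content of the subsection ``There are no gauge anomalies'': $\ch_{3}$ is proportional to the cubic index $\op{Tr}_{R}(X^{3})$, which vanishes for the self-conjugate representations $\lie{sl}(N_c)^{ad}$ and $V\oplus V^{*}$, and the $K^{\lambda}$-twist does not change the $c_{1}(K)$-independent part.

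Granting this, Proposition \ref{prop:coefficients} identifies the obstruction to lifting $I^{\cT,r}$ to a quantum $\cT$-background with the class
\beqn
\left.\op{Td}\cdot\Big(\ch^{GL(2)\times G}\big(K^{\lambda}_{\C^2}\otimes(V\oplus V^{*})\big)-\ch^{G}(\lie{g}^{ad})\Big)\right|_{6}^{\star} \in H^{6}\big(BGL(2)\times BSU(N_c)\big),
\eeqn
where $\star$ drops the $H^{6}(BGL(2))$ summand (the eventual $\cT$-central charge). Because $H^{2}(BSU(N_c))=0$ and the $H^{6}(BSU(N_c))$ summand already vanishes by the previous step, this class lies in the one-dimensional space $\C\,c_{1}(\T_{\C^2})\otimes\C\,\kappa$ with $\kappa=\ch_{2}^{G}(\C^{N_c})$, and the remaining step is to compute its coefficient. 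Expanding $\ch(K^{\lambda}_{\C^2})=e^{-\lambda c_{1}(\T_{\C^2})}$ and $\op{Td}=1+\tfrac12 c_{1}(\T_{\C^2})+\cdots$, and using $\ch_{1}^{G}(\C^{N_c})=0$ together with the Dynkin normalizations $\ch_{2}^{G}(V\oplus V^{*})=2N_f\,\kappa$ and $\ch_{2}^{G}(\lie{sl}(N_c)^{ad})=2N_c\,\kappa$ (i.e.\ $T(\mathrm{adj})=2N_c\,T(\mathrm{fund})$ for $\lie{sl}(N_c)$), only the matter terms $-\lambda c_{1}(\T_{\C^2})\cdot\ch_{2}^{G}(V\oplus V^{*})$ and $\tfrac12 c_{1}(\T_{\C^2})\cdot\ch_{2}^{G}(V\oplus V^{*})$, together with the (untwisted) gauge term $-\tfrac12 c_{1}(\T_{\C^2})\cdot\ch_{2}^{G}(\lie{sl}(N_c)^{ad})$ — with sign fixed as in Proposition \ref{prop:puregauge} — contribute to the $c_{1}(\T_{\C^2})\,\kappa$ component. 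Collecting these gives a mixed anomaly proportional to $\big((1-2\lambda)N_f-N_c\big)\,c_{1}(\T_{\C^2})\,\kappa$, which vanishes precisely when $(1-2\lambda)N_f=N_c$. Since $r=2\lambda-1$ in the conventions of Section \ref{s:rsymmetry}, this reads $r=-N_c/N_f$, equivalently $\lambda=\tfrac12(1-N_c/N_f)$.

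Apart from invoking Proposition \ref{prop:coefficients}, the computation is mechanical; the only point requiring care — and the origin of the clean ratio $N_c/N_f$ — is the asymmetry between the matter and gauge sectors under the $K^{\lambda}$-twist together with the group-theoretic identity $\ch_{2}^{G}(\lie{sl}(N_c)^{ad})=2N_c\,\ch_{2}^{G}(\C^{N_c})$. As a consistency check, under the twisting dictionary $\til c_{1}=-\tfrac12\ch_{1}(\T_{\C^2})$ of Section \ref{s:rsymmetry} the condition $(1-2\lambda)N_f=N_c$ is exactly the vanishing of the mixed gauge${}^{2}$--$R$ anomaly $N_c+rN_f$ of supersymmetric QCD, which recovers the standard superconformal $R$-charge of the quarks.
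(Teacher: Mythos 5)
Your proposal is correct and follows essentially the same route as the paper's proof: both reduce to Proposition \ref{prop:coefficients}, isolate the $H^2(BGL(2))\otimes H^4(BSU(N_c))$ component of $\op{Td}\cdot\ch$, and use $\ch_2^G(\mathrm{adj})=2N_c\,\ch_2^G(\mathrm{fun})$ to arrive at $(1-2\lambda)N_f=N_c$. The only additions are your explicit verification of the vanishing pure-gauge hypothesis (handled in the paper in the preceding subsection) and the physics cross-check against the gauge${}^2$--$R$ anomaly, neither of which changes the argument.
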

\begin{proof}
This is a consequence of proposition \ref{prop:coefficients}.
The graded holomorphic vector bundle underlying holomorphic QCD is 
\beqn
V_{QCD} = \lie{g}[1] \oplus K^{\otimes \lambda} \otimes V \oplus K^{\otimes \lambda} \otimes V^* .
\eeqn
In the expansion of \eqref{eqn:Tgauge} we see that since the gauge group is simple, the only the terms in anomaly live in
\beqn
H^2(BGL(2)) \otimes H^4(B G) ,
\eeqn
where here $G = SU(N_c)$.
Explicitly, this is the term
\beqn\label{eqn:td0td1}
\op{Td}_0 \ch_3^{GL(2) \times G}(V_{QCD}) + \op{Td}_1 \ch_2^{G}(V_{QCD}) .
\eeqn

Using the form of $V_{QCD}$ above, we see that the first term in equation \eqref{eqn:td0td1} is
\beqn
2 F \ch_2(K^{\otimes \lambda}) = - 2\lambda F \ch_1 \op{ch}^G_2(fun) .
\eeqn
We have used the relation $\op{ch}_1(K^{\otimes \lambda}) = - \lambda \ch_1$ where, as usual, $\op{ch}_1$ is the standard generator of $H^2(BGL(2))$. 
The factor of two comes from the fact that we have quarks and anti-quarks.

Likewise, the second term in \eqref{eqn:td0td1} is
\beqn
\frac{1}{2} \ch_1 \left(- \ch_1 \op{ch}^G_2(adj) + \cdot 2 F \cdot \ch_2^G(fun)\right) .
\eeqn

In total we see that the anomaly to a quantum $\cT$-background is
\beqn
\ch_1 \left(-\frac12 \ch_2^{G}(adj) + 2 \left(-\lambda + \frac12\right) F \ch_2^{G}(fun) \right).
\eeqn
For any $X \in \lie{sl}(N_c)$ we have the relation $\op{Tr}_{adj}(X^2) = 2 N_c \op{Tr}_{fun}(X^2)$.
In other words, for $G = SU(N_c)$, we have $\op{ch}_2^G(adj) = 2 N_c \op{ch}_2^G(fun)$. 
Using this relation, the result follows.
%%that the wheels with two legs labeled by the gauge field $A$ and one leg labeled by a vector field $\mu$ vanish.
%Indeed, the total contribution of the anomaly is
%\beqn
%-\int \op{tr}(J \mu) \op{Tr}(\del A \del A) + 2 N_f (1-r) \int \op{tr}(J \mu) \op{tr}(\del A \del A) = 0 .
%\eeqn
%The first term comes from the contribution of the gauge propagators; using the Riemann-Roch theorem it corresponds to the universal class $- c_1 \cdot \ch_2(\T_{B G})$, where $G  = SU(N_c)$ here.
%The second term comes from the contribution of the matter propagators and corresponds to the class $c_1 \cdot \ch_2(V_Q + V_Q^*)$.
%Anomaly cancellation then requires 
%\beqn
%- N_c (N_c^2 - 1) + 2 N_f \frac{N_c^2 - 1}{2 N_c} N_c (1-r) = 0 ,
%\eeqn
%which completes the proof.
\end{proof}

\subsection{$a,c$ for QCD}

We are now in a position to compute the values of $a^{hol},c^{hol}$ for holomorphic QCD.
As always, this only depends on the underlying holomorphic vector bundle of holomorphic QCD
\beqn
V_{QCD} = \ul{\lie{g}}[1] \oplus K^{\otimes \lambda} \otimes V \oplus K^{\otimes \lambda} \otimes V^* ,
\eeqn
where, as we found in the last section, one has $\lambda = \frac12 (r+1) = \frac12 \left(1 - \frac{N}{F}\right)$ in order to guarantee anomaly cancellation for the $\cT$-background.

According to this decomposition, the $a^{hol}$ coefficient decomposes as $a^{hol}_{QCD} = a^{hol}_{gauge} + a^{hol}_{quark} + a^{hol}_{anti-quark}$, and similarly for $c^{hol}$.
Notice that $a^{hol}_{quark} = a^{hol}_{anti-quark}$ and $c^{hol}_{quark} = c^{hol}_{anti-quark}$ as they transform identically under the action of holomorphic vector fields.
We read these coefficients directly.

First
$a^{hol}_{gauge} = \frac{1}{24} \dim \lie{g} = \frac{1}{24} \left(N_c^2 -1\right)$,
as read off from table \ref{tab:basic}.
From lemma \ref{lem:grrfree} we see
\beqn
a^{hol}_{quark} = \frac{1}{24} \left(- \frac{N_c}{N_f} \right) \dim V = - \frac{1}{24} N_c^2 .
\eeqn
Thus in total we see that
\beqn\label{eqn:aqcdhol}
a^{hol}_{QCD} = - \frac{1}{24} \left(N_c^2 + 1\right)
\eeqn
Interestingly, this is independent of the number of flavors $N_f$.

Similarly, from table \ref{tab:basic}, $c^{hol}_{gauge} = -\frac{1}{48} \dim \lie{g} = - \frac{1}{48} (N^2 - 1)$,
and from \ref{lem:grrfree} we see
\beqn
c^{hol}_{quark} = - \frac{1}{48} \left(- \frac{N_c}{N_f}\right)^3 \dim V = \frac{1}{48} \frac{N_c^4}{N_f^2} .
\eeqn
Thus
\beqn\label{eqn:cqcdhol}
c^{hol}_{QCD} = \frac{1}{48N_f^2}\left(2N_c^4 - N_c^2 N_f^2+N_f^2\right).
\eeqn

Plugging into equation \eqref{eqn:fromhol} we recover the expected physical values for these coefficients
\begin{align*}
a_{QCD} & = -\frac{3}{16 N_f^2} \left(3N_c^4-2N_c^2 N_f^2+N_f^2\right) \\
c_{QCD} & = - \frac{1}{16 N_f^2} \left(9 N_c^4 - 7 N_c^2 N_f^2 + 2 N_f^2 \right) .
\end{align*}

\subsection{Magnetic anomalies}

In \cite{Seiberg}, Seiberg discovered a duality in four-dimensional supersymmetric gauge theory that posits an equivalence between $\cN=1$ supersymmetric QCD for gauge group $SU(N_c)$, called the `electric' theory, and another $\cN=1$ supersymmetric QCD, called the `magnetic' theory.

The magnetic theory has gauge group $SU(N_f - N_c)$ and the matter consists of $N_f$ fundamental chiral multiplets together with $N_f^2$ `mesons' which transform trivially under the gauge group.
There is also a superpotential which we will not need so do not comment any further on.
Therefore, at the level of the holomorphic twist, the bundle underlying the magnetic dual to holomorphic QCD is thus of the form
\beqn
\til V_{QCD} = \ul{\til{\lie{g}}}[1] \oplus K^{\otimes \til \lambda} \otimes \til V \oplus K^{\otimes \til \lambda} \otimes \til V^* \oplus K^{\otimes \lambda_M} \otimes \op{End}(\C^F) ,
\eeqn
where $\til{\lie{g}} = \lie{sl}(N_f-N_c)$ and the matter is labeled by the representation $\til V = (\C^{N_f - N_c})^{N_f}$.

Since the holomorphic mesons transform trivially under $\til{\lie{g}}$ they do not contribute to any possible gauge anomalies.
In particular, we can compute the anomaly to a quantum $\cT$-background just as we did in section \ref{sec:TbackgroundQCD}.
We find that a quantum $\cT$-background exists if and only if
\beqn
\til \lambda = \frac{N_c}{2 N_f},
\eeqn
equivalently $\til r = - \frac{N_f - N_c}{N_f}$.

From this, we can proceed as above to calculate the $a,c$ anomalies for the magnetic version of QCD.
As a function of $r_M = 2\lambda_M-1$, we find
\begin{align*}
\til a^{hol}_{QCD} & = \frac{1}{24} \left(-1 + 2 N_f N_c - N_c^2 + N_f^2(r_M - 1) \right) \\
\til c^{hol}_{QCD} & = \frac{1}{48 N_f^2}\left(N_f^2 + N_f^4 - 6 N_f^3 N_c + 11N_f^2 N_c^2-8 N_f N_c^3+2N_c^4-r_M^3 N_f^4\right) .
\end{align*}

\begin{prop}
The $a,c$ anomalies match for the holomorphic twists of Seiberg dual theories
\beqn
a_{QCD} = \til a_{QCD}, \quad c_{QCD} = \til c_{QCD}
\eeqn
if and only if $r_M = 1 - \frac{2N_c}{N_f}$. 
\end{prop}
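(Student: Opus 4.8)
The plan is to reduce the matching of $a$ and $c$ to the matching of the holomorphic coefficients $a^{hol}$ and $c^{hol}$, and then to carry out two short computations. The key observation is that the change of variables \eqref{eqn:fromhol} expressing $(a,c)$ in terms of $(a^{hol},c^{hol})$ is linear and invertible; hence $a_{QCD}=\til a_{QCD}$ and $c_{QCD}=\til c_{QCD}$ hold simultaneously if and only if $a^{hol}_{QCD}=\til a^{hol}_{QCD}$ and $c^{hol}_{QCD}=\til c^{hol}_{QCD}$. So it suffices to work entirely on the holomorphic side, where closed formulas are already available: the electric coefficients are \eqref{eqn:aqcdhol} and \eqref{eqn:cqcdhol}, while the magnetic coefficients $\til a^{hol}_{QCD}$ and $\til c^{hol}_{QCD}$ were recorded just above the proposition as functions of the meson parameter $r_M$.

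For the ``only if'' direction I would first impose $a^{hol}_{QCD}=\til a^{hol}_{QCD}$. Since $\til a^{hol}_{QCD}$ is affine in $r_M$ with leading coefficient $\frac{1}{24}N_f^2\neq 0$, this is a single linear equation in $r_M$; once the quadratic-in-$N_c$ and constant terms cancel on the two sides it solves uniquely to $r_M=1-\frac{2N_c}{N_f}$. Thus the matching of $a$ alone already forces the claimed value of $r_M$, and a fortiori so does the matching of both $a$ and $c$.

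For the ``if'' direction I would substitute $r_M=1-\frac{2N_c}{N_f}$ into the two magnetic formulas and verify that they reduce to \eqref{eqn:aqcdhol} and \eqref{eqn:cqcdhol}. The identity for $a^{hol}$ is exactly what was checked in the previous step. The identity for $c^{hol}$ is where essentially all the arithmetic lies: expanding $r_M^3 N_f^4 = N_f^4 - 6N_c N_f^3 + 12 N_c^2 N_f^2 - 8 N_c^3 N_f$ and subtracting it off, the $N_f^4$, the $N_c N_f^3$ and the $N_c^3 N_f$ terms all cancel, the coefficient of $N_c^2 N_f^2$ drops from $11$ to $-1$, and one is left with $\frac{1}{48 N_f^2}\left(N_f^2 - N_c^2 N_f^2 + 2N_c^4\right)$, which is precisely \eqref{eqn:cqcdhol}. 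Applying \eqref{eqn:fromhol} then returns $a_{QCD}=\til a_{QCD}$ and $c_{QCD}=\til c_{QCD}$, completing the equivalence.

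The main point requiring care is not conceptual -- given the Riemann--Roch computations of the preceding sections the statement is a brief exercise in polynomial algebra -- but bookkeeping. One must remember that, because the holomorphic mesons are gauge singlets, they contribute only to the pure $H^{2n+2}(BGL(2))$ part of the anomaly and therefore leave $r_M$ (equivalently $\lambda_M$) completely undetermined by the $\cT$-background anomaly cancellation of section \ref{sec:TbackgroundQCD}; it is precisely this residual freedom that is pinned down by demanding equality of the holomorphic central charges across the duality. It is also worth noting, as a sanity check, that $a^{hol}_{QCD}$ in \eqref{eqn:aqcdhol} is independent of $N_f$ whereas $\til a^{hol}_{QCD}$ manifestly depends on $N_f$, so the matching of $a$ is a genuine constraint rather than a formal identity -- which is exactly why a single value of $r_M$ is singled out.
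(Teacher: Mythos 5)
Your proposal is correct and follows the same route as the paper's proof: reduce to the holomorphic coefficients via the invertible linear relation \eqref{eqn:fromhol}, solve the affine-in-$r_M$ equation $a^{hol}_{QCD}=\til a^{hol}_{QCD}$ to get $r_M = 1 - \frac{2N_c}{N_f}$, and then check that this value also makes $\til c^{hol}_{QCD}$ agree with \eqref{eqn:cqcdhol}. You merely make explicit the arithmetic that the paper leaves as an ``immediate computation,'' and your expansion of $r_M^3 N_f^4$ and the resulting cancellations are correct.
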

\begin{proof}
It suffices to prove the statement for the holomorphic coefficients $a^{hol},c^{hol}$. 
Immediate computation shows that $a_{QCD}^{hol}$ from equation \eqref{eqn:aqcdhol} matches with $\til a^{hol}_{QCD}$ if and only if $r_M = 1 - \frac{2N_c}{N_f}$.
Plugging this value of $r_M$ into $\til c_{QCD}^{hol}$ we find a match with equation \eqref{eqn:cqcdhol}.
\end{proof}

%\subsection{Holomorphic flavor symmetry}

%\newcommand{\sfa}{\mathsf{a}} 
%
%In addition to a classical $\cT$-symmetry, holomorphic QCD enjoys a holomorphic `flavor' symmetry \cite{GWkm}.
%In other words, exhibits a natural $\lie{sl}(N_f) \otimes \Omega^{0,\bu}(\C^2)$-background defined by the coupling
%\beqn
%I^{flav} = \int \beta^\dag(\sfa \cdot \gamma) + \int \beta (\sfa \cdot \gamma^\dag) ,
%\eeqn 
%where $\sfa \in \lie{sl}(N_f) \otimes \Omega^{0,\bu}(\C^2)$ denotes a background holomorphic gauge field.

\printbibliography

\end{document}